\documentclass[a4paper, 11pt]{article}

\usepackage{amsmath}
\usepackage{amssymb}
\usepackage{amsfonts}
\usepackage{amsthm}
\usepackage{bm}
\usepackage{caption}
\usepackage{subcaption}
\usepackage{dsfont}
\usepackage{setspace}
\usepackage{geometry}
\usepackage[hidelinks]{hyperref}
\usepackage[inline]{enumitem}
\usepackage{rotating}

\usepackage{accents}

\usepackage{algorithm}
\usepackage{algpseudocode}
\makeatletter
\newlength{\trianglerightwidth}
\settowidth{\trianglerightwidth}{$\triangleright$~}
\algnewcommand{\LineComment}[1]{\Statex \hskip\ALG@thistlm%
	\parbox[t]{\dimexpr\linewidth-\ALG@thistlm}{\hangindent=\trianglerightwidth \hangafter=1 \strut$\triangleright$ #1\strut}}
\makeatother

\usepackage{multirow}

\usepackage{tikz}
\usepackage{pgfplots}

\usepackage{authblk}

\newcommand{\rbracket}[1]{\ensuremath{\left( #1\right)}}
\newcommand{\sbracket}[1]{\ensuremath{\left\lbrack #1\right\rbrack}}
\newcommand{\cbracket}[1]{\ensuremath{\left\lbrace #1\right\rbrace}}
\newcommand{\set}[2]{\ensuremath{\cbracket{#1:#2}}}

\newcommand{\indicator}[1]{\ensuremath{\mathds{1}_{\lbrace{#1}\rbrace}}}

\newcommand{\converge}[1][]{\ensuremath{\xrightarrow{#1}}}

\newcommand{\prob}[2][\rbracket]{\ensuremath{\mathbb{P}#1{#2}}}

\newcommand{\dd}[1]{\ensuremath{\mathrm{d}{#1}}}

\newtheorem{prop}{Proposition}
\newtheorem{lemma}{Lemma}
\newtheorem{theorem}{Theorem}

\usepackage{makecell}
\newcommand{\e}{\textcolor{black}}

\usepackage[american]{babel}
\usepackage{csquotes}
\usepackage[backend=biber,natbib=true,style=apa]{biblatex}
\addbibresource{references.bib}

\title{Testing for sufficient follow-up in survival data \\with {a cure fraction}}
\author[a]{Tsz Pang Yuen\thanks{t.p.yuen@uva.nl}}
\author[a]{Eni Musta\thanks{e.musta@uva.nl}}
\affil[a]{Korteweg-de Vries Institute for Mathematics, University of Amsterdam, Netherlands}
\date{}

\begin{document}
	\maketitle
	\begin{abstract}
		In order to estimate the proportion of `immune' or `cured' subjects who will never experience failure, a sufficiently long follow-up period is required. Several statistical tests have been proposed in the literature for assessing the assumption of sufficient follow-up, meaning that the study duration is longer than the support of the survival times for the uncured subjects. These tests do not perform satisfactorily, especially in terms of Type I error. In addition, they are constructed based on the assumption that the survival time for the uncured subjects has a compact support, i.e. the existence of a `cure time'.
		However, for practical purposes, the assumption of `cure time' is not realistic and the follow-up would be considered sufficiently long if the probability for the event to happen after the end of the study is very small. Based on this observation, we formulate a more relaxed notion of `practically' sufficient follow-up characterized by the quantiles of the distribution and develop  a novel nonparametric  statistical test. The proposed method  relies mainly on the assumption of a  non-increasing density function in the tail of the distribution.
		The test is then based on a shape constrained density estimator such as the Grenander or the kernel smoothed Grenander estimator and a bootstrap procedure is used for computation of the critical values. The performance of the test is investigated through an extensive simulation study, and the method is illustrated on breast cancer data.
	\end{abstract}
	
	\begin{refsection}
\section{Introduction}\label{sec:intro}

Cure models for analysis of time-to-event data in the presence of subjects who will never experience the event of interest has recently attracted increasing attention from both methodological and application perspectives. Survival data with a cure fraction are nowadays frequently encountered in oncology since advances in cancer treatments have resulted in a larger proportion of patients recovering from their illnesses and not experiencing relapse or cancer-related death \citep{LB2019}. Among other fields, cure models have also been applied to analyze time to pregnancy and %loan credit scoring  
the default time of a loan applicant \citep{VVBHZM2012,ZYS2019}. Independent of the application of interest,
we refer to the subjects
immune to the event of interest as `cured' and to the susceptible ones as `uncured'. For a comprehensive review on cure models, we refer the reader to \textcite{MZ1996}, \textcite{AK2018}, \textcite{PY2022} and \textcite{MRSZ2024}.

In the presence of censoring, it is not possible to distinguish the cured subjects from the censored uncured ones. The presence of a cure fraction is however indicated by a Kaplan--Meier  estimator (KME) of the survival function  \citep{KM1958} that reaches a  plateau at a level greater than zero.
The level of such plateau can be considered as an estimate of the cure fraction, provided that the follow-up is sufficiently long to assume that the survival function would remain constant even after the end of the study. In practice, it is however unclear what is the minimum required follow-up to accurately estimate the cure fraction and a KME with a prolonged plateau, 
containing
many censored observations, is
an indication of sufficient follow-up. Nevertheless, such visual inspection might
be ambiguous and inadequate for assessing sufficient follow-up, leading to an overestimation of the cure fraction. The crucial nature of this assumption underscores the necessity for a reliable statistical test.

The notion of `sufficient follow-up' was first characterized as the setting
where
the support of the event times for the uncured is included in the support of the censoring times in \textcite{MZ1992,MZ1994}. A procedure for testing the \textit{null hypothesis of insufficient follow-up}, based on the length between the maximum observed  time and the maximum uncensored event time, was introduced.
\textcite{MZ1996} proposed an alternative, which relies on simulation to approximate the distribution of the test statistics, by assuming the survival time of the susceptible and the censoring time have an exponential and a uniform distribution, respectively. To circumvent making parametric assumptions, while still ensuring control over the level, \textcite{S2000} introduced an alternative method based on the ratio of the two maximal event times, but the practical behavior of the test is still unsatisfactory. Recently, \textcite{MRS2022} studied the finite sample and the asymptotic distributions of the two maximal censored and uncensored times. Such results were then used in \textcite{MRS2023} to establish the asymptotic distribution of the test statistic in \textcite{MZ1996}. On the other hand, \textcite{XEK2023} considered testing the \textit{null hypothesis of sufficient follow-up} and proposed a new test statistics based on extreme value theory. 

The existing characterization of sufficient follow-up requires that there exists a cutoff or `cure time', such that surviving beyond such time is equivalent of being cured, and it is not possible for an event to happen after the end of the study. This is however not realistic and in practice follow-up would be considered sufficient if the chance of the event happening after the end of the study is negligible. For example, depending on the cancer type, relapse after 10 or 20 years is very rare but not impossible. In such cases, considering patients that survive 10 or 20 years relapse-free  as cured and using the height of the plateau of the  KME  as estimator of the cure proportion would be adequate.
Therefore we propose a more relaxed notion of sufficient follow-up which means that the probability for the event to happen after the end of the study is smaller than a prespecified threshold, e.g. 1\%, chosen by the user. This would be more realistic in practice and still guarantee good identification of the cure proportion.
\textcite{SO2023} also point out that, when the proportion of uncured subjects remaining at the end of the study is very small, cure models are still appropriate. Motivated by this, they propose a parametric procedure to evaluate a similar relaxed notion of sufficient follow-up. In this paper we introduce a nonparametric statistical test based on the following idea.
Under the reasonable assumption that the survival time of the susceptible subjects has a non-increasing density in the tail region, one intuitively expects a small value of the density function at the end point of the support of the observed survival times, given the follow-up is sufficient. Therefore we utilize the Grenander or the kernel smoothed Grenander estimator of the density as our test statistic. The asymptotic properties of the test are studied, and its finite sample performance is investigated through an extensive simulation study. In terms of the level control and test power, the simulation study shows that the proposed method performs better compared to the existing tests for the null hypothesis of insufficient follow-up.

The article is organized as follows. Section~\ref{sec:problem} contains a description of existing formulations of testing sufficient follow-up and a discussion on their differences. In Section~\ref{sec:proposed_method}, we propose a novel notion of `practically' sufficient follow-up and introduce a procedure to test the null hypothesis of insufficient follow-up under
the new notion.
The finite sample performance of the proposed method is investigated through a comprehensive simulation study in Section~\ref{sec:sim}. To illustrate the use of the method, we analyze two breast cancer datasets in Section~\ref{sec:app}.

\section{Problem formulation and discussion}\label{sec:problem}

Suppose $T$ is a nonnegative random variable denoting the event time of a subject with distribution function $F$ and density $f$. Let $C$ be the random right censoring time with distribution function $G$ and assume that $C$ and $T$ are independent. Under the random right censoring assumption, we observe the pair $\rbracket{Y, \Delta}$ where $Y = \min{\rbracket{T, C}}$ is the observed survival time and $\Delta = \indicator{T \leq C}$ is the censoring indicator. Let $H(t)=\prob[\sbracket]{Y\leq t}$ be the distribution of the observed survival time. In the presence of cured subjects in the population, the mixture cure model (MCM) assumes that a subject  is susceptible with probability $p$ and is cured with probability $1-p$. With the convention that $\infty\cdot 0 = 0$, the survival time $T$ can be decomposed as $T=(1-B)\cdot\infty + B\cdot T_{u}$, where $B$ is a latent binary variable  indicating the uncure status of a subject ($B=1$ if the subject is uncured) and $T_{u}$ is the survival time for the uncured subject with a proper distribution function $F_{u}(t)=\prob[\sbracket]{T\leq t\ |\ B=1}$. The subscript `u' will be used to denote quantities that correspond to the uncured subpopulation. Then the distribution function of $T$ under the MCM is given by
$
F(t)=
pF_{u}(t),
$
which is an improper distribution when $p<1$.
Throughout this paper, we assume that $0 < p < 1$, indicating the presence of cured subjects in the population. The testing for $p=1$ against $p<1$ has been explored in the literature \citep{MZ1996,PY2022}.

Let  $\tau_{G}=\sup\set{t\geq0}{G(t)<1}$ and $\tau_{F_{u}}=\sup\set{t\geq0}{F_{u}(t)<1}$ be the right extremes of $G$ and $F_u$, respectively.  Suppose that $n$ i.i.d. realizations $(Y_{i}, \Delta_{i}), i=1,\dots,n$ of $(Y, \Delta)$ are observed.
We can estimate nonparametrically the distribution $F$ by the KME
\citep{KM1958} 
defined as
\[
\hat{F}_n(t)=1-\prod_{i: Y_{(i)}\leq t}\left(1-\frac{\#\{j: Y_j=Y_{(i)}, \Delta_j=1\} }{\#\{j: Y_j\geq Y_{(i)} \}}\right), 
\]
where $Y_{(i)}$ is the $i$-th order statistic.
\textcite{MZ1992} showed that, if $\tau_G\geq \tau_{F_u}$, the uncure fraction $p$ can be estimated consistently by $\hat{p}_n=\hat{F}_n(Y_{(n)})$, where $Y_{(n)}$ is the largest observed survival time. Otherwise, $\hat{p}_n$ would underestimate $p$. This gives rise to the crucial assumption of sufficient follow-up, i.e. $\tau_G\geq \tau_{F_u}$. In practice $\tau_{F_u}$ is not known and a visual inspection of the KME is usually used to assess sufficient follow-up.
Under such assumption one expects to see a Kaplan--Meier estimate with a long plateau, containing many censored observations.
This paper aims to introduce a new statistical test for the sufficient follow-up assumption. First we discuss extensively on the existing approaches mentioned previously.

\subsection{Testing insufficient versus sufficient follow-up}
\textcite{MZ1992,MZ1994} were the first to propose a nonparametric test for testing
\begin{equation}
	\label{eq:insuff_vs_suff}
	H_{0}: \tau_{G} \leq \tau_{F_{u}}
	\quad\text{versus}\quad
	H_{a}: \tau_{G} > \tau_{F_{u}},
\end{equation}
relying on the magnitude of $\tau_{G} - \tau_{F_{u}}$. Let $Y_{(n)}$ be the largest observed survival time and $\tilde{Y}_{(n)}$ be the largest event  time (uncensored survival time). They 
showed that $Y_{(n)} - \tilde{Y}_{(n)}$ converges almost surely to $\tau_{G} - \tau_{F_{u}}$ if $\tau_{F_{u}} \leq \tau_{G}$ and to $0$ if $\tau_{F_{u}} > \tau_{G}$, and suggested rejecting the null hypothesis $H_{0}$ if the $p$-value 
$\mathbb{P}[{Y_{(n)} - \tilde{Y}_{(n)} \geq y_{(n)} - \tilde{y}_{(n)}} ] < \alpha$, where $\alpha$ is a pre-specified significance level,  $y_{(n)}$ and $\tilde{y}_{(n)}$ are the observed values of $Y_{(n)}$ and $\tilde{Y}_{(n)}$, respectively. 
Under the null hypothesis $H_{0}$ and with a large sample size, the previous probability can be approximated reasonably by $(1-q_{n})^{n}$, where $q_{n} = \mathbb{P}[{C \geq {T >  2\tilde{y}_{(n)} - {y}_{(n)}} }]$. Without knowing the distribution functions $F$ and $G$, one can estimate $q_{n}$ using
$\hat{q}_{n} = {
	\sum_{i=1}^{n}\Delta_{i}\indicator{
		2\tilde{y}_{(n)}-y_{(n)} < Y_{i} \leq \tilde{y}_{(n)}
	}
}/{n}$.
Then $H_0$ is rejected if the estimated $p$-value $\alpha_{n}=\rbracket{1 - \hat{q}_{n}}^{n}$ is smaller than the nominal level.
We refer to this test procedure as the $\alpha_{n}$ test.
\textcite{S2000} introduced an alternative, called the $\tilde{\alpha}_{n}$ test, which uses a test statistic that based on the ratio $Y_{(n)} / \tilde{Y}_{(n)}$ instead of the difference $Y_{(n)}-\tilde{Y}_{(n)}$. The Type I error and power of the test were studied and its finite sample performance was investigated through simulations, with comparisons made to the $\alpha_{n}$ test.

As mentioned in \citet[Page 85]{MZ1996}, it is more appropriate to compare $\alpha_n$ with a nominated quantile of its finite sample or its asymptotic distribution instead of the nominal level.
\textcite{MRS2022} studied both finite sample and asymptotic distributions for the largest observed survival time $Y_{(n)}$ and the largest  event time $\tilde{Y}_{(n)}$, and emphasized {that such results can be used to  determine} the critical values of the previous tests. Based on the established results, \textcite{MRS2023} further derived the finite sample and asymptotic distributions of the test statistics $n\hat{q}_{n}$, where $\hat{q}_{n}$ is defined previously.
The asymptotic distribution of $n\hat{q}_n$ when the follow-up is sufficient/insufficient is obtained under some tail behavior assumptions on $F_u$ and $G$ given in the Supplementary Material. The asymptotic distribution can be used to determine the critical value for testing $H_{0}$.
We refer to such test procedure as the $Q_{n}$ test. However, simulation studies show that the empirical level of the $Q_{n}$ test is still larger than the nominal level and deviates more when the follow-up time increases \citep{XEK2023}. Nevertheless, the $Q_n$ test has the advantage of its simplicity and its intuitive idea.

\subsection{Testing sufficient versus insufficient follow-up}
\textcite{XEK2023} considered testing the null hypothesis of sufficient follow-up
\begin{equation}
	\label{eq:suff_vs_insuff}
	\check{H}_{0}: \tau_{F_{u}} \leq \tau_{G}
	\quad\text{versus}\quad
	\check{H}_{a}: \tau_{F_{u}} > \tau_{G}.
\end{equation}
The authors proposed a test statistic $T_{n}$ given by the difference between an estimator of $p$ computed as if the follow-up was sufficient $\hat{F}_{n}(y_{(n)})$ and an estimator computed via extrapolation of $\hat{F}_{n}$ beyond $y_{(n)}$ based on extreme value theory. The extrapolation corrects the underestimation of $\hat{F}_{n}(y_{(n)})$ for $p$ under the insufficient follow-up setting. The asymptotic normality of
$T_{n}$ was established and a bootstrap procedure was introduced,
resulting in better approximation to the critical values of the test compared to the asymptotic results. The required assumptions for this test are given in the Supplementary Material.

\subsection{Flipping the hypotheses or not?}
The difference between the two formulations of the hypotheses discussed previously is the type of error that one aims to control in practice. For the $\alpha_{n}$, $\tilde{\alpha}_{n}$ and $Q_{n}$ tests one aims to control the probability of concluding that the follow-up is sufficient when it is actually not, while $T_n$ controls the probability of deciding that the follow-up is insufficient when it is actually sufficient.
From the practical point of view, if based on the plateau of the KME and medical knowledge, one expects that the follow-up is sufficient, the $T_n$ test for the null hypothesis of sufficient follow-up can be used. In this case sufficient follow-up would be rejected if there is evidence against it and otherwise one can proceed with estimation assuming sufficient follow-up. On the other hand, if one is uncertain about the sufficient follow-up assumption based on KME and medical knowledge, it is safer to test the null hypothesis  of insufficient follow-up and conclude sufficient follow-up only if there is strong evidence in favor of it. This would be a safer choice since the consequences of assuming sufficient follow-up when it is actually not true are more serious than vice versa because it would lead to wrong conclusions regarding the cure fraction.

\section{Testing procedure}\label{sec:proposed_method}

In this paper we focus on testing the null hypothesis of insufficient follow-up. However our goal is to consider a more relaxed formulation of sufficient follow-up compared to the one in \eqref{eq:insuff_vs_suff}, which would be more realistic and satisfactory in practice.  Note that in practice one always has $\tau_G<\infty$ because of the finite length of studies and as a result the characterization in \eqref{eq:insuff_vs_suff} would mean that the follow-up is sufficient only if there exists a finite cure time $\tau_{F_u}$ such that it is impossible for an event to happen after $\tau_{F_u}$ and the study is longer than that. This is hardly ever the case in practical applications. For example, it is known that cancer relapse after 5, 10 or 20 years (depending on the cancer type) is very rare but yet not impossible and clinical trials are usually not very long to have the certainty that no events will happen after the end of the follow-up period. Hence, in practice it would be 
more meaningful to consider follow-up as sufficient when $F_{u}(\tau_{G}) > 1 - \epsilon$ for some small $\epsilon > 0$ {(e.g. $\epsilon=0.01$)}, or equivalently, when $\tau_{G} > q_{1 - \epsilon}$, with $q_{1 - \epsilon} = \inf\set{t \geq 0}{F_{u}(t) \geq 1 - \epsilon}$. Specifically, we introduce the following hypotheses:
\begin{equation}
	\label{eq:insuff_vs_suff_new}
	\tilde{H}_{0}: q_{1 - \epsilon} \geq \tau_{G}
	\quad\text{versus}\quad
	\tilde{H}_{a}: q_{1 - \epsilon} < \tau_{G}.
\end{equation}
Note that under $\tilde{H}_a$, we have $F(\tau_G)=pF_u(\tau_G)\in[p-\epsilon p,p]$. Hence, the amount of underestimation of $p$, resulting from using $\hat{F}_n(\tau_{G})$ as an estimator, is very small (less than $\epsilon$).
Such underestimation of $p$ is small relative to the estimation uncertainty for finite sample sizes and therefore negligible for practical purposes.
\e{The hypotheses considered here are referred to as practically sufficient or insufficient follow-up, depending on the choice of $\epsilon$. In practice, using $\epsilon=0.01$ is reasonable. However, we emphasize that this is not a universal choice and the practically sufficient follow-up does not necessarily imply the traditional sufficient follow-up defined in \eqref{eq:insuff_vs_suff}.}

\subsection{{Idea of the test}}
\label{seq:idea_test}
Our test relies on the assumption that the density $f_{u}$ of $F_{u}$ is non-increasing  in the tail region and at least continuous. Then the hypothesis in \eqref{eq:insuff_vs_suff_new} can equivalently be written as 
\begin{equation}
	\label{eq:insuff_vs_suff_new*}
	\tilde{H}_{0}:  f(\tau_{G})\geq f(q_{1 - \epsilon}) 
	\quad\text{versus}\quad
	\tilde{H}_{a}:  f(\tau_{G})<f(q_{1 - \epsilon}),
\end{equation}
since $f(t)=pf_u(t)$. Then the idea is to estimate $f(\tau_G)$ from the sample with standard methods and to reject $\tilde{H}_0$ if such estimator is smaller than a critical value. Note however that $f(q_{1 - \epsilon})$ is unknown and we instead try to find a lower bound for that, which would guarantee the level of the test in the worse case scenario.  The reasoning is as follows. Let $\eta>0$ be a very small number compared to $\epsilon$ and $\tau > q_{1 - \epsilon}$ be such that $\prob{T_{u}>\tau} < \eta$ or essentially $\prob{T_{u}>\tau}$ is negligible. By the monotonicity and the continuity of $f_{u}$, we have
\begin{equation*}
	\label{eq:eps_f_u}
	\epsilon = \int_{q_{1 - \epsilon}}^{\tau_{F_{u}}}f_{u}(t)\dd{t}
	\leq \int_{q_{1 - \epsilon}}^{\tau}f_{u}(t)\dd{t} + \eta
	\leq f_{u}(q_{1 - \epsilon})(\tau - q_{1 - \epsilon}) + \eta,
\end{equation*} 
meaning that under $\tilde{H}_0$
\begin{equation*}
	\label{eq:h0_f_u_impl}
	f(\tau_{G}) \geq f(q_{1 - \epsilon}) 
	\geq \frac{(\epsilon - \eta)p}{\tau - \tau_{G}}\approx \frac{\epsilon p}{\tau - \tau_{G}}.
\end{equation*} Since $p\geq F(\tau_G)$, the idea is to reject $\tilde{H}_0$ if an estimator $\hat{f}_n(\tau_G)$ is smaller than $\epsilon \hat{F}_n(\tau_{G})/(\tau-\tau_G)+\delta_n$, where $\delta_n$ is determined based on the desired level and the distribution of the estimator $\hat{f}_n(\tau_G)$. More details are given below for two specific density estimators.
The choice of $\tau$ can be based on some prior knowledge, for example, one can take $\tau>\tau_G$ such that it is almost impossible for the event to happen after $\tau$, i.e. if one had a follow-up of length $\tau$ one would consider it sufficient. As we illustrate in the simulation study and real data application below, one can also do a sensitivity analysis with respect to $\tau$ and, if in doubt, we suggest taking a larger $\tau$ for a more conservative test. We note that $\tau$ is not data dependent. Therefore, the asymptotic results are derived for the fixed value of $\tau$ later in this section. For simplicity, we assume that $\tau_G$ is known, for example the study duration in case of administrative censoring. However, all the results can be extended to the case where $\tau_G$ is unknown and
estimated by $Y_{(n) }= \max_i Y_i$. Below we describe two possible approaches of estimation of $f(\tau_G)$, leading to two different statistical tests. For simplicity we consider a non-increasing density on the whole support but the results hold more in general. In practice one would just need to restrict the shape constrained estimation on a subset of the support based on visual inspection of the KME  ($F$ is concave where $f$ is non-increasing) or a statistical test for monotonicity.

\subsection{Grenander estimator of $f$}
A natural nonparametric estimator of $f$ under monotonicity constraints  would be the Grenander estimator \citep{G1956}
denoted by $\hat{f}_{n}^{G}$, which is defined as the left derivative of the least concave majorant (LCM)  of the
KME.
We refer the reader to \textcite{GJ2014} for an overview of statistical inference techniques under shape-constraints. We are interested in estimating the density function at the boundary of the support of the observed data $\tau_G$ but it is well-known that $\hat{f}_{n}^{G}$ is not consistent at the boundaries \citep{WS1993}. For the standard setting without censoring, \textcite{KL2006} investigated the behavior of $\hat{f}_{n}^{G}$ near the boundary of the support of $f$ and proposed using $\hat{f}_{n}^{G}(\tau_{G} - cn^{-a})$ as a consistent estimator of $f(\tau_{G})$. Here $c > 0$ is a constant, and $0 <  a < 1$ controls the rate of convergence and the limiting distribution of the estimator. 

In the presence of random right censoring one can show as in \textcite{KL2006} that the following holds. Here $W(t)$ denotes a Brownian motion and $D_R[Z(t)](a)$ is the right derivative of the LCM on $[0,\infty)$ of the process $Z(t)$ at the point $t=a$. 
\begin{theorem}
	\label{theo:grenander}
	Assume that $\tau_G<\tau_{F_{u}}$, and in addition $f_u$ is nonincreasing and differentiable with bounded derivative on $[0,\tau_G]$, $\tau_G<\infty$. If $f(\tau_G)>0$, $|f^\prime(\tau_G)|>0$, $G(\tau_G-)<1$ and $c>0$ is a fixed constant, we have
	\begin{itemize}[leftmargin=*]
		\item[i)] for $a\in(1/3,1)$, the sequence
		$
		A_1n^{(1-a)/2}\{f(\tau_G)-\hat{f}_n^G(\tau_G-cn^{-a})\}
		$
		converges in distribution to $D_R[W(t)](1)$ as $n\to \infty$, where $A_1=\sqrt{c[1-G(\tau_G-)]/f(\tau_G)}$;
		\item[ii)] the sequence
		$
		A_2n^{1/3}\{f(\tau_G)-\hat{f}_n^G(\tau_G-cB_2n^{-1/3})\}
		$
		converges in distribution to $D_R[W(t)-t^2](c)$ as $n\to \infty$, where 
	\end{itemize}
	\[
	{B_2}=\left(\frac{2\sqrt{f(\tau_G)}}{\left|f'(\tau_G)\right|\sqrt{1-G(\tau_{G}-)}}\right)^{\frac{2}{3}}\text{,}~ {A_2}=\sqrt{\frac{B_2[1-G(\tau_G-)]}{f(\tau_G)}}.
	\] 
\end{theorem}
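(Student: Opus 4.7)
The argument extends \cite{KL2006} from the uncensored to the right-censored setting, with $\hat{F}_n$ playing the role of the empirical distribution function. By the switch relation for Grenander-type estimators,
\[
\hat{f}_n^G(x)\leq \lambda \iff x \geq \argmax_{t\in[0,\tau_G]}\bigl\{\hat{F}_n(t)-\lambda t\bigr\},
\]
valid for every $x\in(0,\tau_G)$ and $\lambda>0$. Taking $x=\tau_G-cn^{-a}$ and $\lambda = f(\tau_G)-y\,n^{-\gamma_j}/A_j$, with $\gamma_1=(1-a)/2$ in part (i) and $\gamma_2=1/3$ in part (ii), reduces both claims to weak convergence of an argmax of the centered process $\hat{F}_n-F$ plus a small linear perturbation, on a shrinking $O(n^{-a})$ window around $\tau_G-cn^{-a}$.

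\emph{Local process.} The natural rescaling is
\[
Z_n(s) = n^{(1+a)/2}\bigl\{\hat{F}_n(\tau_G-cn^{-a}+sn^{-a}) - \hat{F}_n(\tau_G-cn^{-a}) - f(\tau_G)\,sn^{-a}\bigr\},
\]
which I decompose as $Z_n = M_n + b_n$, with $b_n$ the deterministic Taylor remainder of $F$ and $M_n$ the centred fluctuation of $\hat{F}_n$. Using the standard i.i.d.\ representation of the Kaplan--Meier estimator, $\hat{F}_n - F = n^{-1}\sum_i \xi_i(\cdot) + R_n$ with $\sup_{t\leq \tau_G}|R_n(t)| = O_p(n^{-1}\log n)$, a CLT for independent increments yields $M_n\Rightarrow \sigma W$ on compacts, where $\sigma^2 = f(\tau_G)/[1-G(\tau_G-)]$ is the asymptotic variance per unit length of KME increments near $\tau_G$. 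A Taylor expansion of $F$ gives $b_n(s) = f'(\tau_G)[s^2/2 - cs]\,n^{(1-3a)/2} + o(1)$, so $b_n\to 0$ on compacts when $a>1/3$ (part (i)), while $b_n$ converges to a strictly concave parabola when $a=1/3$ (part (ii)).

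\emph{Conclusion and main obstacle.} The argmax functional is almost surely continuous at paths of Brownian motion, with or without a strictly concave drift, so the continuous mapping theorem combined with the switch relation above yields convergence of $n^{\gamma_j}[f(\tau_G)-\hat{f}_n^G(\tau_G-cn^{-a})]$ to the right derivative of the LCM of the corresponding limit process. Brownian scaling $W(\alpha t)\stackrel{d}{=}\sqrt{\alpha}\,W(t)$ then absorbs $\sigma$, and in part (ii) also $|f'(\tau_G)|$, into the normalising constants $A_1$, respectively $A_2$ and $B_2$, producing the canonical laws $D_R[W(t)](1)$ and $D_R[W(t)-t^2](c)$. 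The critical technical point is uniform control of the Kaplan--Meier representation \emph{on a window shrinking to the boundary}: after inflation by $n^{(1+a)/2}$ the remainder $R_n$ must still be negligible, which requires $1-H$ to remain bounded away from zero near $\tau_G$---precisely what the assumptions $p<1$ and $G(\tau_G-)<1$ guarantee via $1-H(\tau_G-)=(1-p)[1-G(\tau_G-)]>0$. Once this uniform approximation is established, the remaining analysis parallels the uncensored argument of \cite{KL2006}.
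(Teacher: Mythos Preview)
Your proposal is correct and follows essentially the route the paper indicates: the paper does not give a self-contained proof of this theorem but simply states that it follows by adapting the arguments of \cite{KL2006} to the censored setting, with Lemma~\ref{lemma:strong_approx} (the strong Gaussian approximation of the Kaplan--Meier process up to $\tau_G$, valid precisely because $G(\tau_G-)<1$) supplying the one new ingredient. Your use of the i.i.d.\ representation $\hat{F}_n-F=n^{-1}\sum_i\xi_i+R_n$ with $R_n=O_p(n^{-1}\log n)$ plays the same role as the paper's strong approximation and yields the same local variance $f(\tau_G)/[1-G(\tau_G-)]$; you also correctly isolate the only delicate point, namely that $1-H(\tau_G-)=(1-F(\tau_G))(1-G(\tau_G-))>0$ is what allows either approximation to be uniform all the way to the boundary.
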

Note that the assumption $G(\tau_G-)<1$ indicates that the censoring distribution has a positive mass at $\tau_G$ and is required in order to have a strong approximation of the KME by a Brownian motion appearing in the limiting distribution. As this mass becomes smaller, the constants $A_2$ and $B_2$ get closer to zero and infinity respectively. 
For such assumption, one can think of all subjects enter the study at the same time and consider the censoring variable as $C=\min(\tilde{C},\tau_G)$, where $\tau_G$ represents the study duration (administrative censoring), while $\tilde{C}$ is the censoring time due to loss of follow-up for other reasons and it would take values larger than $\tau_G$ if the study continued for longer.
The assumption can be relaxed, for example by applying the result in \textcite{Y1989} to obtain a weak convergence of the KME to a Gaussian process under the assumption $\int_{0}^{\tau_G}\frac{\dd{F(t)}}{1-G(t-)}<\infty$ or applying the result in \textcite{CL1997} to obtain a strong approximation of the KME under the assumption  $\int_{0}^{\tau_G}\frac{\dd{F(t)}}{(1-G(t-))^{p/(1-p)}}<\infty$ for some $p\in(0,1/2)$. Nevertheless, we chose to maintain this assumption, which is simpler, and instead investigate the performance of the test also in scenarios where it is not satisfied.
The previous theorem shows that the $\hat{f}_n^G(\tau_G-cB_2n^{-1/3})$ converges at rate $n^{1/3}$ to $f(\tau_G)$ but computation of $B_2$ requires also estimation of the derivative of the density at $\tau_G$. To avoid that, for the test statistics, we will instead use an estimator $\hat{f}_n^G(\tau_G-cn^{-a})$ for some $a$ larger than but close to $1/3$ as in  statement  (i)  of the theorem. Such choice, despite being sub-optimal in terms of rate of convergence, behaves better in practice.
The unknown quantities in $A_1$, $A_2$ and $B_2$ can be estimated using their corresponding consistent estimators. For example, $f(\tau_G)$ in $A_1$ can be estimated by $\hat{f}_n^G(\tau_G-cn^{-a})$, which is consistent for $f(\tau_G)$, as shown in \textcite{KL2006} for the estimator in the setting without censoring. Also the censoring distribution $G$ can be estimated using the KME.

\subsection{Smoothed Grenander estimator}
Alternatively, we can consider the kernel smoothed Grenander-type estimator.
Let $k$ be a symmetric twice continuously differentiable kernel with support $[-1, 1]$ such that $\int k(u)\dd{u} = 1$ and its first derivative is bounded uniformly.
To circumvent the inconsistency issues of the standard kernel density estimator at the boundary,
we utilize boundary kernels \citep{DGL2013}.
Denoting the boundary kernel as $k_{B,t}(v)$, which is defined by
\[
k_{B,t}(v)=
\begin{cases}
	\phi\rbracket{\frac{t}{h}}k(v) + \psi\rbracket{\frac{t}{h}}vk(v)&\quad t\in[0,h],\\
	k(v)&\quad t\in(h,\tau_{G}-h),\\
	\phi\rbracket{\frac{\tau_G - t}{h}}k(v) - \psi\rbracket{\frac{\tau_G-t}{h}}vk(v)&\quad t\in[\tau_{G}-h,\tau_{G}],
\end{cases}
\]
for $v\in[-1,1]$, where $h$ is the bandwidth parameter. Here, for $s\in[-1,1]$, the coefficients $\phi(s)$ and $\psi(s)$ are determined by
\begin{align*}
	\begin{split}
		\phi(s)\int_{-1}^{s}k(v)\dd{v} &+ \psi(s)\int_{-1}^{s}vk(v)\dd{v}=1,\\
		\phi(s)\int_{-1}^{s}vk(v)\dd{v} &+ \psi(s)\int_{-1}^{s}v^{2}k(v)\dd{v}=0.
	\end{split}
\end{align*}
The smoothed Grenander-type estimator with boundary correction of $f(\tau_{G})$ is given by:
\begin{equation*}
	%\label{eq:smoooth_gren_est_w_corr}
	\hat{f}_{nh}^{SG}(\tau_{G}) = 
	\int_{\tau_{G} - h}^{\tau_{G}}
	\frac{1}{h}k_{B,\tau_{G}}\rbracket{
		\frac{\tau_{G} - v}{h}
	}\hat{f}_{n}^{G}(v)
	\dd{v}.
\end{equation*}
Such estimator and its asymptotic distribution in the interior of the support has been studied in \textcite{LM2017}. Here we extend such results to the estimation at the right boundary $\tau_G$.
\begin{theorem}
	\label{thm:sg_normality}
	Assume that $\tau_G<\tau_{F_u}$, and in addition $f_{u}$ is nonincreasing and twice continuously differentiable with $f_{u}$ and $|f_{u}^{\prime}|$ being strictly positive on $[0,\tau_G]$. If $G(\tau_G-)<1$ and $hn^{1/5}\converge c\in(0, \infty)$ as $n\converge\infty$, then
	\begin{equation*}
		%	\label{eq:smoothed_gren_cens_lim_dist}
		n^{2 / 5}\lbrace{
			\hat{f}_{nh}^{SG}(\tau_{G}) - f(\tau_{G}) 
		}\rbrace
		\converge[d]
		N(\mu, \sigma^2),
	\end{equation*}
	where \[\mu = \frac{1}{2}c^{2}f^{\prime\prime}(\tau_{G})\int_{0}^{1}v^{2}k_{B,\tau_{G}}(v)\dd{v},~
	\sigma^2 = \frac{f(\tau_{G})}{c\sbracket{1 - G(\tau_{G}-)}}\int_{0}^{1}k_{B,\tau_{G}}(v)^{2}\dd{v}.\]
\end{theorem}
The assumption $\tau_G<\tau_{F_u}$ of Theorem~\ref{thm:sg_normality} is the situation of most interest when testing $\tilde{H}_0: q_{1-\epsilon}\geq\tau_G$. We note that the rate of convergence of the smooth Grenander estimator is faster than the non-smoothed one and the limiting variance increases as the mass
of the censoring distribution at $\tau_G$ 
decreases. The limiting bias depends on the second derivative of the density which is difficult to estimate so in practice we will use
bootstrapping
as described later. 

\subsection{Test procedure based on asymptotic results}

Based on the idea described in Section~\ref{seq:idea_test} and the two estimators of a decreasing density, we propose two statistical tests for the assumption of `practically' sufficient follow-up formulated in \eqref{eq:insuff_vs_suff_new}. The first test rejects the null hypothesis of insufficient follow-up $\tilde{H}_0$ if 
\begin{equation}
	\label{eqn:test_grenander}
	\hat{f}_{n}^{G}(\tau_{G} - cn^{-a}) \leq \frac{\epsilon\hat{F}_{n}(\tau_{G})}{\tau - \tau_{G}} - A_{1}^{-1}n^{-(1-a)/2}Q_{1 - \alpha}^{G},
\end{equation} 
where $a\in(1/3,1)$ (we choose $a=0.34$ in the simulation study, which is very close to 1/3), $c=\tau_G$ (so that the amount of deviation from the right boundary scales with the length of the support), $A_1$ is defined in Theorem~\ref{theo:grenander}  and is in practice replaced by a plug-in estimator $\hat{A}_1$, with $f(\tau_G)$ and $G$ estimated by $\hat{f}_{n}^{G}(\tau_{G} - cn^{-a})$ and the KME for $G$, respectively. $Q_{ \alpha}^{G}$ is the $\alpha$-quantile of the distribution $D_R[W(t)](1)$, and it is approximated through computer simulation.
Below we show that the level of the test is asymptotically bounded by $\alpha$. 
\begin{prop}
	\label{prop:level_grenander}
	Suppose $0<\eta<\epsilon$ is such that $n^{(1-a)/2}\eta\to 0$ as $n\to\infty$. Then, for any $\tau_G\leq q_{1-\epsilon}$, 
	\[
	\lim_{n\to\infty}\prob[\sbracket]{\hat{f}_{n}^{G}(\tau_{G} - cn^{-a}) \leq \frac{\epsilon\hat{F}_{n}(\tau_{G})}{\tau - \tau_{G}} - A_{1}^{-1}n^{-\frac{1-a}{2}}Q_{1 - \alpha}^{G}}\leq\alpha.
	\]
\end{prop}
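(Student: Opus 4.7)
The plan is to rescale the rejection condition so that the leading term matches the Grenander limit in Theorem~\ref{theo:grenander}(i), then control the remainder using the structure of $\tilde{H}_0$ and a Slutsky argument. Multiplying the rejection inequality by $A_1 n^{(1-a)/2}$ and adding and subtracting $f(\tau_G)$ converts the event into $Y_n+R_n\geq Q_{1-\alpha}^G$, where
\[
Y_n := A_1 n^{(1-a)/2}\{f(\tau_G)-\hat{f}_n^G(\tau_G-cn^{-a})\},\qquad R_n := A_1 n^{(1-a)/2}\left\{\frac{\epsilon\hat{F}_n(\tau_G)}{\tau-\tau_G}-f(\tau_G)\right\}.
\]
Theorem~\ref{theo:grenander}(i) supplies $Y_n\converge[d] D_R[W(t)](1)$ directly, so it will suffice to produce $U_n\converge[P] 0$ with $R_n\leq U_n$ under $\tilde{H}_0$.

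The crux is to exploit $\tau_G\leq q_{1-\epsilon}$. Retracing the monotonicity derivation from Section~\ref{seq:idea_test} yields $f_u(q_{1-\epsilon})\geq (\epsilon-\eta)/(\tau-q_{1-\epsilon})$; multiplying by $p$, invoking monotonicity of $f=pf_u$ to get $f(\tau_G)\geq f(q_{1-\epsilon})$, and using both $\tau-q_{1-\epsilon}\leq \tau-\tau_G$ and $F(\tau_G)=pF_u(\tau_G)\leq p$ gives $f(\tau_G)\geq (\epsilon-\eta)F(\tau_G)/(\tau-\tau_G)$, which rearranges to $\epsilon F(\tau_G)/(\tau-\tau_G)-f(\tau_G)\leq \eta/(\tau-\tau_G)$. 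Writing $\hat{F}_n(\tau_G)=F(\tau_G)+\{\hat{F}_n(\tau_G)-F(\tau_G)\}$ then bounds $R_n$ above by
\[
U_n := \frac{\epsilon A_1\,n^{(1-a)/2}\{\hat{F}_n(\tau_G)-F(\tau_G)\}}{\tau-\tau_G}+\frac{\eta A_1\,n^{(1-a)/2}}{\tau-\tau_G}.
\]
The deterministic summand vanishes by the rate assumption $n^{(1-a)/2}\eta\to 0$. The stochastic summand is a constant multiple of $n^{-a/2}\cdot\sqrt{n}\{\hat{F}_n(\tau_G)-F(\tau_G)\}$; the last factor is $O_P(1)$ by the CLT for the Kaplan--Meier estimator at $\tau_G$ (applicable since $G(\tau_G-)<1$ keeps the asymptotic variance finite), and $n^{-a/2}\to 0$ since $a>1/3$. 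Hence $U_n\converge[P] 0$.

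Combining, $Y_n+R_n\leq Y_n+U_n\converge[d] D_R[W(t)](1)$ by Slutsky, and because this limit has a continuous distribution, $Q_{1-\alpha}^G$ is a continuity point; the portmanteau theorem then gives $\limsup_n \prob[\sbracket]{Y_n+R_n\geq Q_{1-\alpha}^G}\leq \alpha$, which is precisely the claim. The main delicate point I expect is Step~2: distilling from $\tilde{H}_0$ the clean inequality $\epsilon F(\tau_G)/(\tau-\tau_G)\leq f(\tau_G)+\eta/(\tau-\tau_G)$ and then balancing the two rates $n^{-a/2}$ (stochastic remainder) against $n^{(1-a)/2}\eta$ (deterministic remainder). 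Theorem~\ref{theo:grenander}(i) and the $\sqrt{n}$-normality of the KME at the right boundary under $G(\tau_G-)<1$ are used as black boxes.
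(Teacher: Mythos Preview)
Your proposal is correct and follows essentially the same route as the paper: rewrite the rejection event around $f(\tau_G)$, use $\tau_G\leq q_{1-\epsilon}$ together with the monotonicity argument of Section~\ref{seq:idea_test} to bound the deterministic piece by a multiple of $\eta/(\tau-\tau_G)$, and control the Kaplan--Meier fluctuation via its $\sqrt{n}$-rate so that Theorem~\ref{theo:grenander}(i) plus Slutsky yield the $\alpha$ bound. The only cosmetic differences are that the paper bounds the deterministic remainder by $p\eta/(\tau-\tau_G)$ rather than your $\eta/(\tau-\tau_G)$, and invokes the strong approximation (Lemma~\ref{lemma:strong_approx}) rather than the CLT for the KME; neither affects the argument.
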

The advantage of this test is that it does not require smoothing and twice differentiability of the true density function. However, it requires a plug-in estimate for $A_1$ and a choice of the constant $c$. As noted in the proof of the proposition, the asymptotic level of the test remains bounded by $\alpha$ as long as we use a consistent estimate for $A_1$. In practice we observe that very large sample sizes would be required to have a good performance of this test.

The second test we propose rejects the null hypothesis of insufficient follow-up $\tilde{H}_0$ if 
\begin{equation}
	\label{eqn:test_SG}
	\hat{f}_{nh}^{SG}(\tau_{G}) \leq \frac{\epsilon \hat{F}_{n}(\tau_{G})}{\tau - \tau_{G}} + n^{-2/5}Q_{\alpha}^{SG},
\end{equation}
where $Q_{\alpha}^{SG}$ denotes the $\alpha$-quantile of the distribution $N(\mu,\sigma^2)$ in Theorem~\ref{thm:sg_normality}. Such test is also guaranteed to be asymptotically correct.
\begin{prop}
	\label{prop:level_SG}
	Suppose $0<\eta<\epsilon$ is such that $n^{2/5}\eta\to 0$ as $n\to\infty$. Then, for any $\tau_G\leq q_{1-\epsilon}$, 
	\[
	\lim_{n\to\infty}\prob[\sbracket]{\hat{f}_{nh}^{SG}(\tau_{G}) \leq \frac{\epsilon\hat{F}_{n}(\tau_{G})}{\tau - \tau_{G}} +n^{-2/5}Q_{ \alpha}^{SG}}\leq\alpha.
	\]
\end{prop}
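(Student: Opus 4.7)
The plan is to follow the same strategy as the proof of Proposition \ref{prop:level_grenander}, with the Grenander limit distribution replaced by the smoothed-Grenander central limit theorem from Theorem \ref{thm:sg_normality}. The first step is to derive, under $\tilde H_0$, the same deterministic lower bound on $f(\tau_G)$ as in the earlier proof. Combining the monotonicity of $f_u$ with the chain of inequalities displayed just before \eqref{eq:insuff_vs_suff_new*} gives $f_u(q_{1-\epsilon})(\tau - q_{1-\epsilon}) \geq \epsilon - \eta$. Multiplying by $p$, using $\tau_G \leq q_{1-\epsilon}$ together with the monotonicity of $f=pf_u$, and then using the trivial bound $p \geq F(\tau_G)$, I obtain the key inequality
\[
f(\tau_G) \;\geq\; \frac{(\epsilon-\eta)\,F(\tau_G)}{\tau - \tau_G}.
\]

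Next I would plug this bound into the rejection event. Rewriting the event in \eqref{eqn:test_SG} as $\hat f_{nh}^{SG}(\tau_G) - f(\tau_G) \leq \epsilon\hat F_n(\tau_G)/(\tau-\tau_G) - f(\tau_G) + n^{-2/5}Q_\alpha^{SG}$, the preceding lower bound implies that the rejection probability is bounded from above by
\[
\prob[\sbracket]{n^{2/5}\bigl(\hat f_{nh}^{SG}(\tau_G) - f(\tau_G)\bigr) \;\leq\; R_n + Q_\alpha^{SG}},
\]
where
\[
R_n \;=\; \frac{n^{2/5}\,\epsilon\bigl(\hat F_n(\tau_G) - F(\tau_G)\bigr) \;+\; n^{2/5}\,\eta\,F(\tau_G)}{\tau - \tau_G}.
\]

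The final step is to show $R_n = o_p(1)$ and conclude with Slutsky's theorem. The first term vanishes because, under $G(\tau_G-)<1$, the Kaplan--Meier estimator at $\tau_G$ is $\sqrt{n}$-consistent, so that $n^{2/5}\bigl(\hat F_n(\tau_G)-F(\tau_G)\bigr) = O_p(n^{-1/10})$; the second term vanishes by the assumed rate $n^{2/5}\eta \to 0$. Invoking Theorem \ref{thm:sg_normality} and Slutsky's theorem, the bounding probability converges to $\prob{N(\mu,\sigma^2) \leq Q_\alpha^{SG}} = \alpha$ by definition of $Q_\alpha^{SG}$. The argument is essentially routine once Theorem \ref{thm:sg_normality} is in hand; the only mildly delicate point is justifying the $\sqrt{n}$-rate of $\hat F_n$ at the right endpoint, which follows from the positive mass of $G$ at $\tau_G$ built into the assumption $G(\tau_G-)<1$, and making sure the dependence structure between $\hat F_n(\tau_G)$ and $\hat f_{nh}^{SG}(\tau_G)$ is harmless---it is, because the former is $O_p(n^{-1/2}) = o_p(n^{-2/5})$ and therefore absorbed by Slutsky.
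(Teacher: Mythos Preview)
Your proposal is correct and follows essentially the same route as the paper's proof: bound $f(\tau_G)$ from below via the monotonicity argument, rewrite the rejection event, and conclude with Theorem~\ref{thm:sg_normality} plus the $\sqrt{n}$-rate of $\hat F_n(\tau_G)$ (the paper cites Lemma~\ref{lemma:strong_approx} for the latter) and Slutsky. The only cosmetic difference is that your residual carries $\eta\,F(\tau_G)$ while the paper keeps $\eta\,p$; both are $o(n^{-2/5})$ by assumption, so the argument is unchanged.
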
		
Computation of $Q_{ \alpha}^{SG}$ requires plug-in estimates for the asymptotic mean and variance of the estimator, which do not behave well for small sample sizes. For this reason, we suggest a bootstrap procedure for approximating the critical value of the test as described in the next subsection. We note again that the asymptotic level of the test remains bounded by $\alpha$ if the unknown quantile $Q_{ \alpha}^{SG}$ is replaced by a consistent estimator. 

Recall that we consider $\tau_G$ as known and equal to the study duration under administrative censoring. However, all the results can be extended to the case where $\tau_G$ is unknown and estimated by $Y_{(n)}=\max_iY_i$, which is a consistent estimator for $\tau_G$ if $\tau_G\leq\tau_{F_u}$. We establish the asymptotic normality of the estimator $\hat{f}_{nh}^{SG}(Y_{(n)})$ and show the asymptotic level of the test is bounded by $\alpha$, under the conditional setting with categorical covariate information, in a subsequent work \citep{YMVK2024}.

\subsection{Bootstrap procedure}
\label{sec:bootstrap}
It is worthy to mention that the naive bootstrap, by resampling with replacement from the pairs $(y_{i}, \delta_{i})$ and computing the Grenander estimator $\hat{f}_{n}^{G}(\tau_{G})$ from the bootstrap samples, is inconsistent, since the true density of the bootstrap sample is not continuous \citep{K2008}. Therefore a smoothed bootstrap procedure, based on the smoothed Grenander estimator $\hat{f}_{n}^{SG}$, is introduced to approximate the critical value of the test
(see Algorithm \ref*{algo:bootstrap_algo} in the Supplementary Material).
The idea of the smoothed bootstrap procedure is that a modified kernel density estimator $\tilde{f}_{nh_{0}}$ is estimated from the sample data, followed by drawing bootstrap samples from $\tilde{f}_{nh_{0}}$. Specifically, $\tilde{f}_{nh_{0}}(t)$ is the derivative of the kernel smoothed least concave majorant $\hat{F}^G_n$ of the KME, i.e. the derivative of $\int\frac{1}{h_{0}}k_{B,t}({\frac{t-u}{h_{0}}})\hat{F}_{n}^{G}(u)\dd{u}$. If one would use the bandwidth $h$ of order $n^{-2/5}$ as in the computation of $\hat{f}_{nh}^{SG}$, an explicit estimation of $f^{\prime\prime}(\tau_{G})$ would be required to adjust the asymptotic bias of the bootstrap estimates \citep{GH2018}. Hence obtaining an accurate estimate for $f^{\prime\prime}(\tau_{G})$ would be crucial
but it is problematic especially near the end point $\tau_{G}$. Instead, as proposed by \textcite{GJ2023} in the monotone regression setting, we draw bootstrap samples from $\tilde{f}_{nh_{0}}$ with $h_{0} = \tau_{G}(0.7n^{-1/9} \wedge 0.5)$,
which leads to an over-smoothed estimate. Oversmoothing circumvents the need to correct for the asymptotic bias. Then we replace $Q^{SG}_\alpha$ in the critical value of the test in \eqref{eqn:test_SG} by the $\alpha$-quantile of the bootstrap estimates $n^{2/5}\{\hat{f}_{nh}^{SG^{\ast}}(\tau_{G})-\tilde{f}_{nh_{0}}(\tau_{G})\}$.

\section{Simulation study}\label{sec:sim}

In this section, we study the finite sample performance of the testing procedure (using $\hat{f}_{n}^{G}$ and $\hat{f}_{nh}^{SG}$) described in Section~\ref{sec:proposed_method} for testing $\tilde{H}_{0}: q_{1-\epsilon} \leq \tau_{G}$ and compare it with the $\alpha_{n}$, $\tilde{\alpha}_{n}$ and $Q_{n}$ tests.
In order to cover various scenarios with different uncured survival time and censoring time distributions, we consider six settings as described below. In Settings \ref*{enum:sim_exp_unif}--\ref*{enum:sim_wb_unif} the survival time of the uncured has unbounded support, i.e. $\tau_{F_{u}}=\infty$, while that of the censoring time is bounded, i.e. $\tau_{G}<\infty$. Such a scenario is the main focus of our methodology but it means that the follow-up is never sufficient based on the characterization of sufficient follow-up in \eqref{eq:insuff_vs_suff} used by the existing tests. Settings \ref*{enum:sim_texp_unif}--\ref*{enum:sim_texp_exp_trunc} correspond to having $\tau_{F_{u}}<\infty$ and $\tau_{G}<\infty$, indicating that the follow-up is sufficient, under the notion in \eqref{eq:insuff_vs_suff}, when $\tau_{F_{u}}<\tau_{G}$. Setting \ref*{enum:sim_exp_exp} corresponds to having $\tau_{F_{u}}=\infty$ and $\tau_{G}=\infty$. Such case was considered by \textcite{XEK2023} and is regarded as sufficient follow-up under the 
notion
in \eqref{eq:suff_vs_insuff}. The $T_n$ test of \textcite{XEK2023} considers a null hypothesis of sufficient follow-up and as a result cannot be compared with ours. However, for this last setting we also present the results of the $T_n$ test. In addition to the above settings, we include a small simulation with a non-monotone density for the uncured survival time to study the performance of the proposed method when the montonicity assumption of $f_u$ is not satisfied. 
%As suggested by a referee, we included a
A comparison with the parametric method RECeUS proposed by \textcite{SO2023} is also made through a small simulation under Setting~\ref*{enum:sim_exp_unif}.

\subsection{Simulation settings}
For Settings \ref*{enum:sim_exp_unif}--\ref*{enum:sim_wb_unif} and \ref*{enum:sim_exp_exp}, three different uncured fractions $p$ are considered, namely 0.2, 0.6 and 0.8, to study the effect of $p$ on the testing performance. Note that the censoring rate is at least $1-p$ since all the cured subjects are censored. For Settings \ref*{enum:sim_texp_unif} and \ref*{enum:sim_texp_exp_trunc}, only $p=0.6$ is considered. We consider distribution that has a decreasing density for the uncured event time $T_{u}$. We generate the censoring time $C$ as $C=\min(\tilde{C},\tau_G)$, where the support of $\tilde{C}$ includes $[0,\tau_G]$.
In Section~\ref{sec:proposed_method}, the asymptotic properties of both the Grenander $\hat{f}_{n}^{G}$ and the smooth Grenander $\hat{f}_{nh}^{SG}$ estimators are affected by the mass of the censoring distribution at $\tau_{G}$, $\Delta G(\tau_{G})=1-G(\tau_{G}-)$. To investigate the effect of $\Delta G(\tau_{G})$ on the test performance, we consider $\tilde{C}$ with a uniform distribution on $[0,\zeta]$ for $\zeta\geq\tau_{G}$ chosen in such a way that $\Delta G(\tau_{G})\in\cbracket{0,0.02,0.05,0.2}$, in Settings \ref*{enum:sim_exp_unif}, \ref*{enum:sim_wb_unif} and \ref*{enum:sim_texp_unif}. For such settings, we have that the higher the mass $\Delta G(\tau_{G})$, the lower the censoring rate. For Settings \ref*{enum:sim_exp_unif}--\ref*{enum:sim_texp_exp_trunc}, 12 different $\tau_{G}$'s are considered in order to 
study the level and power
of the test. We choose 
$\tau_G$
in terms of quantiles of $F_u$ because, in practice, the probability of the event to happen after the end of the study is a more important measure of deviation from sufficient follow-up compared to the value of $\tau_G$. Specifically, we choose $q_1,q_2,q_3,q_4, q_6,q_{12}$ to be the 90, 92.5, 95, 97.5, 99 and 99.9\% quantiles of $F_{u}$. In addition we consider $q_5$ to be the mid-point between the 97.5 and 99\% quantiles of $F_{u}$ (i.e. $q_5=(q_4+q_6)/2$) and $q_7-q_{11}$ to be five evenly separated points between the 99 and 99.9\% quantiles of $F_{u}$ base on the event time scale (e.g. $q_7=q_6+(q_{12}-q_6)/6$). 
Table \ref*{supp_tab:actual_tauc} in the Supplementary Material reports their 
values for Settings \ref*{enum:sim_exp_unif}--\ref*{enum:sim_texp_exp_trunc}.
\begin{enumerate}[label={\textit{Setting \arabic*.}},align=left,wide,ref=\arabic*]
	\item\label{enum:sim_exp_unif} The uncured subjects have an exponential distribution with rate parameter~1. The censoring times are generated using the aforementioned uniform distribution for $\tilde{C}$.
	
	\item\label{enum:sim_exp_exp_trunc} The uncured have an exponential distribution with rate parameter $\lambda\in\{0.4,1,5\}$. The censoring times are generated using $\tilde{C}$ having exponential distribution with rate parameter 0.5. In this case 
	the higher the rate parameter $\lambda$, the lower the censoring rate.
	
	\item\label{enum:sim_wb_unif} The uncured subjects have a Weibull distribution with shape and scale parameters 0.5 and 1.5, respectively. Such distribution for the uncured has a density function that decreases faster comparing to the exponential distribution in Setting \ref*{enum:sim_exp_unif}. The censoring times are generated using the aforementioned uniform distribution for $\tilde{C}$.
	
	\item\label{enum:sim_texp_unif} The uncured have a truncated exponential distribution with rate parameter~1 and endpoint $\tau_{F_{u}}$, where $\tau_{F_{u}}$ is the 99\% quantile of the exponential distribution with rate parameter 1, i.e. $\tau_{F_{u}}\approx4.6$. The censoring times are generated using the aforementioned uniform distribution for $\tilde{C}$. To study the testing performance when $\tau_{F_{u}}\leq\tau_{G}$, we consider few extra $\tau_{G}$'s, which are greater than $\tau_{F_{u}}$, in addition to the $\tau_{G}$'s mentioned previously.
	
	\item\label{enum:sim_texp_exp_trunc} The uncured have a truncated exponential distribution with rate parameter 5 and endpoint $\tau_{F_{u}}$, where $\tau_{F_{u}}$ is the 99\% quantile of the exponential distribution with rate parameter 5, i.e. $\tau_{F_{u}}\approx 0.92$. The censoring times are generated using $\tilde{C}$ having an exponential distribution with rate parameter $\lambda_{C}\in\{0.5,3\}$. In this case the higher the rate parameter for the censoring distribution $\lambda_{C}$, the higher the censoring rate. As in Setting \ref*{enum:sim_texp_unif}, we include extra $\tau_{G}$'s
	to investigate the testing performance when $\tau_{F_{u}}\leq\tau_{G}$.
	
	\item\label{enum:sim_exp_exp} The uncured have an exponential distribution with rate parameter 1. The censoring time follows an exponential distribution with rate parameter 0.5.
	\item\label{enum:sim_lnorm_unif} The uncured have a mixture distribution of two log-normal distributions with weights 0.7 and 0.3, denoted by $F_u = 0.7F_{u1}+0.3F_{u2}$, where the parameter of $F_{u1}$ is $(\mu_1,\sigma_1)=(0,1)$ and that of $F_{u2}$ is $(\mu_2,\sigma_2)=(\log8,0.3)$. The censoring time follows the uniform distribution as in Setting~1 with $\tau_G=6.5$ and $\Delta G(\tau_G-)=0.02$. The uncured fraction $p$ is 0.6.
\end{enumerate}
The range of the censoring rates over different values of $\tau_G$ for all the considered settings are reported in Tables \ref*{supp_tab:exp_unif_cens_rate}--\ref*{supp_tab:texp_exp_trunc_cens_rate} in the Supplementary Material. The censoring rate depends on $p$, $\Delta G(\tau_{G})$ in Settings~\ref*{enum:sim_exp_unif}, \ref*{enum:sim_wb_unif} and \ref*{enum:sim_texp_unif}, $\lambda$ in Setting~\ref*{enum:sim_exp_exp_trunc}, and $\lambda_{C}$ in Setting~\ref*{enum:sim_texp_exp_trunc} as mentioned previously.

For the proposed method, $\epsilon$ is set at 0.01, meaning that we consider the follow-up as sufficient when $q_{0.99}<\tau_{G}$. In Setting~\ref*{enum:sim_wb_unif}, we also investigate how the method performs when $\epsilon= 0.005$, resulting in a more conservative hypothesis of `practically' sufficient follow-up.  The parameter $\tau$ is set to the 99.95\% quantile of 
$F_{u}$ for Settings~\ref*{enum:sim_exp_unif}--{\ref*{enum:sim_texp_exp_trunc}. A different choice is made for Setting~\ref*{enum:sim_exp_exp} since $\tau_G=\infty$.} We also studied the effect of the choice of $\tau$ on the rejection rate, using Settings~\ref*{enum:sim_exp_unif} and \ref*{enum:sim_texp_unif}. The parameters $a$ and $c$ of 
$\hat{f}_{n}^{G}$
are set to 0.34 and $\tau_{G}$ ($y_{(n)}$ for Setting~\ref*{enum:sim_exp_exp}), respectively. For
$\hat{f}_{nh}^{SG}$,
the tri-weight kernel is used together with a bandwidth of 
$h=\tau_{G}(n^{-1/5}\wedge0.5)$
($\tau_G$ is replaced by $y_{(n)}$ in Setting~\ref*{enum:sim_exp_exp}). We use 500 bootstrap samples to compute the critical value of the test based on $\hat{f}_{nh}^{SG}$.
The parameter $\gamma$
for $Q_{n}$ test is set to 1, which covers 
various
censoring distributions including those in the current simulation settings. We use the same setting as described in 
\textcite{XEK2023} for $T_{n}$ test with 500 bootstrap samples. The significance level of the tests is set at $0.05$.

\subsection{Simulation results}
For each simulation setting, three sample sizes are considered, namely 250, 500 and 1000, and for each of them we consider 500 replications. For Settings~\ref*{enum:sim_exp_unif}--\ref*{enum:sim_texp_exp_trunc}, we plot the rejection rate of insufficient follow-up against $\tau_{G}$ for
different methods.
For
Setting~\ref*{enum:sim_exp_exp}, which corresponds to only one choice $\tau_G=\infty$, the rejection rate is reported in a table rather than a plot.
In this section we only  present the results for Settings~\ref*{enum:sim_exp_unif} and \ref*{enum:sim_texp_unif} and compare the two proposed tests to the $Q_{n}$~test, which is the best of the existing tests for the null hypothesis of insufficient follow-up. The rest of the results and comparison also with the other methods can be found in the Supplementary Material.
\e{Recall that the hypotheses of practically sufficient or insufficient follow-up in \eqref{eq:insuff_vs_suff_new} depend on the choice of $\epsilon$, thus both the level and power also depend on $\epsilon$. In contrast, the comparator tests assess the null hypothesis of the traditional insufficient follow-up $H_0$, making direct comparison less straightforward. For the consistency of comparison, we fix $\epsilon$ and evaluate the level and power of all tests for testing $\tilde{H}_0$ instead of $H_0$. We note that, in all settings except Settings~\ref*{enum:sim_texp_unif}, \ref*{enum:sim_texp_exp_trunc} and \ref*{enum:sim_exp_exp}, follow-up is insufficient under the traditional notion in \eqref{eq:insuff_vs_suff}, since $\tau_{F_u}=\infty$ in such settings. From the perspective of testing $H_0$, the rejection rates of the comparator tests would be interpreted as measures of the level but not the power, and we would expect them are bounded by the nominal significance level.}

\begin{figure}%
	\centering
	\begin{subfigure}[b]{0.325\textwidth}
		\centering
		\includegraphics[width=\textwidth]{./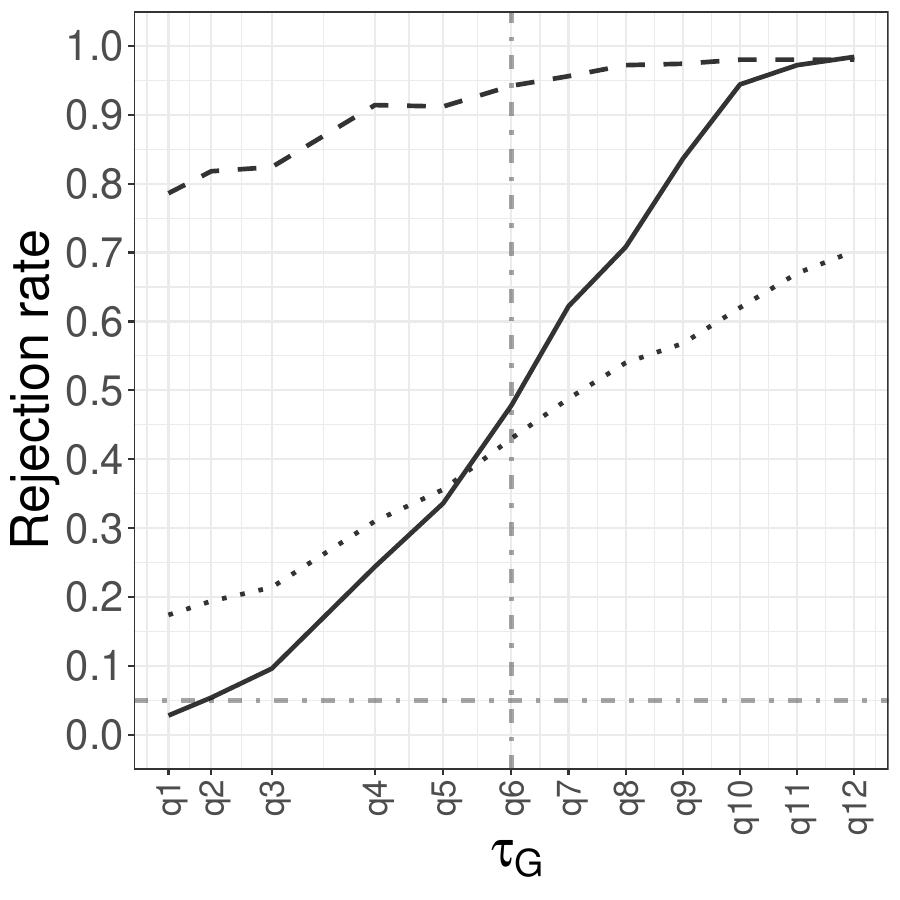}
	\end{subfigure}
	\hfill
	\begin{subfigure}[b]{0.325\textwidth}
		\centering
		\includegraphics[width=\textwidth]{./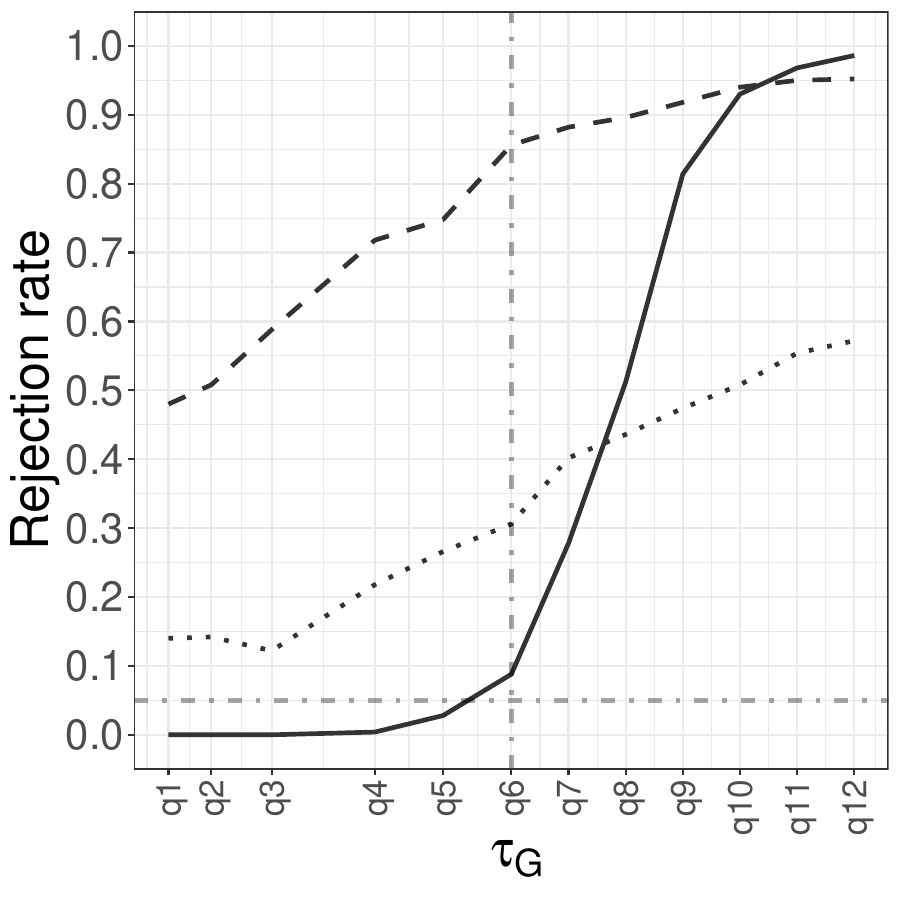}
	\end{subfigure}
	\hfill
	\begin{subfigure}[b]{0.325\textwidth}
		\centering
		\includegraphics[width=\textwidth]{./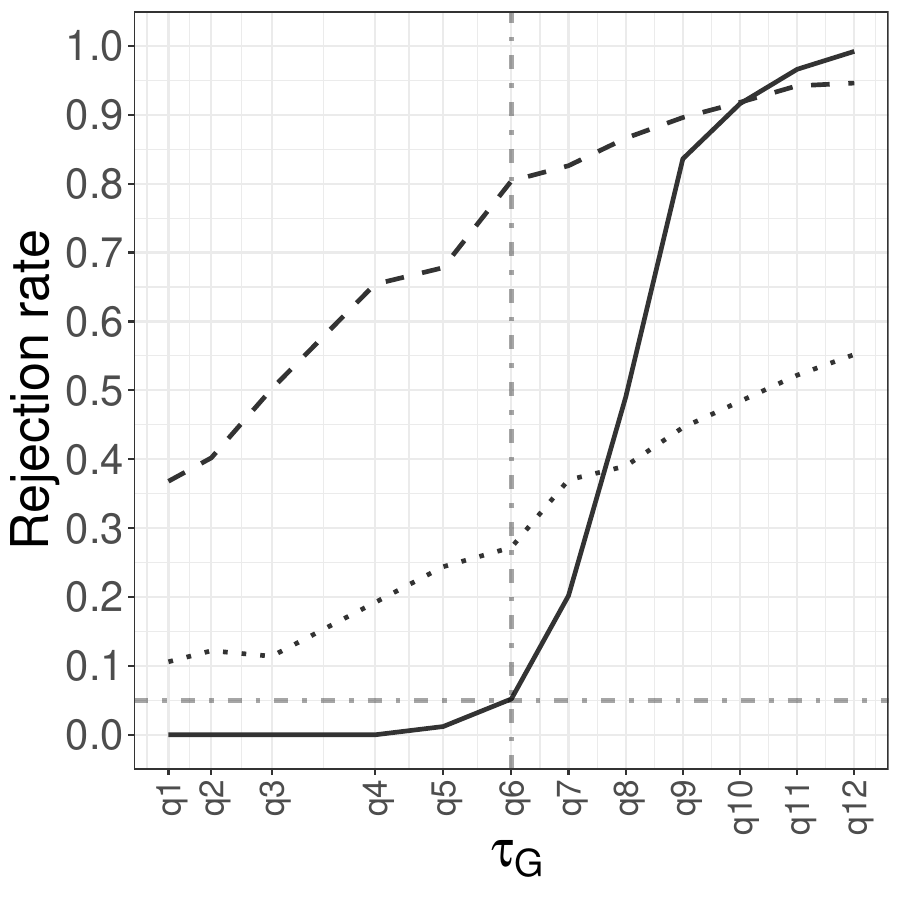}
	\end{subfigure}
	\caption{Rejection rate of the null hypothesis of insufficient follow-up for different methods (solid: $\hat{f}_{nh}^{SG}$, dashed: $\hat{f}_{n}^{G}$, dotted: $Q_{n}$) in Setting~\ref*{enum:sim_exp_unif} with $n=500$, $\Delta G(\tau_{G})=0.02$ and $p=0.2$ (left), $p=0.6$ (center), $p=0.8$ (right).\label{fig:rej_prop_setting_1_by_p}}
\end{figure}%

Figure~\ref{fig:rej_prop_setting_1_by_p} depicts the rejection rate of the null hypothesis of insufficient follow-up against $\tau_{G}$ for Setting~\ref*{enum:sim_exp_unif} when $n=500$, $\Delta G(\tau_{G})=0.02$ and for different choices of $p$. In each subplot, 
$q_{1}-q_{12}$
refer to the 12 different $\tau_{G}$'s considered as described before.
The horizontal dash-dotted line at 0.05 indicates the significance level, and the vertical one at $q_{6}$ indicates $\tau_{G}=q_{0.99}$ (99\% quantile of $F_{u}$). Recall that we are testing
$\tilde{H}_{0}: q_{1-\epsilon} \geq \tau_{G}$ with $\epsilon=0.01$ and an ideal testing procedure is expected to reject the null less when $\tau_{G} < q_{0.99}$ (left-side of the vertical dash-dotted line) while to reject more when $\tau_{G} > q_{0.99}$ (right-side of the vertical dash-dotted line). The rejection rates of $\hat{f}_{nh}^{SG}$, $\hat{f}_{n}^{G}$, and $Q_{n}$ tests are represented by solid, dashed, and dotted lines, respectively, in each subplot. Each of these lines is constructed by interpolating the rejection rates between two consecutive $q$'s. 

From Figure~\ref{fig:rej_prop_setting_1_by_p} we see that the proposed method using $\hat{f}_{nh}^{SG}$ performs better, in terms of level empirically, among the three procedures for different $p$'s, while the test using $\hat{f}_{n}^{G}$ is the worst. The poor performance of $\hat{f}_{n}^{G}$ in controlling the level probably results from the poor approximation of the critical value using the asymptotic distribution at a relatively small sample size. In terms of the empirical power, $\hat{f}_{nh}^{SG}$ behaves better than the $Q_{n}$ test in general, although there is a region for $\tau_{G}$ near $q_{6}$ where the $Q_{n}$ test has higher power. For the effect of $p$, we observe that as $p$ increases the performance improves, i.e. better control of the level and higher power. Note that when $p$ is small the censoring rate is very high (around 85\% for $p=0.2$) and it becomes more difficult to detect that the follow-up is not sufficient. In such cases, larger sample sizes and $\Delta G(\tau_G)$ are required in order to control the level of the test (see Figure~\ref*{supp_fig:rej_prop_exp_unif_p_02} in the Supplementary Material). We also point out that for the formulation of sufficient follow-up in \eqref{eq:insuff_vs_suff} considered by the $Q_n$ test, the null hypothesis is always true for this setting,  but the observed rejection rate is far larger than the nominal level of $5\%$.

\begin{figure}%
	\centering
	\begin{subfigure}[b]{0.325\textwidth}
		\centering
		\includegraphics[width=\textwidth]{./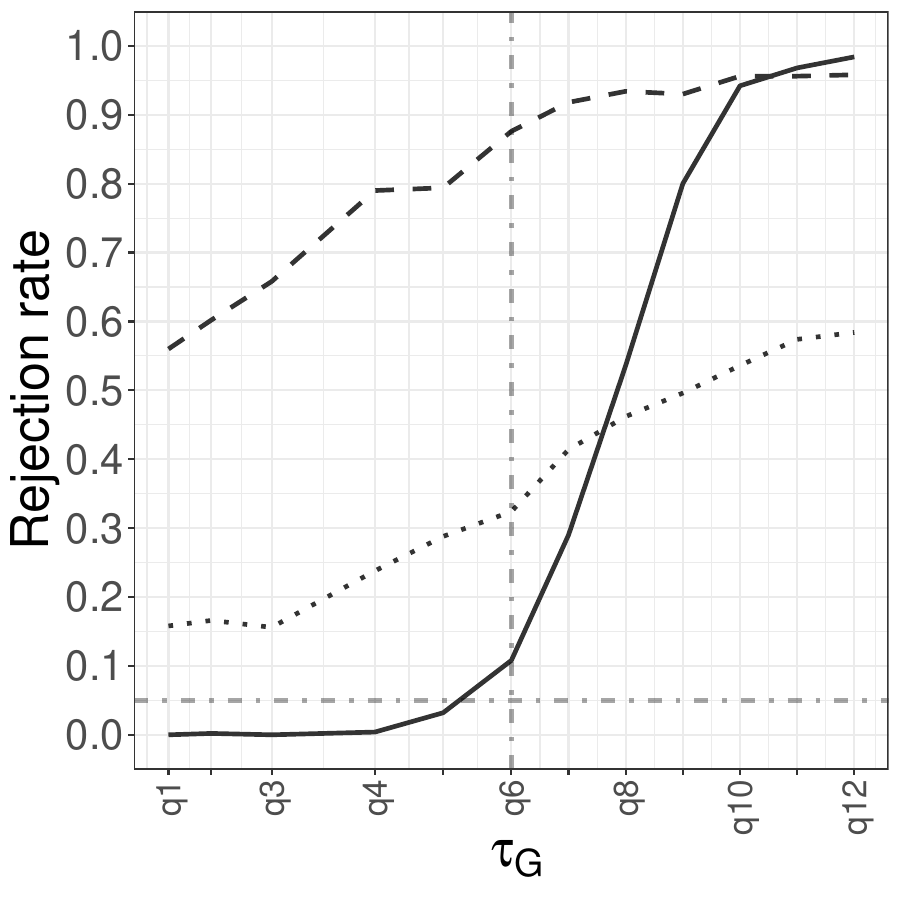}
	\end{subfigure}
	\hfill
	\begin{subfigure}[b]{0.325\textwidth}
		\centering
		\includegraphics[width=\textwidth]{./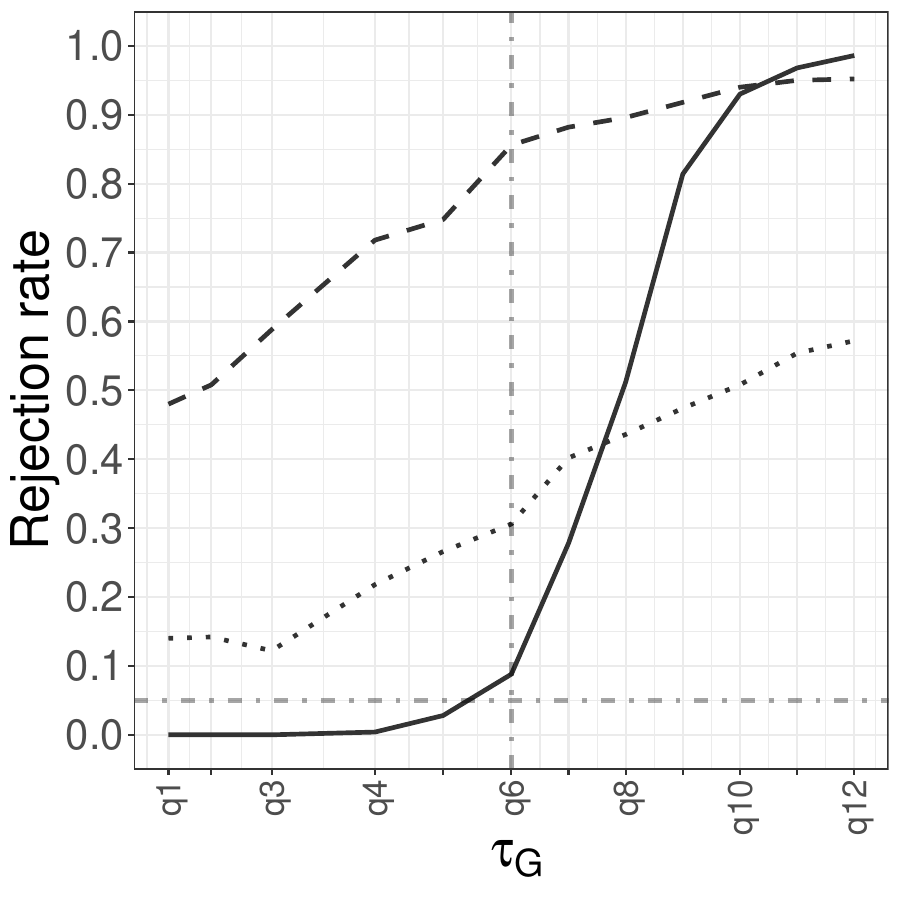}
	\end{subfigure}
	\hfill
	\begin{subfigure}[b]{0.325\textwidth}
		\centering
		\includegraphics[width=\textwidth]{./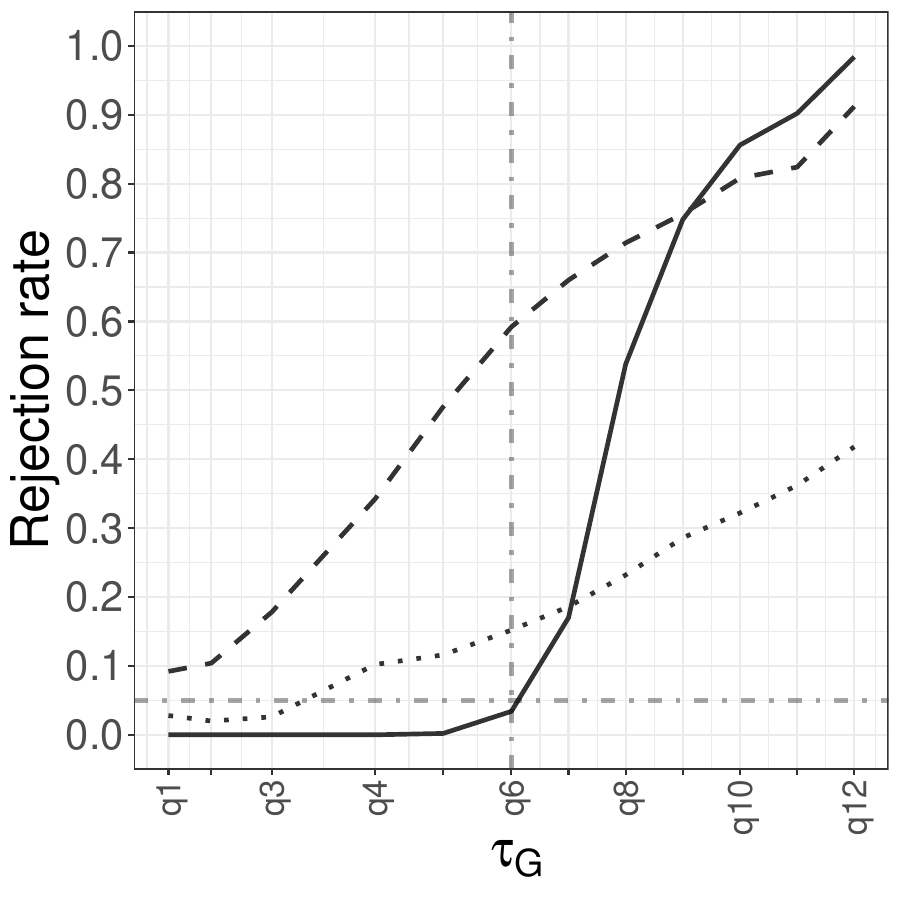}
	\end{subfigure}
	\caption{Rejection rate of the null hypothesis of insufficient follow-up for different methods (solid: $\hat{f}_{nh}^{SG}$, dashed: $\hat{f}_{n}^{G}$, dotted: $Q_{n}$) in Setting~\ref*{enum:sim_exp_unif} with $n=500$, $p=0.6$ and  $\Delta G(\tau_{G})=0$ (left), 0.02 (center), and 0.2 (right).\label{fig:rej_prop_setting_1_by_jumpsize}}
\end{figure}%

Figure~\ref{fig:rej_prop_setting_1_by_jumpsize} shows the rejection rate against $\tau_{G}$ when $n=500$ and $p=0.6$,
while changing $\Delta G(\tau_{G})$. The proposed procedure using $\hat{f}_{nh}^{SG}$ possesses a better control on the empirical level at each $\Delta G(\tau_{G})$, among the three methods, while the Type I error probability of $\hat{f}_{n}^{G}$ remains higher than the others empirically. {The test based on} $\hat{f}_{nh}^{SG}$ also has a higher rejection rate under the alternative 
$\tilde{H}_{a}: \tau_{G} < q_{1-\epsilon}$,
apart from in a small region near $q_{1-\epsilon}$. The performance of all tests improves, in terms of empirical level, as $\Delta G(\tau_{G})$ increases. We note that the asymptotic distributions of the estimators $\hat{f}_{n}^{G}$ and $\hat{f}_{nh}^{SG}$ rely on the assumption that $\Delta G(\tau_{G})>0$. The simulation results suggest the proposed method still works well when such assumption is violated, i.e. $\Delta G(\tau_{G})=0$. 

\begin{figure}%
	\centering
	\begin{subfigure}[b]{0.325\textwidth}
		\centering
		\includegraphics[width=\textwidth]{./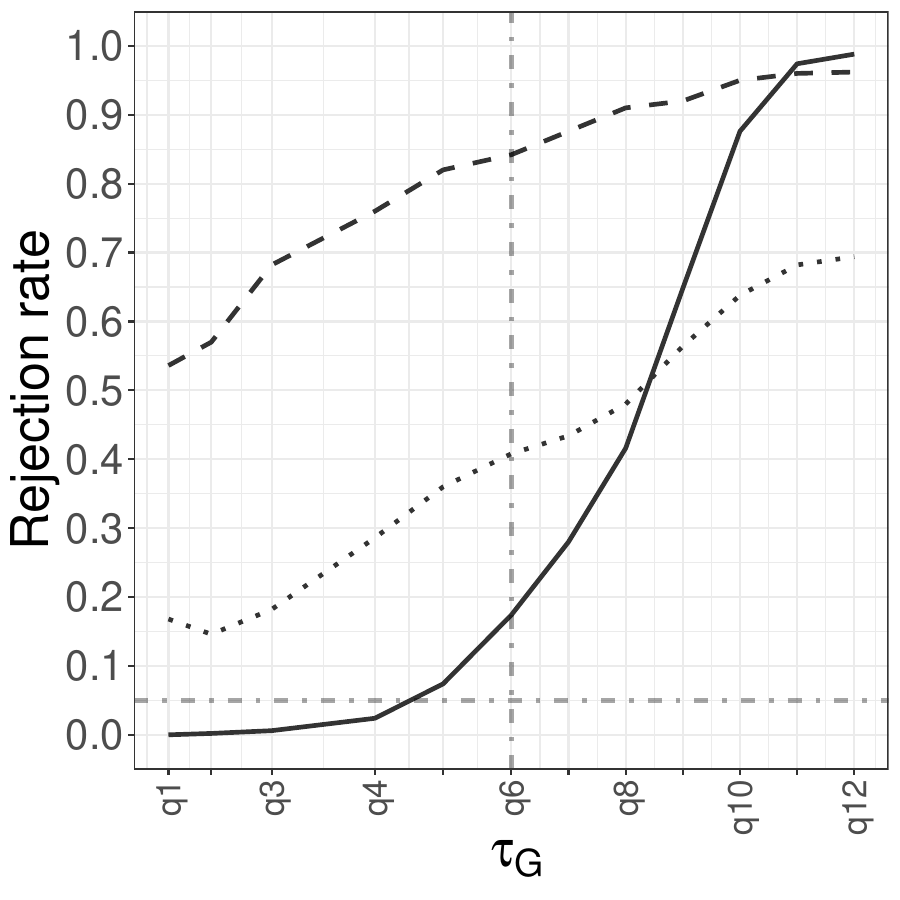}
	\end{subfigure}
	\hfill
	\begin{subfigure}[b]{0.325\textwidth}
		\centering
		\includegraphics[width=\textwidth]{./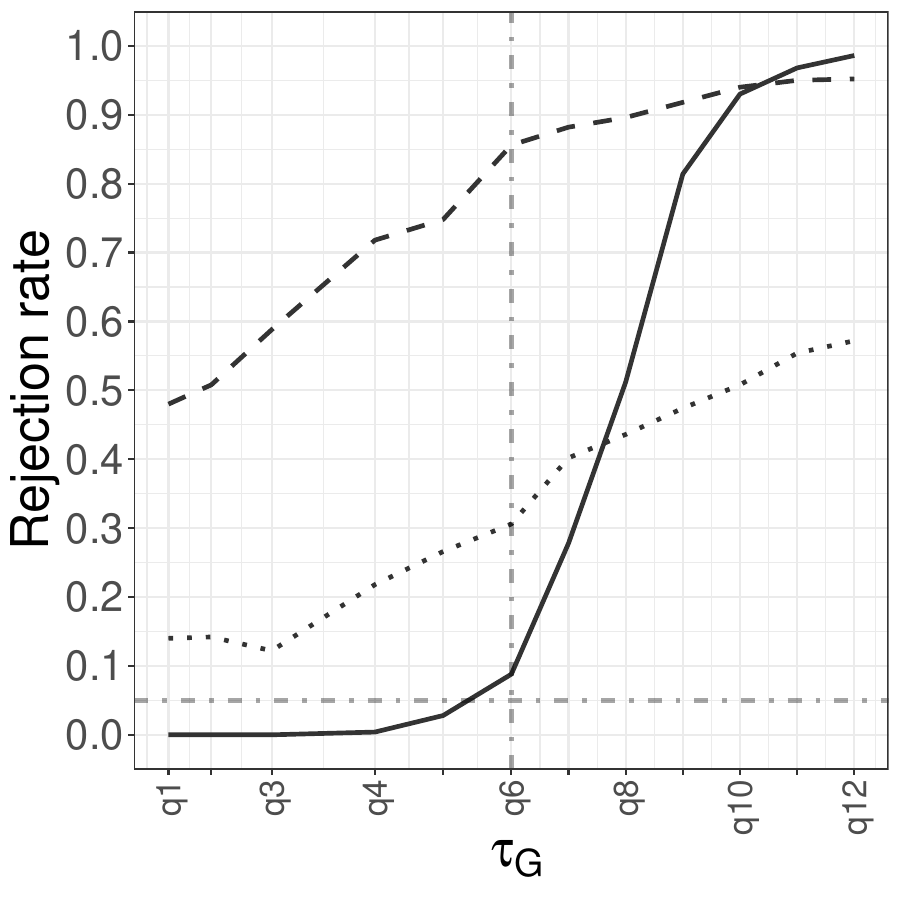}
	\end{subfigure}
	\hfill
	\begin{subfigure}[b]{0.325\textwidth}
		\centering
		\includegraphics[width=\textwidth]{./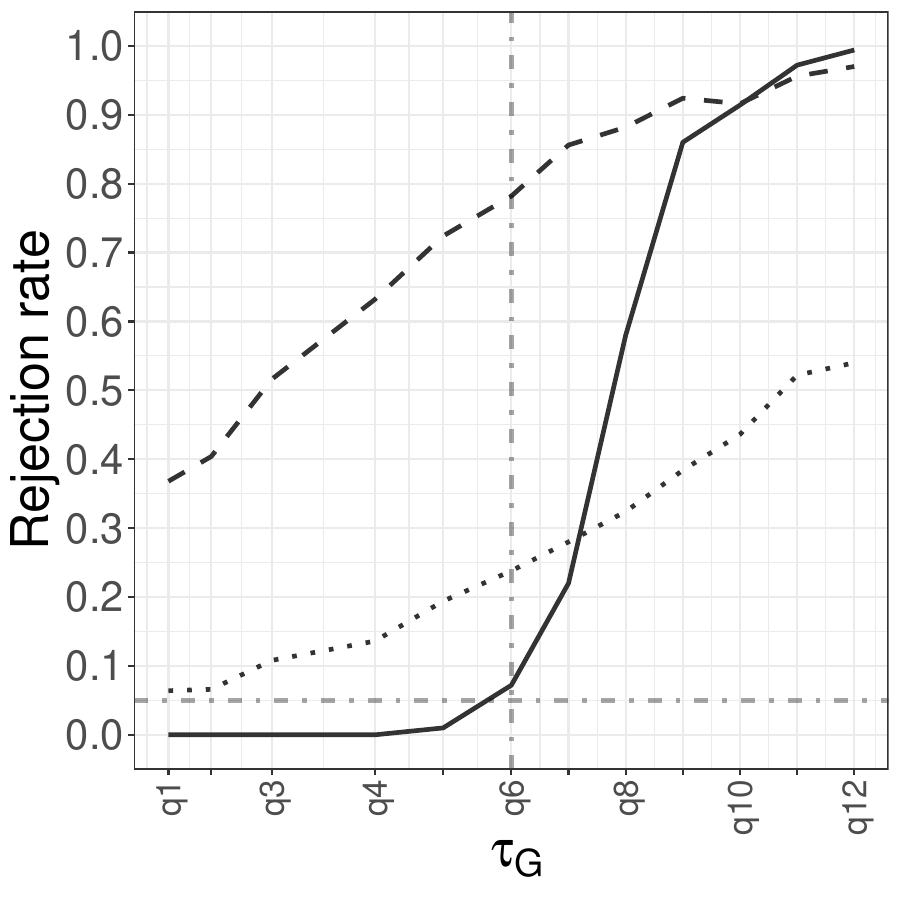}
	\end{subfigure}
	\caption{Rejection rate of the null hypothesis of insufficient follow-up for different methods (solid: $\hat{f}_{nh}^{SG}$, dashed: $\hat{f}_{n}^{G}$, dotted: $Q_{n}$) in Setting~\ref*{enum:sim_exp_unif} with $p=0.6$, $\Delta G(\tau_{G})=0.02$ and a sample size of 200 (left), 500 (center), and 1000 (right).\label{fig:rej_prop_setting_1_by_size}}
\end{figure}%

In 
Figure~\ref{fig:rej_prop_setting_1_by_size}
we investigate the effect of the sample size,  for fixed $p=0.6$ and $\Delta G(\tau_{G})=0.02$.
As expected, we observe a better control of the level and a steeper power curve when the sample size is larger. The proposed procedure using $\hat{f}_{nh}^{SG}$ performs better than others for all three different sample sizes. 
Complete results for all combinations of $p$, $\Delta G(\tau_G)$ and $n$ can be found in Figures~\ref*{supp_fig:rej_prop_exp_unif_p_02}--\ref*{supp_fig:rej_prop_exp_unif_p_08} in the Supplementary Material. In terms of sensitivity of the test with respect to the choice of $\tau$, we observe in Figure~\ref*{supp_fig:rej_prop_exp_unif_p_06_tau} that as $\tau$ increases, the rejection rate decreases, resulting in a more conservative test. The effect is more visible in terms of power than level. 
For the comparison with RECeUS-AIC by \textcite{SO2023}, we observe from Figure~\ref*{supp_fig:rej_prop_exp_unif_p_06_RECeUS} that $\hat{f}_{nh}^{SG}$ performs better in terms of empirical level, while RECeUS-AIC demonstrates higher power in the region where $\tau_G$ is close to $q_{0.99}$. When comparing the $Q_n$ test with RECeUS-AIC, the latter controls the level more effectively when $\tau_G$ is much smaller than $q_{0.99}$, but not in the region near $q_{0.99}$, although RECeUS-AIC exhibits higher empirical power than the $Q_n$ test.
\e{We note that the hypotheses considered by \textcite{SO2023} do not exactly correspond to our hypotheses in \eqref{eq:insuff_vs_suff_new}. 
	Despite this, for consistency, the interpretation of level and power here is made with respect to our hypotheses for $\epsilon=0.01$.}

Similar conclusions can be drawn from the simulation results in Settings~\ref*{enum:sim_exp_exp_trunc}--\ref*{enum:sim_wb_unif} reported in the Supplementary Material. We point out that when 
censoring is heavy and $p$ is small,
all methods exhibited poor performance in terms of empirical level. Problems occur also when the density $f_u$ decreases faster, as in Setting~\ref*{enum:sim_wb_unif}, for which the tests have a better control of the level when $p$ and $n$ are larger. For a more conservative test, one can consider a smaller $\epsilon$, which results in lower rejection rates as 
shown
in Figure~\ref*{supp_fig:rej_prop_wb_unif_trunc_p_06_eps}, but the difference is not large.

\begin{figure}%
	\centering
	\begin{subfigure}[b]{0.325\textwidth}
		\centering
		\includegraphics[width=\textwidth]{./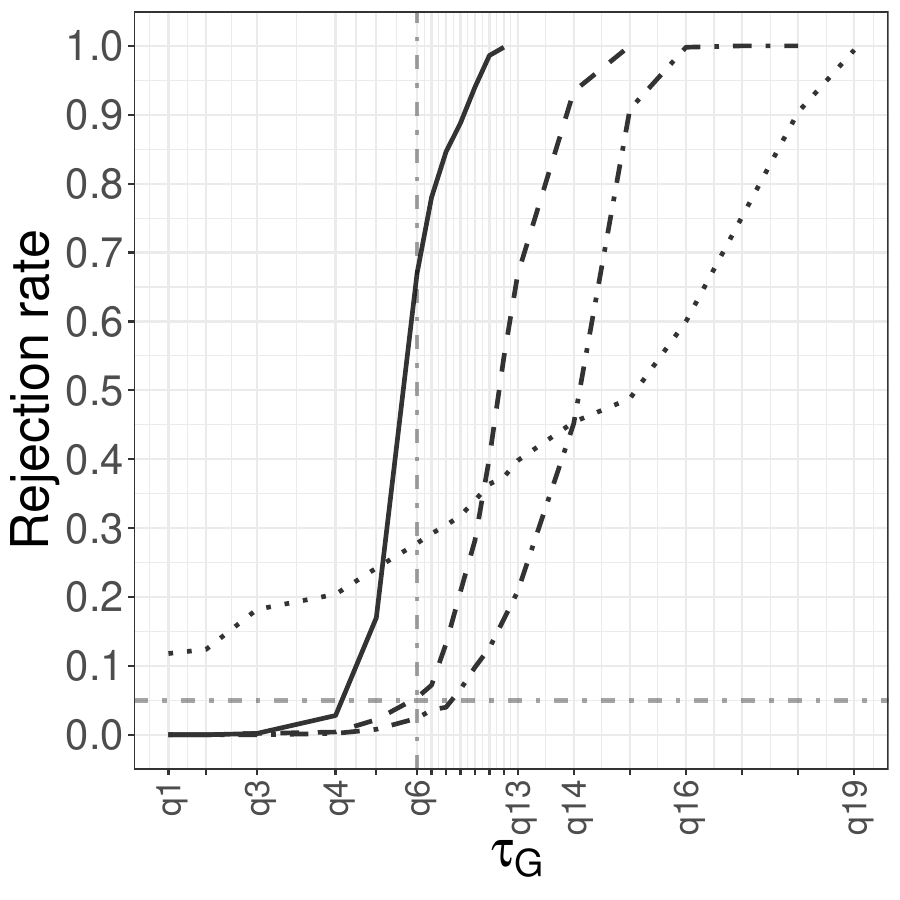}
	\end{subfigure}
	\hfill
	\begin{subfigure}[b]{0.325\textwidth}
		\centering
		\includegraphics[width=\textwidth]{./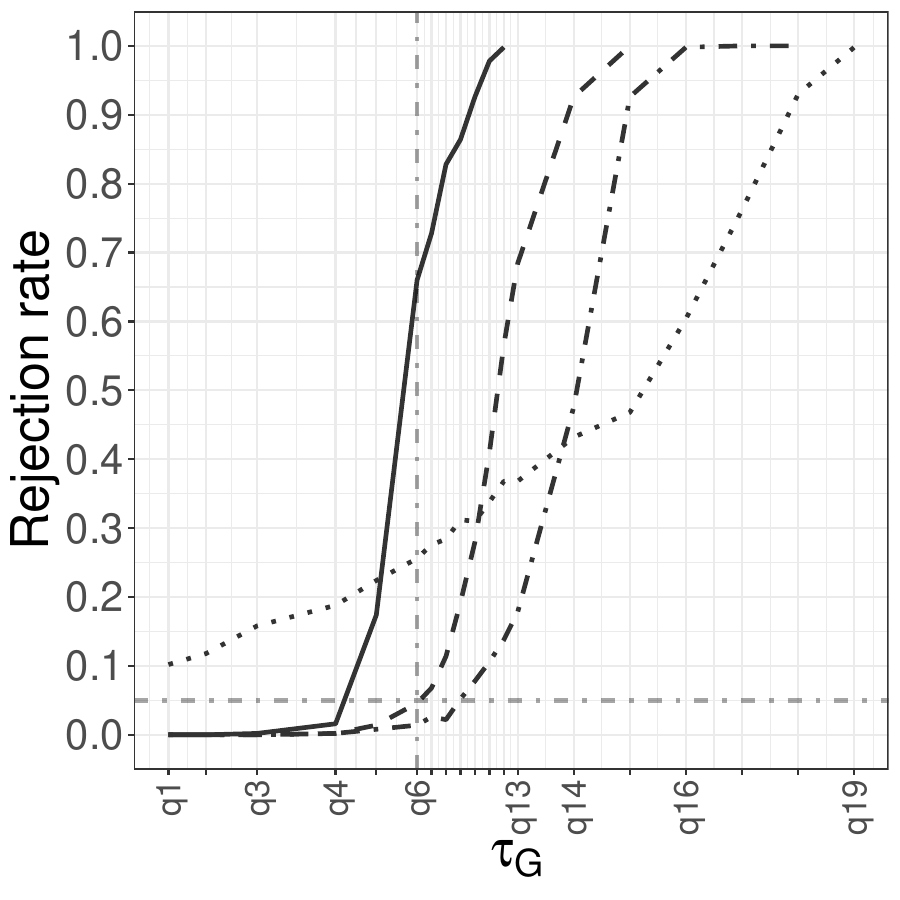}
	\end{subfigure}
	\hfill
	\begin{subfigure}[b]{0.325\textwidth}
		\centering
		\includegraphics[width=\textwidth]{./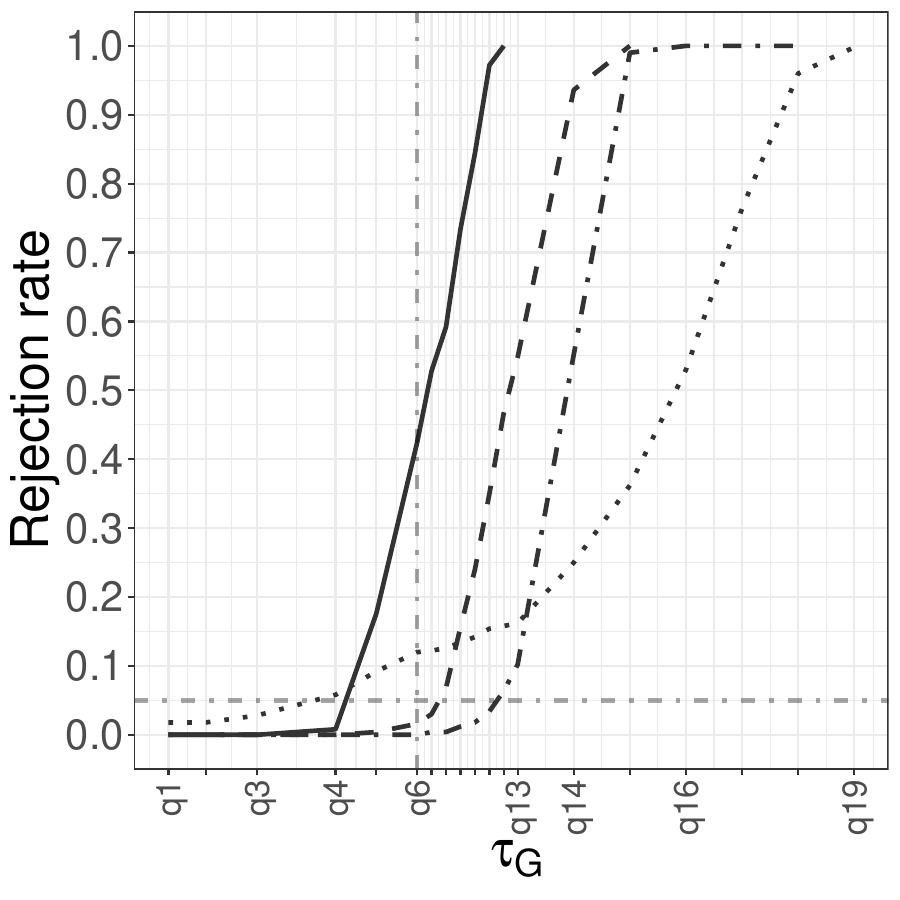}
	\end{subfigure}
	\caption{Rejection rate of the null hypothesis of insufficient follow-up for different methods (solid: $\hat{f}_{nh}^{SG}$ with $\tau_{1}\approx4.5569$, dashed: $\hat{f}_{nh}^{SG}$ with $\tau_{2}=1.25\tau_{1}$, dash-dotted: $\hat{f}_{nh}^{SG}$ with $\tau_{3}=1.5\tau_{1}$, dotted: $Q_{n}$) in Setting~\ref*{enum:sim_texp_unif} with $n=500$, $p=0.6$ and $\Delta G(\tau_{G})=0$ (left), $\Delta G(\tau_{G})=0.02$ (center), and $\Delta G(\tau_{G})=0.2$ (right).\label{fig:rej_prop_setting_4_by_jumpsize}}
\end{figure}%

Next we consider Setting~\ref*{enum:sim_texp_unif}, in which $f_u$ has compact support.
Figure~\ref{fig:rej_prop_setting_4_by_jumpsize} depicts the rejection rate of insufficient follow-up against $\tau_{G}$ when $n=500$ and $p=0.6$,
while varying $\Delta G(\tau_{G})$. In each subplot, 
we introduce 7 extra $\tau_{G}$'s ($q_{13}-q_{19}$ ranging from $\tau_{F_{u}}\approx4.6051$ to 6.9078), in addition to $q_{1}-q_{12}$, to investigate the power of the test. The vertical dash-dotted line indicates $\tau_{G}=q_{0.99}$ (99\% quantile of $F_{u}$), and the horizontal one indicates the significance level. We considered 3 different $\tau$'s, namely $\tau_{1}$ being the 99.95\% quantile of $F_{u}$ as described
previously,
$\tau_{2}=1.25\tau_{1}$, and $\tau_{3}=1.5\tau_{1}$. The rejection rates of $\hat{f}_{nh}^{SG}$ using $\tau_{1}$, $\tau_{2}$, $\tau_{3}$, and the $Q_{n}$ test are rendered by solid, dashed, dash-dotted, and dotted lines, respectively, in each subplot. 
We again observe that as $\tau$ increases, the test becomes more conservative. For $\tau=\tau_1$, the rejection rate 
starts being 
higher than the nominal level of $5\%$ when $\tau_G>q_4=q_{0.975}$, which is still reasonable in practice, while the $Q_n$ test has higher rejection rate even for shorter follow-up. For $\tau_2$ and $\tau_3$ the test behaves well in terms of level and has most of the time has also higher power compared to $Q_n$. Overall, we advice to take $\tau$ possibly larger than $\tau_{F_u}$ when it is believed that the $F_u$ has compact support based on practical knowledge of an 
approximation
for $\tau_{F_u}$.
Similar results are observed in Setting~\ref*{enum:sim_texp_exp_trunc}, see Figure~\ref*{supp_fig:rej_prop_texp_exp_trunc_p_06}. However, in that case the test $Q_n$ exhibits a good control of the level but has almost no power.

\section{Real data application}\label{sec:app}

In this section we illustrate the practical application of the proposed method and compare it with
the existing tests using
two breast cancer datasets, one with a sample size of 286 and another with 1233. Besides using the data with a follow-up cutoff at the end of the study, we construct `what-if' scenarios
in which the follow-up cutoff is earlier than the actual one. In particular, the subjects with the 
follow-up time greater than the hypothetical cutoff are considered as censored with a follow-up time equal to the hypothetical cutoff. Such scenarios help explore whether a test possesses a monotonicity behavior of deciding sufficient follow-up if the follow-up period is lengthened \citep{KY2007}. \textcite{MRSZ2024} also studied the monotonicity behavior of the $Q_n$ statistic when the maximum survival time is fixed while varying the maximum uncensored survival time.

For the proposed procedure ($\hat{f}_{n}^{G}$ and $\hat{f}_{nh}^{SG}$), 
$\tau_{G}$ is approximated by the maximum observed survival time $y_{(n)}$. We used two choices of $\epsilon$, namely 0.01 and 0.025, indicating the follow-up is considered as sufficient if $\tau_{G}>q_{0.99}$ 
when $\epsilon=0.01$.
The distribution of $\hat{f}_{nh}^{SG}$ is approximated using the 
bootstrap procedure described in Section~\ref{sec:bootstrap}
with 1000 bootstrap samples and bandwidths as mentioned in Section~\ref{sec:sim}.
We set $a$ and $c$ to 0.34 and $y_{(n)}$, respectively, for the test using
$\hat{f}_{n}^{G}$. The parameter $\gamma$ for $Q_{n}$ test is set at $1$. The critical value of $T_{n}$ test is computed using 1000 bootstrap samples.

\subsection{Breast cancer study I}
We analyze a dataset of an observational study consisting of 286 lymph-node-negative breast cancer patients. The patients received treatment in 1980--1995, whose age at treatment ranges from 26 to 83 years. The relapse-free survival time (in months), i.e. the time until death or occurrence of distant metastases, is considered. 
The uncensored survival time ranges from 2 to 80 months, and the censoring rate is around 62.59\%. The same dataset has been studied by \textcite{AKL2019} and \textcite{DM2024} under the mixture cure model settings. We refer the reader to \textcite{WangEtal2005} for a detailed description of the data.
\begin{figure}%
	\centering
	\includegraphics[scale=0.5]{./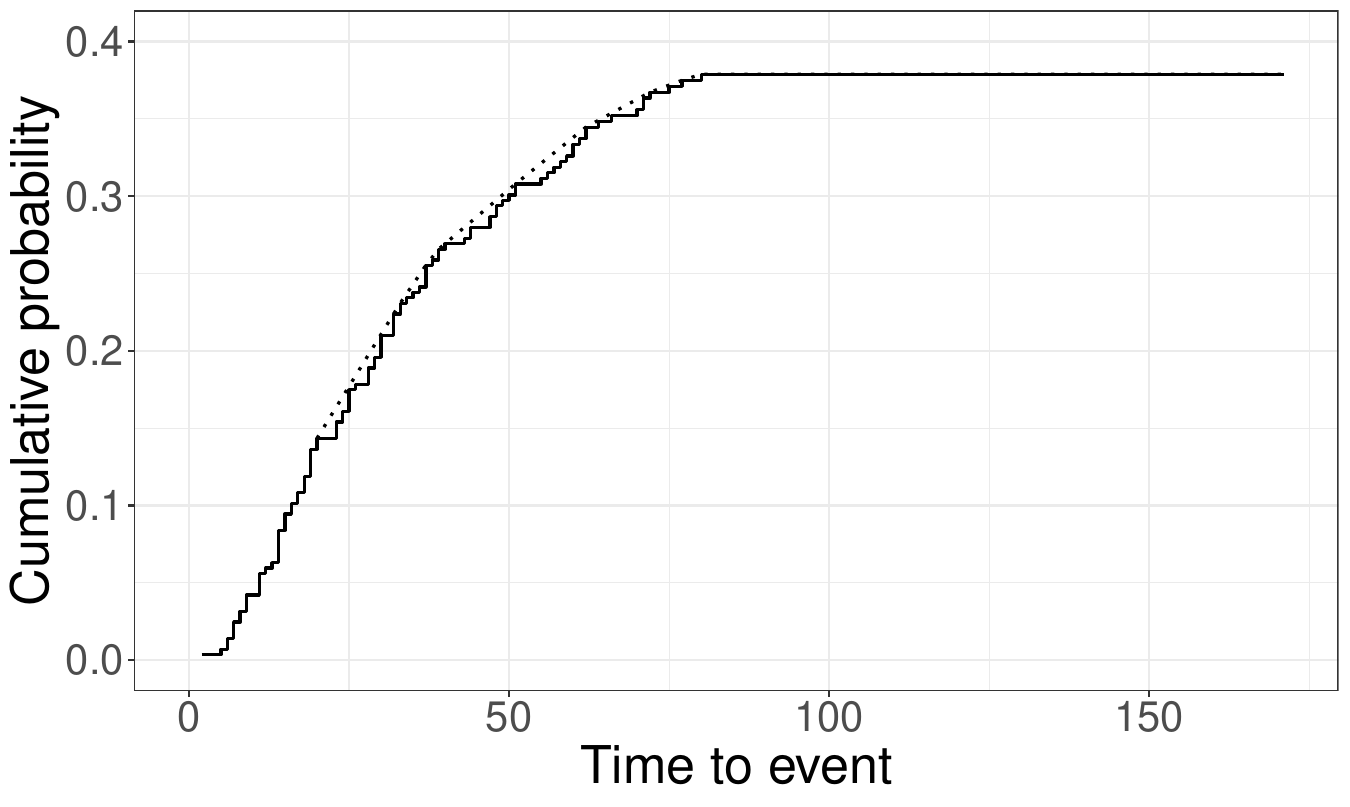}
	\caption{Kaplan--Meier estimate (solid) and its least concave majorant (dotted) for the breast cancer observational study data.\label{fig:kme_lcm_breast}}
\end{figure}%

Figure~\ref{fig:kme_lcm_breast} shows the Kaplan--Meier curve, which exhibits a long plateau. The maximum uncensored survival time $\tilde{y}_{(n)}$ and the maximum observed survival time $y_{(n)}$ are 80 and 171 months, respectively, and there are 152 censored observations between $\tilde{y}_{(n)}$ and $y_{(n)}$ (i.e. on the plateau). A graphical inspection based on the plateau of the Kaplan--Meier curve suggests the follow-up is sufficient. In addition, the Kaplan--Meier curve supports the assumption of decreasing density in the tail (the curve is close to concave at least starting from around $25$ months). One could use some data-driven method to estimate the point at which the density starts being decreasing, but that is not crucial here since it will not affect much the estimator at the end point. We apply the proposed testing procedure ($\hat{f}_{n}^{G}$ and $\hat{f}_{nh}^{SG}$)
, $\alpha_{n}$, $\tilde{\alpha}_{n}$, $Q_{n}$ and $T_n$ tests
for the whole follow-up and the hypothetical scenarios with shorter follow-up.
For the proposed procedure,
we used two choices of $\tau$, namely 240 and 360 months, meaning that we are considering the probability of relapse after 20 or 30 years to be almost zero.
\begin{table}%
	\begin{center}
		\caption{$p$-values of  testing $\tilde{H}_{0}$, $H_{0}$ or $\check{H}_{0}$ at different follow-up cutoffs for the breast cancer observational study (rounded to three decimal places). Bold indicates rejecting the null at the 5\% level or the estimated $p$-value $\alpha_n<0.05$ for $\alpha_n$ test.} \label{tab:breast_rfs_results}
		\renewcommand{\arraystretch}{1.2}
		\setlength{\tabcolsep}{2pt}
		\footnotesize
		\begin{tabular}{@{}c@{\extracolsep{1pt}}cccccccc@{\extracolsep{3pt}}ccc@{\extracolsep{1pt}}c}
	\multirow{4}{*}{\makecell{Cutoff\\(months)}} & \multicolumn{8}{c}{$\tilde{H}_{0}: q_{1-\epsilon}\geq\tau_{G}$}                                                                                                                                                                 & \multicolumn{3}{c}{$H_{0}: \tau_{F_{u}}\geq\tau_{G}$}                                         & $\check{H}_{0}: \tau_{F_{u}}\leq\tau_{G}$ \\
	\cline{2-9}\cline{10-12}\cline{13-13}
	& \multicolumn{2}{c}{\shortstack{\\$\epsilon=0.01$\\$\tau=240$}}                             & \multicolumn{2}{c}{\shortstack{\\$\epsilon=0.01$\\$\tau=360$}}                             & \multicolumn{2}{c}{\shortstack{\\$\epsilon=0.025$\\$\tau=240$}}                             & \multicolumn{2}{c}{\shortstack{\\$\epsilon=0.025$\\$\tau=360$}}        &	\multirow{2}{*}{$\alpha_{n}$} & \multirow{2}{*}{$\tilde{\alpha}_{n}$} & \multirow{2}{*}{$Q_n$} & \multirow{2}{*}{$T_n$}                    \\
	& $\hat{f}_{nh}^{SG}$ & \multicolumn{1}{c}{$\hat{f}_{n}^{G}$} & $\hat{f}_{nh}^{SG}$ & \multicolumn{1}{c}{$\hat{f}_{n}^{G}$} & $\hat{f}_{nh}^{SG}$ & \multicolumn{1}{c}{$\hat{f}_{n}^{G}$} & $\hat{f}_{nh}^{SG}$ & $\hat{f}_{n}^{G}$ &                               &                                       &                        &                                           \\ \hline
	90                      & 0.538              & \multicolumn{1}{c}{0.999}             & 0.542              & \multicolumn{1}{c}{0.999}             & 0.503              & \multicolumn{1}{c}{0.999}             & 0.530              & 0.999             & \textbf{0.002}                & 0.367                                 & 0.134                  & 0.089                                     \\
	110                     & 0.588              & \multicolumn{1}{c}{\textbf{0.000}}             & 0.649              & \multicolumn{1}{c}{\textbf{0.000}}    & 0.380              & \multicolumn{1}{c}{\textbf{0.000}}    & 0.542              & \textbf{0.000}    & \textbf{0.000}                & \textbf{0.049}                        & \textbf{0.002}         & 0.106                                     \\
	130                     & 0.144              & \multicolumn{1}{c}{\textbf{0.000}}    & 0.234              & \multicolumn{1}{c}{\textbf{0.000}}    & \textbf{0.035}     & \multicolumn{1}{c}{\textbf{0.000}}    & 0.131              & \textbf{0.000}    & \textbf{0.000}                & \textbf{0.001}                        & \textbf{0.000}         & 0.167                                     \\
	150                     & \textbf{0.007}     & \multicolumn{1}{c}{\textbf{0.000}}    & \textbf{0.015}     & \multicolumn{1}{c}{\textbf{0.000}}    & \textbf{0.001}     & \multicolumn{1}{c}{\textbf{0.000}}    & \textbf{0.006}     & \textbf{0.000}    & \textbf{0.000}                & \textbf{0.000}                        & \textbf{0.000}         & 0.303                                     \\
	171                     & \textbf{0.000}     & \multicolumn{1}{c}{\textbf{0.000}}    & \textbf{0.000}     & \multicolumn{1}{c}{\textbf{0.000}}    & \textbf{0.000}     & \multicolumn{1}{c}{\textbf{0.000}}    & \textbf{0.000}     & \textbf{0.000}    & \textbf{0.000}                & \textbf{0.000}                        & \textbf{0.000}         & --                \\
	\hline      
\end{tabular}
	\end{center}
\end{table}%

Table~\ref{tab:breast_rfs_results} reports the $p$-values of testing different hypotheses ($\tilde{H}_{0}$, $H_{0}$ or $\check{H}_{0}$) at different cutoffs. With a follow-up cutoff equal to the end of the study
(171 months),
the proposed procedure (using $\hat{f}_{n}^{G}$ and $\hat{f}_{nh}^{SG}$) rejects 
$\tilde{H}_{0}$
at the 5\% level when $\epsilon=0.01$ or 0.025, deciding the follow-up is sufficient for the two different $\tau$'s. The $\alpha_{n}$, $\tilde{\alpha}_{n}$, $Q_{n}$ tests also reject 
$H_{0}$ and deciding sufficient follow-up.
We note that the parameter $\epsilon^{\ast}$ for the $T_{n}$ test is set to $y_{(n)}$ since $2(y_{(n)}-\tilde{y}_{(n)})\geq y_{(n)}$ ($\epsilon^{\ast}$ instead of $\epsilon$ is used to avoid any confusion). If we consider $y_{(n)}$ and $\tilde{y}_{(n)}$ as the approximation of $\tau_{G}$ and $\tau_{F_{u}}$, respectively, then $\epsilon^{\ast}=y_{(n)}=171$ would possibly fall within the interval $(\tau_{G}-\tau_{F_{u}}, 2(\tau_{G}-\tau_{F_{u}})]\approx(91, 182]$. Hence, the asymptotic distribution of the test statistic $T_{n}$ is degenerated as mentioned in Section~3 of \textcite{XEK2023}, while the $T_{n}$ test still concludes the follow-up as sufficient without applying the bootstrap procedure.

By shortening the follow-up period, all methods possess a monotonicity behavior of deciding sufficient follow-up. For example, the proposed procedure using $\hat{f}_{nh}^{SG}$ with $\tau=240$ and $\epsilon=0.01$ decides sufficient follow-up at the 5\% level when the cutoff is at or after 150 months, while changes to conclude insufficient follow-up when the cutoff is less than 150 months. In terms of the cutoff at which the test decision changed, for example 150 for $\hat{f}_{nh}^{SG}$ and 130 for $\hat{f}_{n}^{G}$ when $\epsilon=0.01$, the procedure using $\hat{f}_{n}^{G}$ is less conservative than that using $\hat{f}_{nh}^{SG}$. This behavior was also observed in the simulation. The procedure using $\hat{f}_{n}^{G}$ or $\hat{f}_{nh}^{SG}$ is more conservative than $\alpha_{n}$, $\tilde{\alpha}_{n}$ and $Q_{n}$ tests for this dataset. It was also observed in the simulation, e.g. in Setting~1, that $\alpha_{n}$, $\tilde{\alpha}_{n}$ and $Q_{n}$ tests exhibit higher rejection rate when $\tau_{G}$ is slightly greater than $q_{1-\epsilon}$. For the choice of $\tau$, a larger $\tau$ results in a more conservative test procedure, as shown in the simulation. However, the conclusion does not change significantly when $\tau$ increases from 240 to 360. For $T_{n}$ test, we observe  that {it does not reject sufficient follow-up and} the $p$-values decrease from 0.3 to around 0.09 as the follow-up cutoff reduces.

\subsection{Breast cancer study II}\label{sec:app2}
We consider a dataset of breast cancer patients extracted from the Surveillance, Epidemiology and End Results (SEER) database. The database `Incidence-SEER Research Data, 8 Registries, Nov 2022 Sub (1975-2020)' with follow-up until December 2020 was selected and the breast cancer patients diagnosed in 1992 were extracted. This allows a maximum of 347 months (about 29 years) of follow-up. We excluded the observations with unknown or zero follow-up time, with unstaged or unknown cancer stage, and restricted the dataset to white patients with age less than 60 and with Grade II tumor grade at diagnosis. The event time of interest is the time to death because of breast cancer. This cohort consists of 1233 observations with follow-up ranging from 1 to 347 months and has a censoring rate of 75.91\%. A similar breast cancer dataset extracted from the SEER database was also studied by \textcite{TaiEtal2005}, which suggested the minimum required follow-up time for Grade II breast cancer is 26.3 years if a log-normal distributed uncured survival time is assumed.

\begin{figure}%
	\centering
	\includegraphics[scale=0.5]{./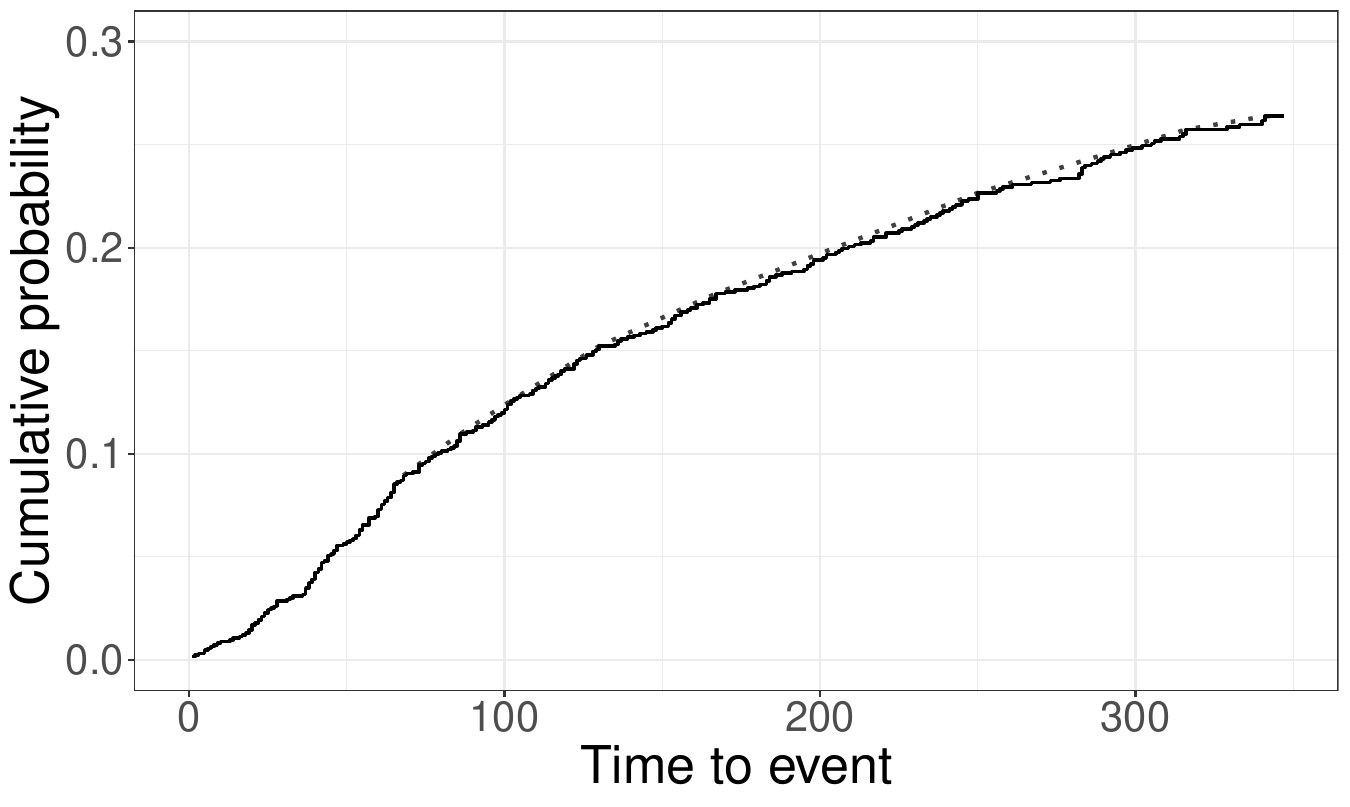}
	\caption{Kaplan--Meier estimate (solid) and its least concave majorant (dotted) for the SEER data.\label{fig:kme_lcm_seer}}
\end{figure}%

Figure~\ref{fig:kme_lcm_seer} depicts the Kaplan--Meier curve for the breast cancer data, which shows a short plateau. The maximum uncensored survival time $\tilde{y}_{(n)}$ and the maximum observed survival time $y_{(n)}$ are 341 and 347 months, respectively, and there are 295 censored observations on the plateau. Such visual inspection based on the Kaplan--Meier curve is not adequate to assess sufficient follow-up. We apply the introduced testing procedure (using $\hat{f}_{n}^{G}$ and $\hat{f}_{nh}^{SG}$),
$\alpha_{n}$, $\tilde{\alpha}_{n}$ and $Q_{n}$ tests to this dataset
for testing the null hypothesis of insufficient follow-up. We do not apply 
the $T_{n}$ test
given that a short plateau is observed from the Kaplan--Meier curve and there is no strong evidence of sufficient follow-up. Besides using the data with a follow-up cutoff in December 2020, we again  construct also hypothetical scenarios with shorter follow-up. 
For the proposed procedure (using $\hat{f}_{n}^{G}$ and $\hat{f}_{nh}^{SG}$), 
the Kaplan--Meier curve in Figure~\ref{fig:kme_lcm_seer} supports the assumption of decreasing density in the tail, at least starting from around 70 months. We again use two choices of $\tau$, namely 360 and 480 months. This means that we can consider the probability of a  cancer related death after 30 or 40 years as negligible.

\begin{table}%
	\begin{center}
		\caption{$p$-values of testing $\tilde{H}_{0}$ or $H_{0}$ at different follow-up cutoffs for the SEER dataset (rounded to three decimal places). Bold indicates rejecting the null at the 5\% level or the estimated $p$-value $\alpha_n<0.05$ for $\alpha_n$ test.}
		\label{tab:seer_results}
		\setlength{\tabcolsep}{5pt}
		\footnotesize
		\begin{tabular}{@{}c@{\extracolsep{3pt}}cccccccc@{\extracolsep{6pt}}ccc}
	\multirow{4}{*}{\makecell{Cutoff\\(months)}} & \multicolumn{8}{c}{$\tilde{H}_{0}: q_{1-\epsilon}\geq\tau_{G}$}                                                                                                                                                                 & \multicolumn{3}{c}{$H_{0}: \tau_{F_{u}}\geq\tau_{G}$}                                          \\
	\cline{2-9}\cline{10-12}
%	& \multicolumn{4}{c}{$\epsilon=0.01$}                                                                                      & \multicolumn{4}{c}{$\epsilon=0.025$}                                                                & \multirow{3}{*}{$\alpha_{n}$} & \multirow{3}{*}{$\tilde{\alpha}_{n}$} & \multirow{3}{*}{$Q_n$} \\
	& \multicolumn{2}{c}{\shortstack{\\$\epsilon=0.01$\\$\tau=360$}}                             & \multicolumn{2}{c}{\shortstack{\\$\epsilon=0.01$\\$\tau=480$}}                             & \multicolumn{2}{c}{\shortstack{\\$\epsilon=0.025$\\$\tau=360$}}                             & \multicolumn{2}{c}{\shortstack{\\$\epsilon=0.025$\\$\tau=480$}}        & \multirow{2}{*}{$\alpha_{n}$} & \multirow{2}{*}{$\tilde{\alpha}_{n}$} & \multirow{2}{*}{$Q_n$} \\
	& $\hat{f}_{nh}^{SG}$ & \multicolumn{1}{c}{$\hat{f}_{n}^{G}$} & $\hat{f}_{nh}^{SG}$ & \multicolumn{1}{c}{$\hat{f}_{n}^{G}$} & $\hat{f}_{nh}^{SG}$ & \multicolumn{1}{c}{$\hat{f}_{n}^{G}$} & $\hat{f}_{nh}^{SG}$ & $\hat{f}_{n}^{G}$ &                               &                                       &                        \\ \hline
	180                     & 0.991              & \multicolumn{1}{c}{0.999}             & 0.992              & \multicolumn{1}{c}{0.999}             & 0.988              & \multicolumn{1}{c}{0.999}             & 0.990              & 0.999             & 0.368                         & 0.368                                 & 0.75                   \\
	200                     & 0.999              & \multicolumn{1}{c}{0.999}             & 0.999              & \multicolumn{1}{c}{0.999}             & 0.999              & \multicolumn{1}{c}{0.999}             & 0.999              & 0.999             & \textbf{0.050}                & 0.135                                 & 0.316                  \\
	220                     & 0.999              & \multicolumn{1}{c}{0.999}             & 0.999              & \multicolumn{1}{c}{0.999}             & 0.999              & \multicolumn{1}{c}{0.999}             & 0.999              & 0.999             & \textbf{0.050}                & 0.135                                 & 0.563                  \\
	240                     & 0.999              & \multicolumn{1}{c}{0.999}             & 0.999              & \multicolumn{1}{c}{0.999}             & 0.999              & \multicolumn{1}{c}{0.999}             & 0.999              & 0.999             & 0.368                         & 0.368                                 & 0.563                  \\
	260                     & 0.999              & \multicolumn{1}{c}{0.999}             & 0.999              & \multicolumn{1}{c}{0.999}             & 0.999              & \multicolumn{1}{c}{0.999}             & 0.999              & 0.999             & 0.135                         & 0.368                                 & 0.422                  \\
	280                     & 0.881              & \multicolumn{1}{c}{0.999}             & 0.906              & \multicolumn{1}{c}{0.999}             & 0.788              & \multicolumn{1}{c}{0.999}             & 0.881              & 0.999             & 0.135                         & 0.368                                 & 0.563                  \\
	300                     & 0.999              & \multicolumn{1}{c}{0.999}             & 0.999              & \multicolumn{1}{c}{0.999}             & 0.999              & \multicolumn{1}{c}{0.999}             & 0.999              & 0.999             & 0.368                         & 0.368                                 & 0.75                   \\
	320                     & 0.999              & \multicolumn{1}{c}{0.999}             & 0.999              & \multicolumn{1}{c}{0.999}             & 0.998              & \multicolumn{1}{c}{0.999}             & 0.999              & 0.999             & \textbf{0.018}                & 0.135                                 & 0.422                  \\
	340                     & 0.875              & \multicolumn{1}{c}{0.999}             & 0.983              & \multicolumn{1}{c}{0.999}             & 0.305              & \multicolumn{1}{c}{0.999}             & 0.969              & 0.999             & 0.999                             & 0.368                                 & 0.75                   \\
	347                     & 0.739              & \multicolumn{1}{c}{0.999}             & 0.947              & \multicolumn{1}{c}{0.999}             & \textbf{0.007}     & \multicolumn{1}{c}{\textbf{0.001}}    & 0.931              & 0.999             & 0.135                         & 0.368                                 & 0.563                 \\
	\hline
\end{tabular}
	\end{center}
\end{table}%
Table~\ref{tab:seer_results}
reports the $p$-values of  testing different hypotheses ($\tilde{H}_{0}$ or $H_{0}$) of sufficient follow-up at different follow-up cutoffs. At 5\% level and with a 
cutoff in December 2020 (347 months), the proposed procedure ($\hat{f}_{n}^{G}$ and $\hat{f}_{nh}^{SG}$) does not reject 
$\tilde{H}_{0}$
when $\epsilon=0.01$, 
deciding
the follow-up as insufficient. However, the follow-up is considered sufficient when considering a more relaxed notion of sufficient follow-up for $\epsilon=0.025$ and $\tau=360$. The $\alpha_{n}$, $\tilde{\alpha}_{n}$ and $Q_{n}$ tests do not reject 
$H_{0}$ and conclude insufficient follow-up.
If the follow-up is shortened, all tests  apart from $\alpha_{n}$ do not reject the hypothesis of insufficient follow-up. Moreover, for the proposed procedure the decision remains the same for all considered choices of $\epsilon$ and $\tau$.

\section{Conclusion and discussion}
In this paper, we proposed a new test for the null hypothesis of insufficient follow-up based on a more relaxed and realistic notion of `practically' sufficient follow-up, meaning that the probability for the event to happen after the end of the study is smaller than a prespecified threshold. The test relies on the assumption that the density function of the survival times is non-increasing in the tail region and two estimators under such shape constraint are considered. Despite guarantees on the asymptotic level of the test, we observe that in practice the test based on the smoothed Grenander estimator and a bootstrap procedure behaves the best. An extensive simulation study showed that the proposed procedure performs well in terms of both level and power, outperforming most of the time  the existing methods. Nevertheless, there is a trade off between the simplicity of the existing methods such as the $Q_n$ test and the accuracy of the proposed method. In addition, we warn that for scenarios with very high censoring rate, low uncured fraction and small sample size the empirical level is larger than the nominal one, but this is a problem even for the other existing tests. This is also observed when the event time follows a distribution that has a density decreases much faster than that of the standard exponential distribution. Under those scenarios with high censoring rate and small uncured fraction, a larger sample size is required for better test performance. 

The relaxed notion of `practically' sufficient follow-up involves the choice of $\epsilon$. %As noted by a referee, 
An overly conservative choice of $\epsilon$ (i.e., when $\epsilon$ is close to 0) may lead to a conclusion of insufficient follow-up, even when a cure model would have been appropriate. Conversely, an anti-conservative choice of $\epsilon$ could result in incorrectly concluding sufficient follow-up.
We emphasize that $\epsilon$ is not chosen based on the application at hand. Although the relaxed notion of sufficient follow-up is defined based on the probability of the event happening after the study ends is smaller than $\epsilon$, which still guarantees good identification of the cure fraction, $\epsilon$ does not depend on the event time distribution or the end of study.
\textcite{SO2023} showed through simulations that improvements in mean squared error and the coverage of the confidence interval for the cure fraction estimate diminish as the follow-up time exceeds the 99\%-quantile of the event time distribution, and suggested that the follow-up can be considered sufficient when the probability of the event occurring after the study ends is less than 1\% (i.e., $\epsilon<0.01$). This motivates our choice of $\epsilon=0.01$.%

The only user-specified parameter $\tau$ for the proposed method, such that it is almost impossible for the event to happen after $\tau$, should rely on medical knowledge. When such information is not available or uncertain, we suggest to use a larger $\tau$ for a more conservative test and  to perform a sensitivity analysis on $\tau$.
Furthermore, an interesting extension of the proposed test would be to consider testing sufficient follow-up in the presence of covariates. 

%	\bibliography{references.bib}
	\begingroup
	\setlength\bibitemsep{\parsep}
	\setlength{\bibhang}{\leftmargini}
	\printbibliography
	\endgroup
\end{refsection}	

	%% supplementary materials
	\begin{refsection}
	\pagebreak
\begin{center}
	\textbf{\large Supplementary Material: \\
		Testing for sufficient follow-up in survival data with a cure fraction\\[5pt]
		Tsz Pang Yuen and Eni Musta}
\end{center}

\setcounter{section}{0}
\setcounter{equation}{0}
\setcounter{figure}{0}
\setcounter{table}{0}
\setcounter{algorithm}{0}
\setcounter{page}{1}
\makeatletter
\renewcommand{\thesection}{S\arabic{section}}
\renewcommand{\theequation}{S\arabic{equation}}
\renewcommand{\thefigure}{S\arabic{figure}}
\renewcommand{\thetable}{S\arabic{table}}
\renewcommand{\thealgorithm}{S\arabic{algorithm}}

\section{Simulation study}
\subsection{Simulation settings}

Tables \ref*{supp_tab:exp_unif_cens_rate}--\ref*{supp_tab:texp_exp_trunc_cens_rate} report the censoring rates over different values of $\tau_G$ for Settings \ref*{enum:sim_exp_unif}--\ref*{enum:sim_texp_exp_trunc}.
\begin{table}[h]
	{
		\centering
		\caption{Censoring rates for Setting 1.\label{supp_tab:exp_unif_cens_rate}}
		{
			\small
			\begin{tabular}{ccc}
	$p$                  & $\Delta G(\tau_{G})$ & Censoring rate     \\ \hline
	\multirow{4}{*}{0.2} & 0                    & 82.91\% -- 87.84\% \\
	& 0.02                 & 82.85\% -- 87.73\% \\
	& 0.05                 & 82.76\% -- 87.56\% \\
	& 0.2                  & 82.34\% -- 86.69\% \\ \hline
	\multirow{4}{*}{0.6} & 0                    & 48.74\% -- 63.52\% \\
	& 0.02                 & 48.56\% -- 63.16\% \\
	& 0.05                 & 48.29\% -- 62.64\% \\
	& 0.2                  & 47.02\% -- 60.04\% \\ \hline
	\multirow{4}{*}{0.8} & 0                    & 31.64\% -- 51.35\% \\
	& 0.02                 & 31.39\% -- 50.87\% \\
	& 0.05                 & 31.05\% -- 50.18\% \\
	& 0.2                  & 29.33\% -- 46.72\% \\ \hline
\end{tabular}
		}\par
	}
\end{table}

\begin{table}[h]
	{
		\centering
		\caption{Censoring rates for Setting 2.\label{supp_tab:exp_exp_trunc_cens_rate}}
		{
			\small
			\begin{tabular}{ccc}
	$p$                  & $\lambda$ & Censoring rate     \\ \hline
	\multirow{3}{*}{0.2} & 0.4       & 91.13\% -- 91.18\% \\
	& 1         & 86.69\% -- 87.12\% \\
	& 5         & 81.86\% -- 83.30\% \\ \hline
	\multirow{3}{*}{0.6} & 0.4       & 73.33\% -- 73.49\% \\
	& 1         & 60.02\% -- 61.29\% \\
	& 5         & 45.56\% -- 49.87\% \\ \hline
	\multirow{3}{*}{0.8} & 0.4       & 64.46\% -- 64.66\% \\
	& 1         & 46.65\% -- 48.36\% \\
	& 5         & 27.34\% -- 33.09\% \\ \hline
\end{tabular}
		}\par
	}
\end{table}

\begin{table}[h]
	{
		\centering
		\caption{Censoring rates for Setting 3.\label{supp_tab:wb_unif_cens_rate}}
		{
			\small
			\begin{tabular}{ccc}
	$p$                  & $\Delta G(\tau_{G})$ & Censoring rate     \\ \hline
	\multirow{4}{*}{0.2} & 0         & 80.83\% -- 85.08\% \\
	& 0.02      & 80.82\% -- 85.02\% \\
	& 0.05      & 80.80\% -- 84.92\% \\
	& 0.2       & 80.68\% -- 84.46\% \\ \hline
	\multirow{4}{*}{0.6} & 0         & 42.57\% -- 55.26\% \\
	& 0.02      & 42.53\% -- 55.08\% \\
	& 0.05      & 42.45\% -- 54.80\% \\
	& 0.2       & 42.08\% -- 53.41\% \\ \hline
	\multirow{4}{*}{0.8} & 0         & 23.39\% -- 40.28\% \\
	& 0.02      & 23.32\% -- 40.03\% \\
	& 0.05      & 23.22\% -- 39.66\% \\
	& 0.2       & 22.73\% -- 37.81\% \\ \hline
\end{tabular}
		}\par
	}
\end{table}

\begin{table}[h]
	{
		\centering
		\caption{Censoring rates for Setting 4.\label{supp_tab:texp_unif_cens_rate}}
		{
			\small
			\begin{tabular}{ccc}
	$p$                  & $\Delta G(\tau_{G})$ & Censoring rate     \\ \hline
%	\multirow{4}{*}{0.2} & 0         & 82.81\% -- 87.97\% \\
%	& 0.02      & 82.76\% -- 87.84\% \\
%	& 0.05      & 82.68\% -- 87.67\% \\
%	& 0.2       & 82.25\% -- 86.78\% \\ \hline
	\multirow{4}{*}{0.6} & 0         & 48.38\% -- 63.85\% \\
	& 0.02      & 48.22\% -- 63.48\% \\
	& 0.05      & 47.96\% -- 62.94\% \\
	& 0.2       & 46.71\% -- 60.28\% \\ \hline
%	\multirow{4}{*}{0.8} & 0         & 31.13\% -- 51.75\% \\
%	& 0.02      & 30.90\% -- 51.26\% \\
%	& 0.05      & 30.56\% -- 50.54\% \\
%	& 0.2       & 28.88\% -- 47.02\% \\ \hline
\end{tabular}
		}\par
	}
\end{table}

\begin{table}[h]
	{
		\centering
		\caption{Censoring rates for Setting 5.\label{supp_tab:texp_exp_trunc_cens_rate}}
		{
			\small
			\begin{tabular}{ccc}
	$p$                  & $\lambda_{C}$ & Censoring rate     \\ \hline
%	\multirow{2}{*}{0.2} & 0.5           & 81.76\% -- 83.29\% \\
%	& 3             & 87.40\% -- 87.77\% \\ \hline
	\multirow{2}{*}{0.6} & 0.5           & 45.24\% -- 49.75\% \\
	& 3             & 62.14\% -- 63.23\% \\ \hline
%	\multirow{2}{*}{0.8} & 0.5           & 26.96\% -- 32.93\% \\
%	& 3             & 49.50\% -- 50.94\% \\ \hline
\end{tabular}
		}\par
	}
\end{table}

\clearpage
\subsection{Simulation results}
$q_{1}$, $q_{2}$, $q_{3}$, $q_{4}$, $q_{6}$, and $q_{12}$ are the 90\%, 92.5\%, 95\%, 97.5\%, 99\%, and 99.9\% quantiles of $F_{u}$, resp.; $q_{5}$ is the mid-point between $q_{4}$ and $q_{6}$; $q_{7}, \cdots, q_{11}$ are 5 evenly separated points between $q_{6}$ and $q_{12}$ in the figures in this section. Table \ref*{supp_tab:actual_tauc} reports their actual values for Settings 1--5.

\begin{table}[h]
	{
		\centering
		\caption{Actual values of $q_{1}$, ..., $q_{12}$ for Settings 1--5.\label{supp_tab:actual_tauc}}
		{
			\small
			\begin{tabular}{lccccc}
	\multirow{2}{*}{$\tau_{G}$} & \multicolumn{5}{c}{Setting}                 \\
	& 1      & 2      & 3       & 4      & 5      \\ \hline
	$q_{1}$                     & 2.3026 & 0.4605 & 7.9528  & 2.2164 & 0.4433 \\
	$q_{2}$                     & 2.5903 & 0.5181 & 10.0642 & 2.4740 & 0.4948 \\
	$q_{3}$                     & 2.9957 & 0.5991 & 13.4616 & 2.8218 & 0.5644 \\
	$q_{4}$                     & 3.6889 & 0.7378 & 20.4117 & 3.3596 & 0.6719 \\
	$q_{5}$                     & 4.1470 & 0.8294 & 26.1116 & 3.6383 & 0.7277 \\
	$q_{6}$                     & 4.6052 & 0.9210 & 31.8114 & 3.9170 & 0.7834 \\
	$q_{7}$                     & 4.9889 & 0.9978 & 38.4388 & 4.0160 & 0.8032 \\
	$q_{8}$                     & 5.3727 & 1.0745 & 45.0661 & 4.1149 & 0.8230 \\
	$q_{9}$                     & 5.7565 & 1.1513 & 51.6935 & 4.2139 & 0.8428 \\
	$q_{10}$                    & 6.1402 & 1.2280 & 58.3209 & 4.3129 & 0.8626 \\
	$q_{11}$                    & 6.5240 & 1.3048 & 64.9483 & 4.4118 & 0.8824 \\
	$q_{12}$                    & 6.9078 & 1.3816 & 71.5756 & 4.5108 & 0.9022 \\ \hline
\end{tabular}
		}\par
	}
\end{table}

\paragraph{Remark}
The rejection rate for the $\alpha_n$ test is computed using the decision rule: $H_0$ is rejected if $\alpha_n<0.05$ as originally proposed in \textcite{MZ1992,MZ1994} as such decision rule is still being widely used in practice.
As mentioned in \citet[Page 85]{MZ1996}, it is more appropriate to compare $\alpha_n$ with the 5\% quantile of its (limiting) distribution.
Therefore the type I error rate of the $\alpha_n$ test reported in this section is inflated. The results for the $\alpha_n$ test here are provided to increase awareness within the community of the risks associated with the use of this decision rule.

\subsubsection{Setting \ref*{enum:sim_exp_unif}}
Figures \ref*{supp_fig:rej_prop_exp_unif_p_02}--\ref*{supp_fig:rej_prop_exp_unif_p_08} show the rejection rate {of insufficient follow-up} against $\tau_{G}$ for Setting \ref*{enum:sim_exp_unif}, each figure with different $p$. To study the effect of the choice of $\tau$ on the rejection rate, we considered 3 different $\tau$'s, namely $\tau_{1}\approx7.601$ (the 99.95\% quantile of $F_{u}$), $\tau_{2}\approx9.210$ (the 99.99\% quantile of $F_{u}$), and $\tau_{3}=(\tau_{1}+\tau_{2})/2\approx8.406$. The rejection rate of $\hat{f}_{nh}^{SG}$ using $\tau_{1}$, $\tau_{2}$, and $\tau_{3}$ are rendered by solid, long dashed, and dashed lines, respectively, in Figure \ref*{supp_fig:rej_prop_exp_unif_p_06_tau}. In summary, larger $\tau$ results in a more conservative procedure, i.e. better control on the empirical level {but with some loss of} empirical power.
\begin{figure}[H]
	\begin{center}
		\includegraphics[width=\textwidth]{./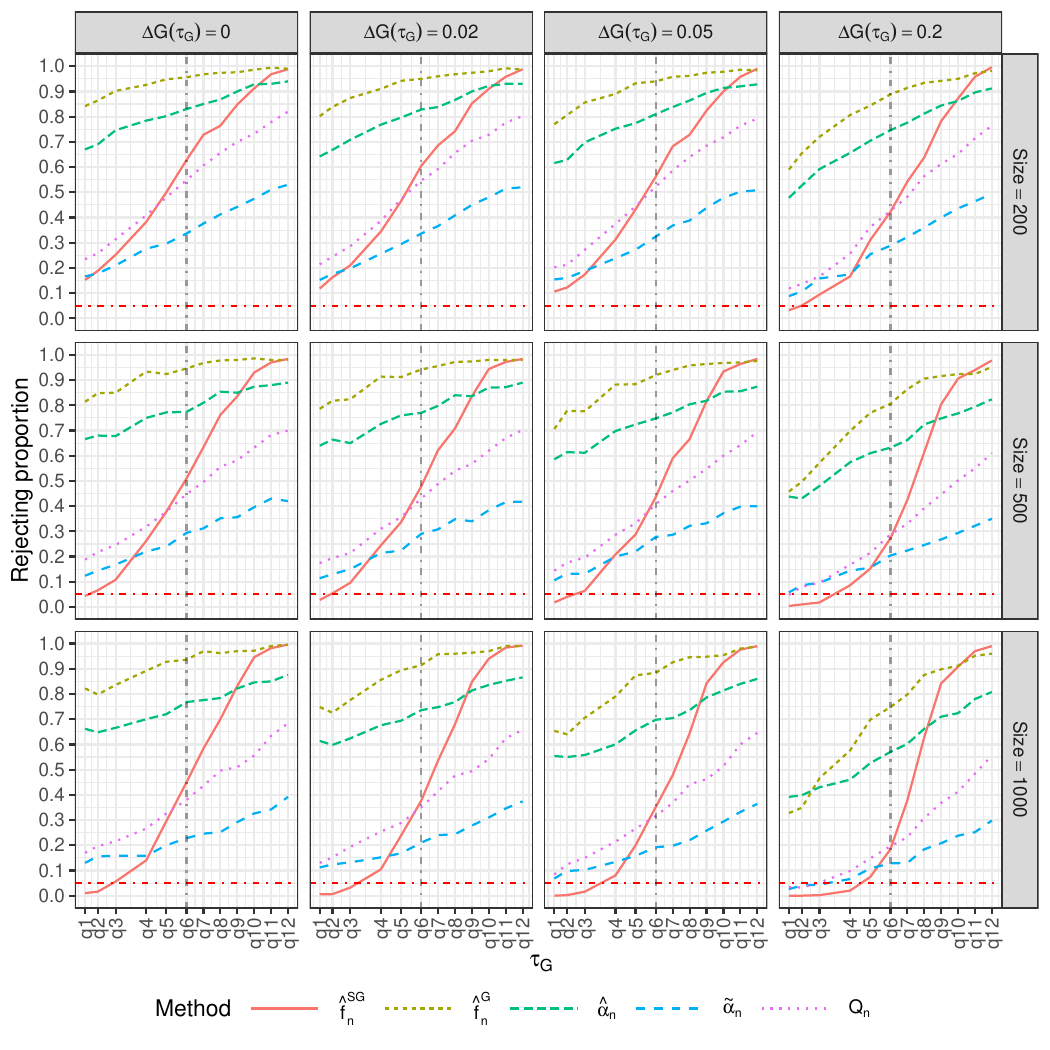}
		\caption{{Rejection rate of the null hypothesis of insufficient follow-up for different methods in} Setting \ref*{enum:sim_exp_unif} when $p=0.2$ (uncured fraction).\label{supp_fig:rej_prop_exp_unif_p_02}}
	\end{center}
\end{figure}

\begin{figure}[H]
	\begin{center}
		\includegraphics[width=\textwidth]{./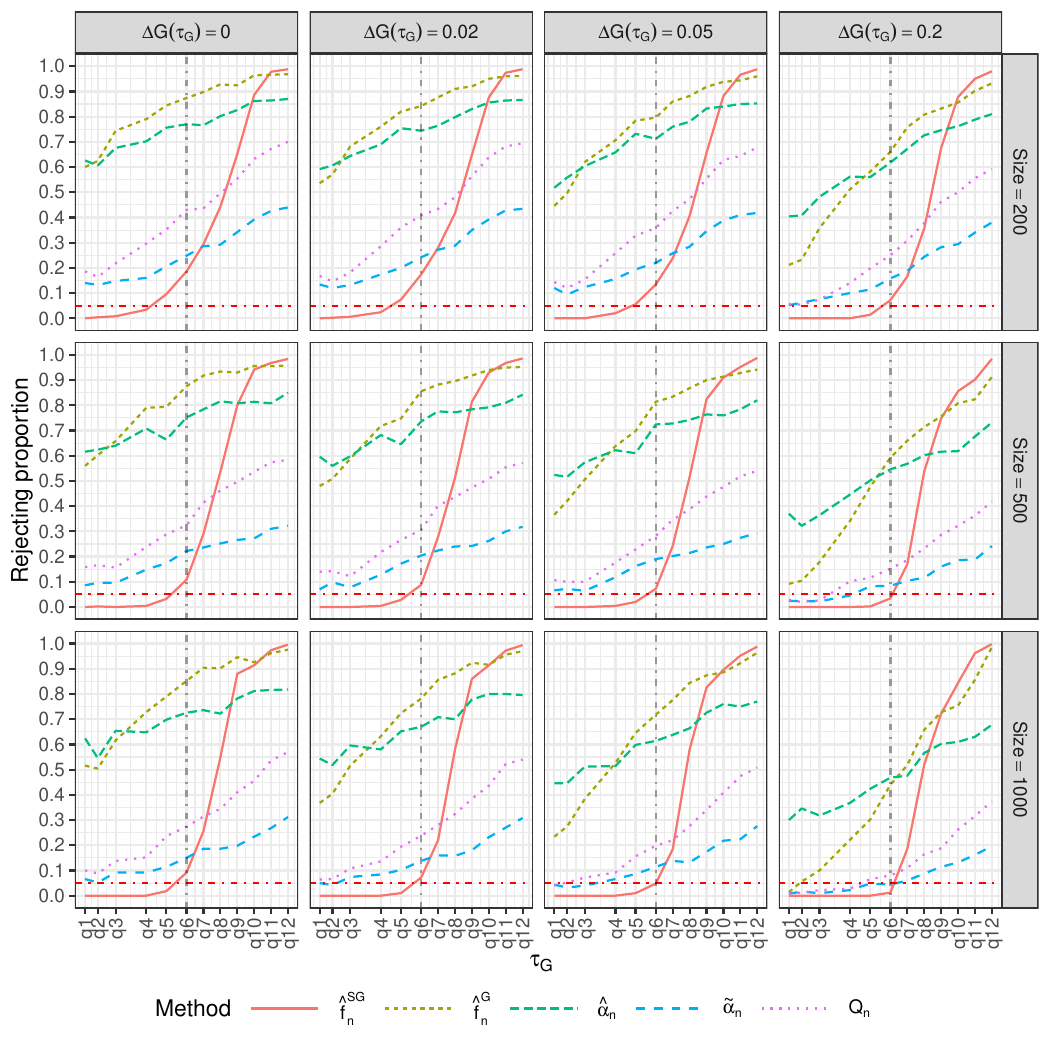}
		\caption{{Rejection rate of the null hypothesis of insufficient follow-up for different methods in} Setting \ref*{enum:sim_exp_unif} when $p=0.6$ (uncured fraction).\label{supp_fig:rej_prop_exp_unif_p_06}}
	\end{center}
\end{figure}

\begin{figure}[H]
	\begin{center}
		\includegraphics[width=\textwidth]{./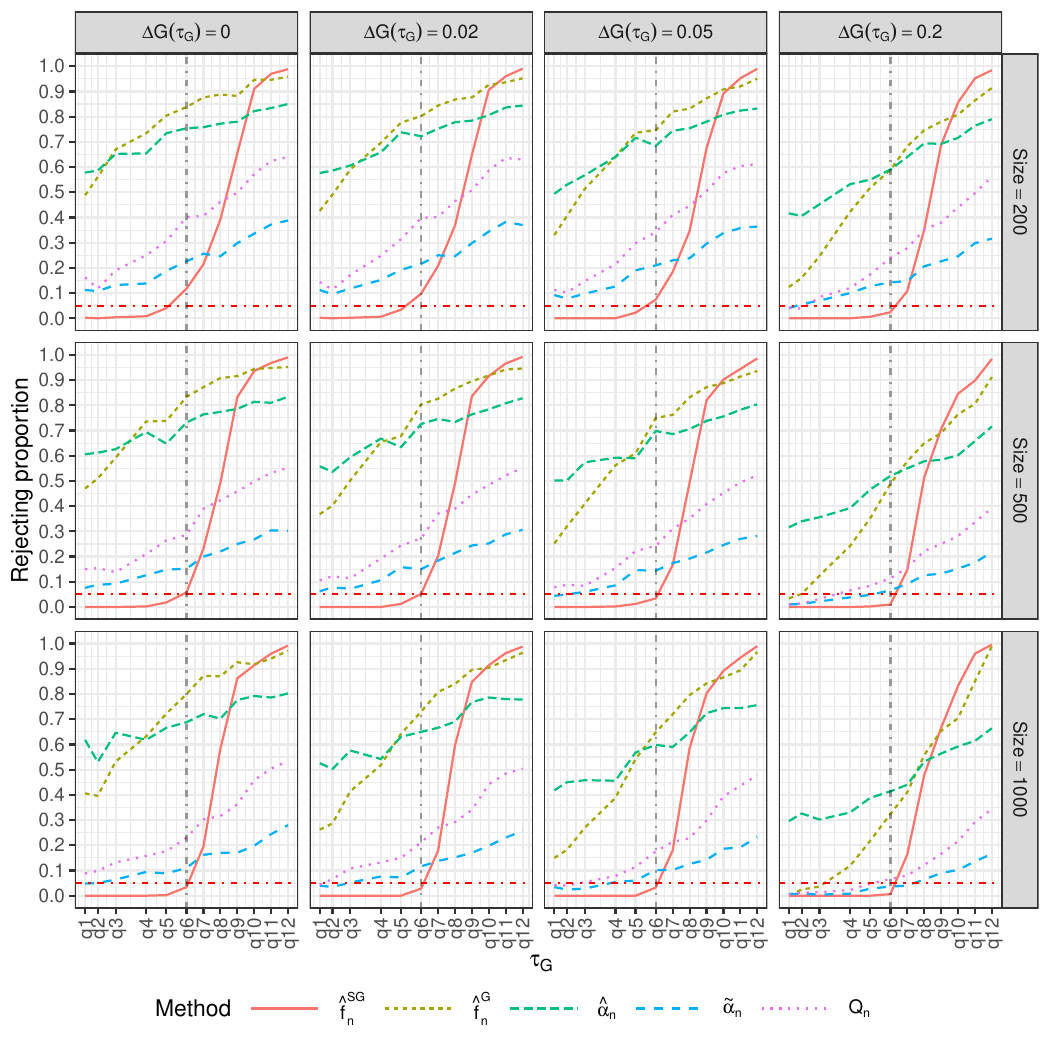}
		\caption{{Rejection rate of the null hypothesis of insufficient follow-up for different methods in} Setting \ref*{enum:sim_exp_unif} when $p=0.8$ (uncured fraction).\label{supp_fig:rej_prop_exp_unif_p_08}}
	\end{center}
\end{figure}

\clearpage
\subsubsection*{Sensitivity of $\tau$}
	\begin{figure}[H]
	\begin{center}
		\includegraphics[width=\textwidth]{./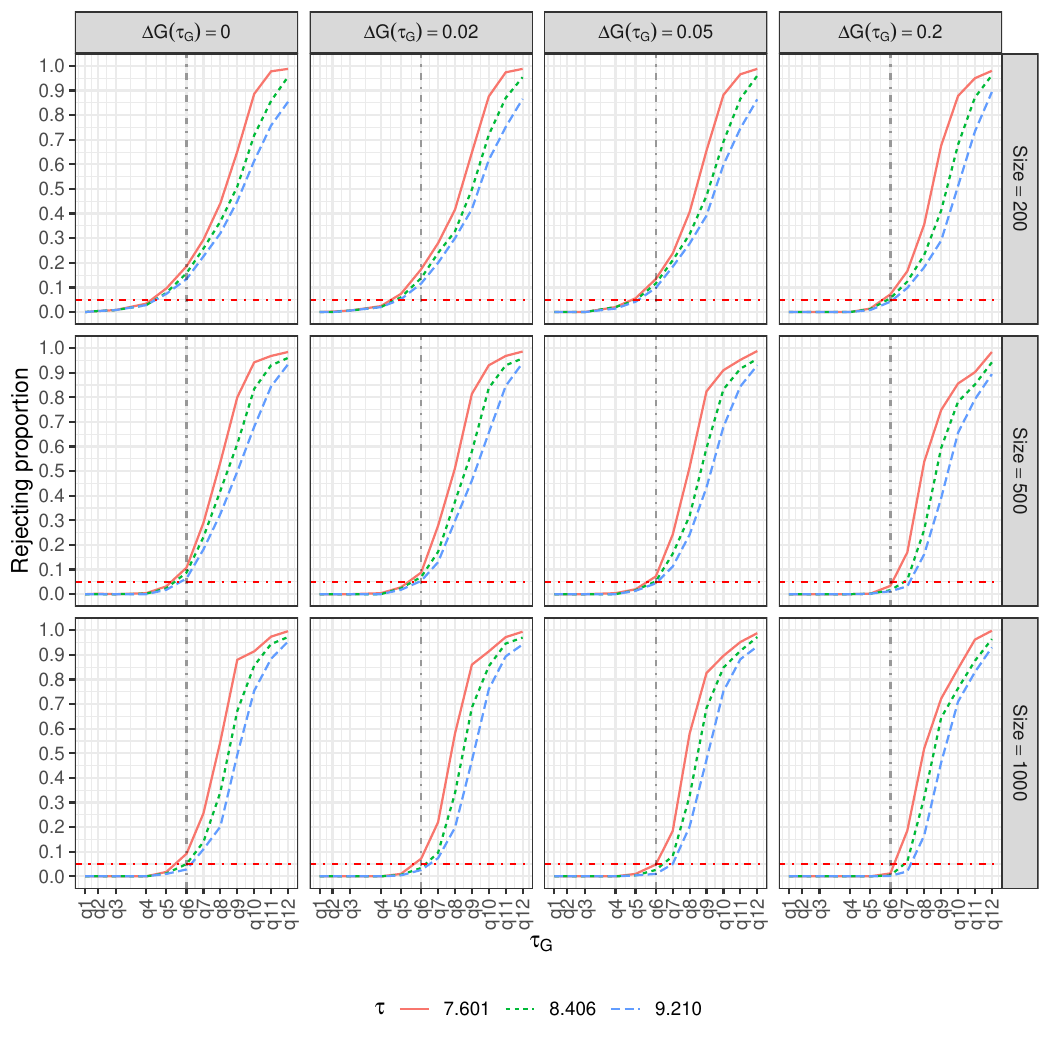}
		\caption{{Rejection rate of the null hypothesis of insufficient follow-up for the test based on $\hat{f}_{nh}^{SG}$ with different $\tau$} in Setting \ref*{enum:sim_exp_unif} when $p=0.6$ (uncured fraction).\label{supp_fig:rej_prop_exp_unif_p_06_tau}}
	\end{center}
\end{figure}

\clearpage
\subsubsection*{Comparison with RECeUS}
Figure \ref*{supp_fig:rej_prop_exp_unif_p_06_RECeUS} shows the rejection rate {of insufficient follow-up} against $\tau_{G}$ for Setting \ref*{enum:sim_exp_unif}, when $n=500$, $p=0.6$, and $\Delta G(\tau_G)=0.02$. The method RECeUS-AIC by \textcite{SO2023} is included for comparison in addition to $\hat{f}_{nh}^{SG}$, $\hat{f}_{n}^{G}$, $Q_n$. 
For the RECeUS-AIC procedure, we use the same thresholds specified in Section~2.1 of \textcite{SO2023} (2.5\% for $\hat{\pi}_n$ and 5\% for $\hat{r}_n$).
In summary, $\hat{f}_{nh}^{SG}$ performs better in terms of empirical level, while the RECeUS-AIC procedure demonstrates higher power in the region where $\tau_G$ lies between $q_6$ and $q_{10}$. When comparing the $Q_n$ test with RECeUS-AIC, the latter controls the level more effectively when $\tau_G$ is between $q_1$ and $q_3$, but not in the region from $q_3$ to $q_6$, although RECeUS-AIC exhibits higher empirical power than the $Q_n$ test.
\begin{figure}[H]
	\begin{center}
		\includegraphics[width=0.9\textwidth]{./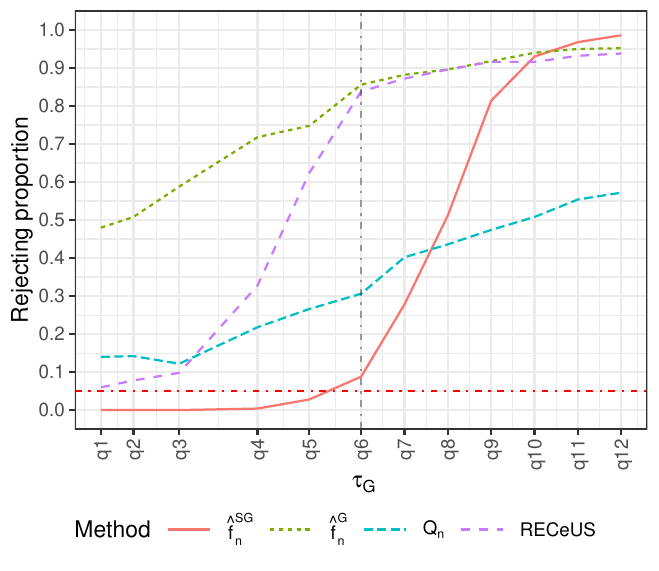}
		\caption{{Rejection rate of the null hypothesis of insufficient follow-up for different methods ($\hat{f}_{nh}^{SG}$, $\hat{f}_{n}^{G}$, $Q_n$ and RECeUS-AIC) in} Setting \ref*{enum:sim_exp_unif} when $n=500$, $\Delta G(\tau_G)=0.02$, and $p=0.6$ (uncured fraction).\label{supp_fig:rej_prop_exp_unif_p_06_RECeUS}}
	\end{center}
\end{figure}

\clearpage
\subsubsection{Setting \ref*{enum:sim_exp_exp_trunc}}
Figures \ref*{supp_fig:rej_prop_exp_exp_trunc_p_02}--\ref*{supp_fig:rej_prop_exp_exp_trunc_p_08} depict the rejection rate {of insufficient follow-up} against $\tau_{G}$ for Setting \ref*{enum:sim_exp_exp_trunc}, each figure with different $p$. In Setting \ref*{enum:sim_exp_exp_trunc}, the uncured subjects have an exponential distribution with rate $\lambda\in\{0.4,1,5\}$, while the censoring time has an exponential distribution with the rate fixed at 0.5. The censoring rate decreases as $\lambda$ increases, which affects the performance of the testing procedure. {In particular, $\lambda=0.4$ corresponds to a very high censoring rate since the density of the censoring variable decreases quicker than that of the event times and all methods have problems in controlling the level of the test.} We  observe that each method performs better, in terms of empirical level and power, when $\lambda$ is larger. 
\begin{figure}[H]
	\begin{center}
		\includegraphics[width=\textwidth]{./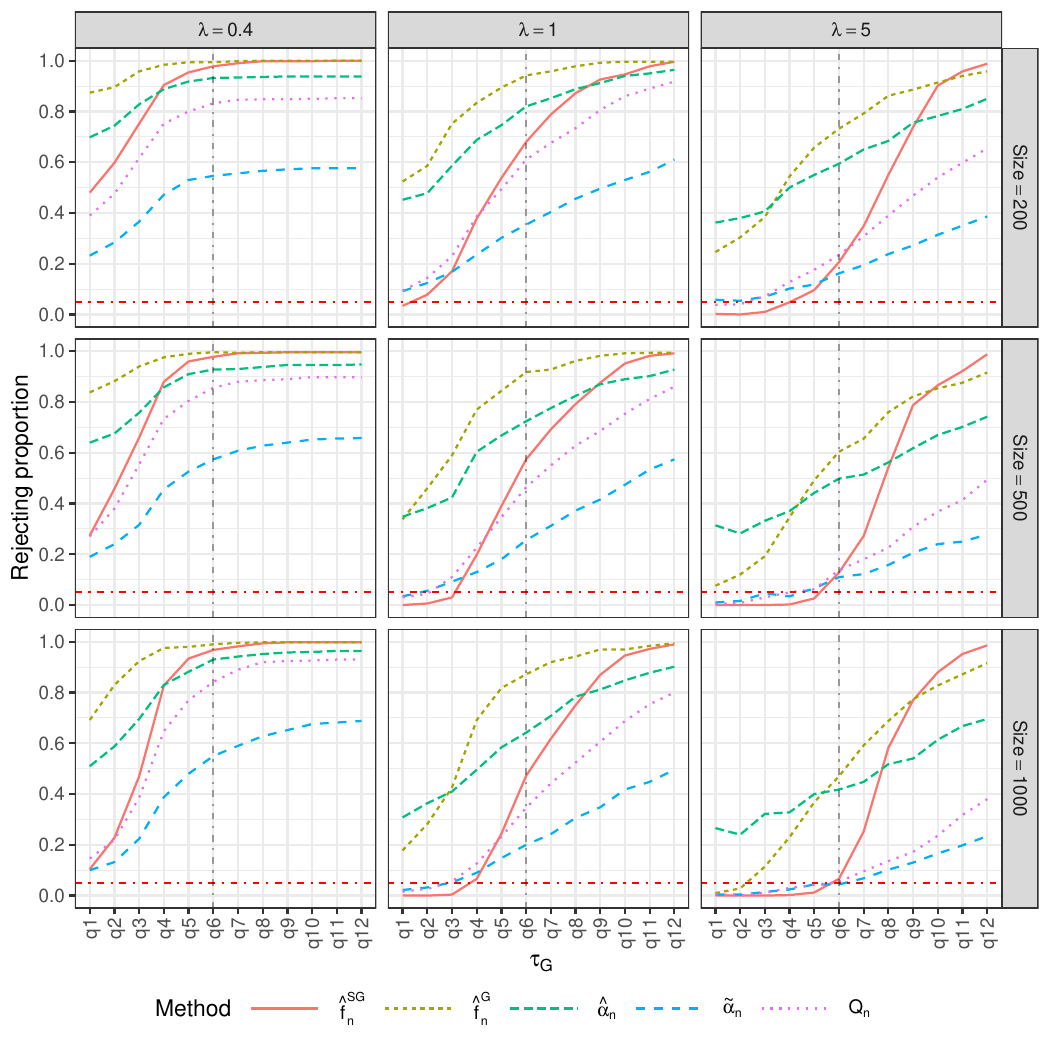}
		\caption{Rejection rate for Setting \ref*{enum:sim_exp_exp_trunc} when $p=0.2$ (uncured fraction).\label{supp_fig:rej_prop_exp_exp_trunc_p_02}}
	\end{center}
\end{figure}

\begin{figure}[H]
	\begin{center}
		\includegraphics[width=\textwidth]{./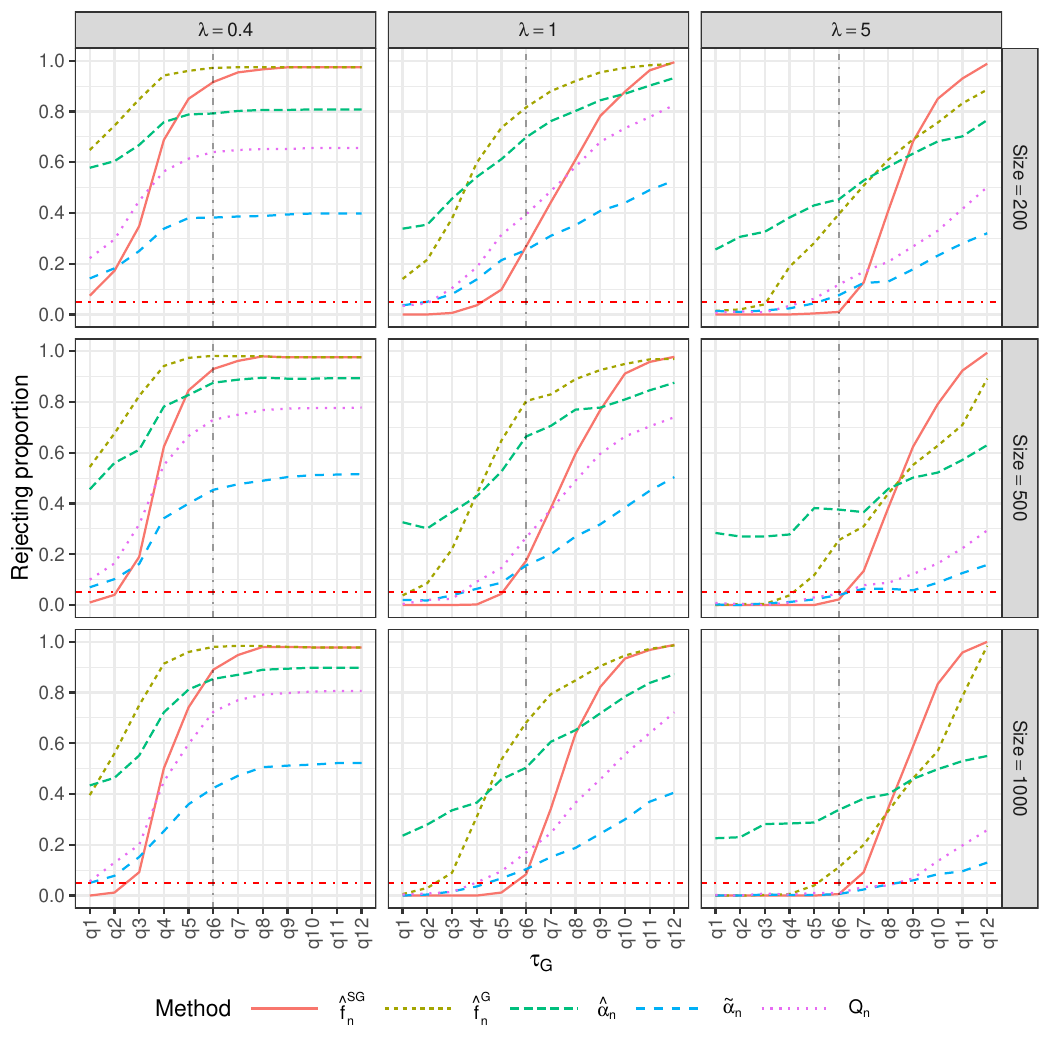}
		\caption{Rejection rate for Setting \ref*{enum:sim_exp_exp_trunc} when $p=0.6$ (uncured fraction).\label{supp_fig:rej_prop_exp_exp_trunc_p_06}}
	\end{center}
\end{figure}

\begin{figure}[H]
	\begin{center}
		\includegraphics[width=\textwidth]{./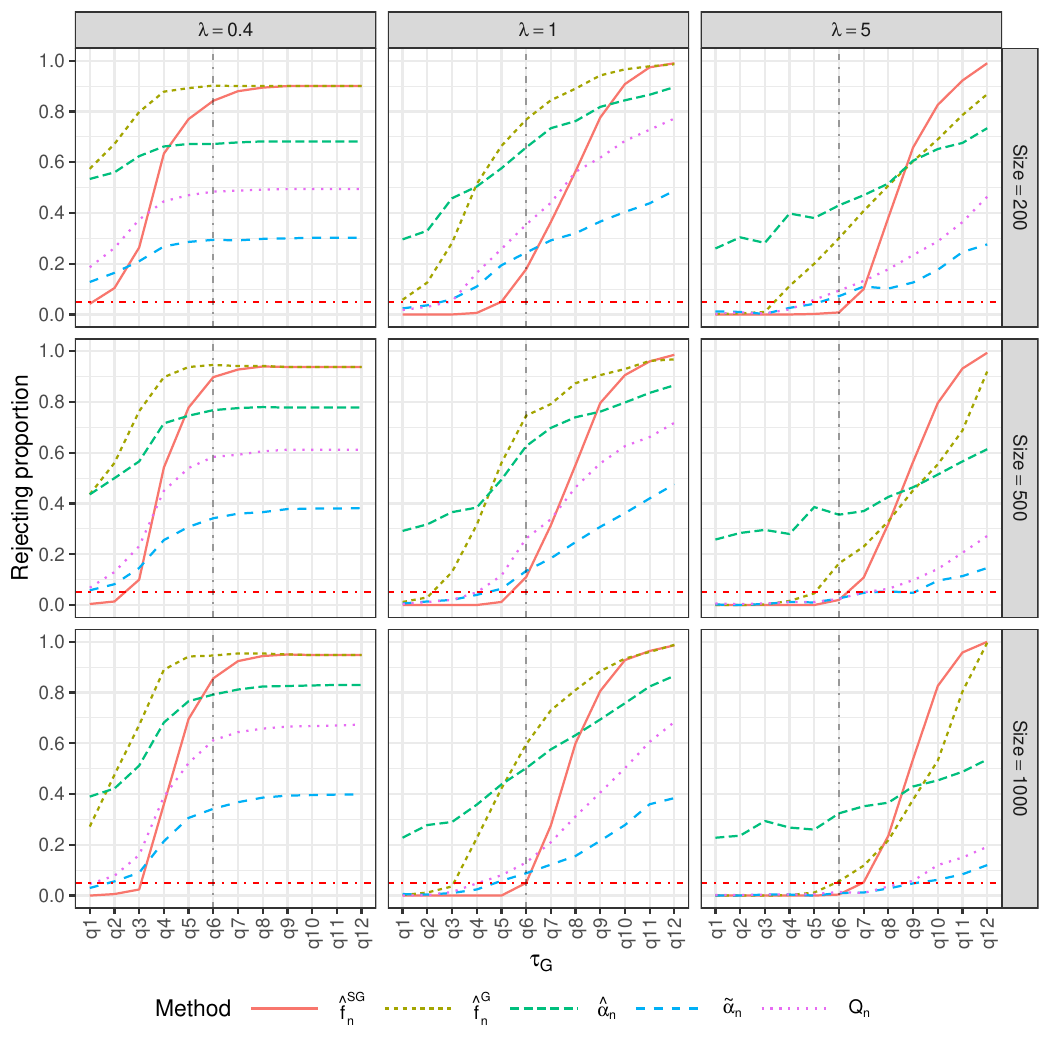}
		\caption{Rejection rate for Setting \ref*{enum:sim_exp_exp_trunc} when $p=0.8$ (uncured fraction).\label{supp_fig:rej_prop_exp_exp_trunc_p_08}}
	\end{center}
\end{figure}

\clearpage
\subsubsection{Setting \ref*{enum:sim_wb_unif}}
Figures \ref*{supp_fig:rej_prop_wb_unif_trunc_p_02}--\ref*{supp_fig:rej_prop_wb_unif_trunc_p_08} depict the rejection rate {of insufficient follow-up} against $\tau_{G}$ for Setting \ref*{enum:sim_wb_unif} when $p$ is 0.2, 0.6, and 0.8.
In this setting, the density $f_u$ decreases faster and we observe a worse behaviour in terms of empirical level, particularly when $p$, $n$ and $\Delta G(\tau_G)$ are smaller.
\begin{figure}[H]
	\begin{center}
		\includegraphics[width=\textwidth]{./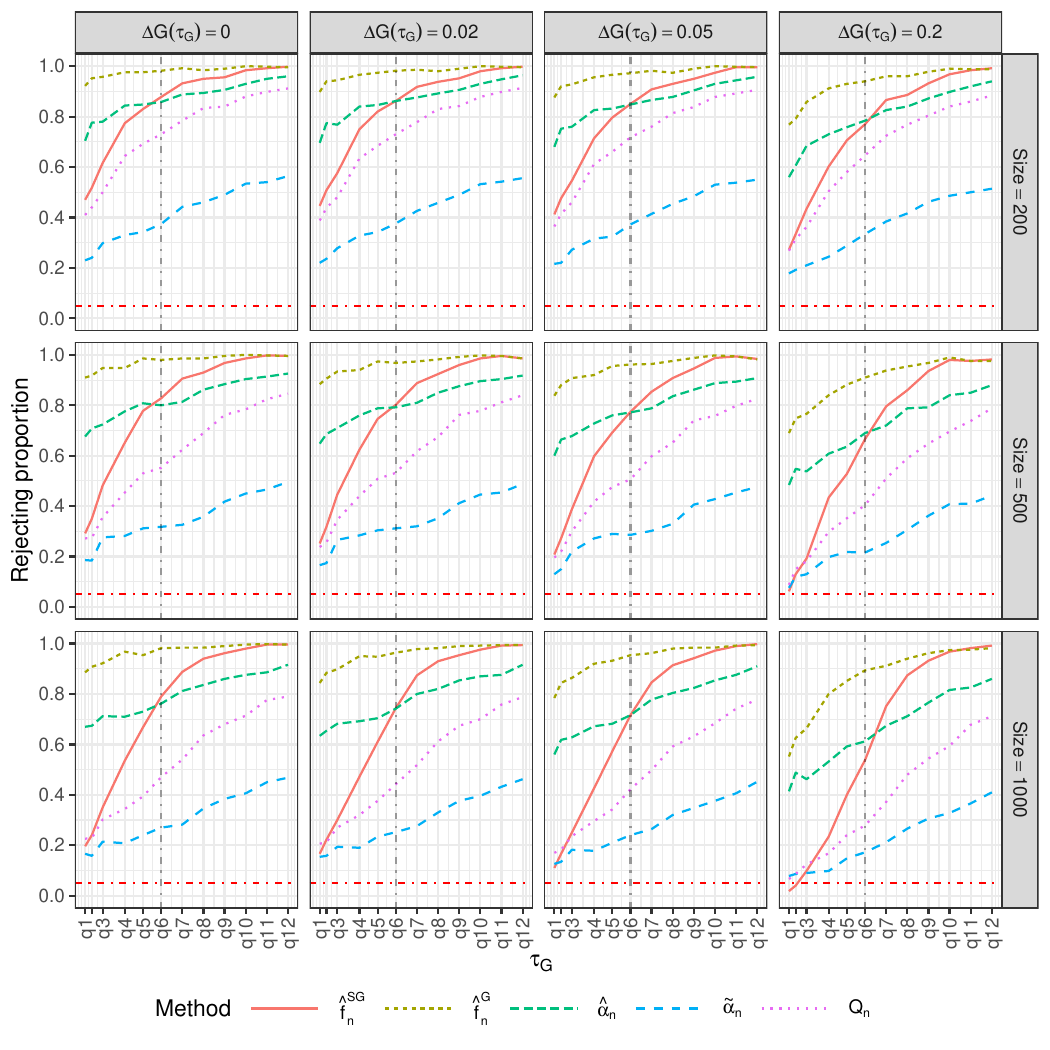}
		\caption{Rejection rate for Setting \ref*{enum:sim_wb_unif} when $p=0.2$ (uncured fraction).\label{supp_fig:rej_prop_wb_unif_trunc_p_02}}
	\end{center}
\end{figure}

\begin{figure}[H]
	\begin{center}
		\includegraphics[width=\textwidth]{./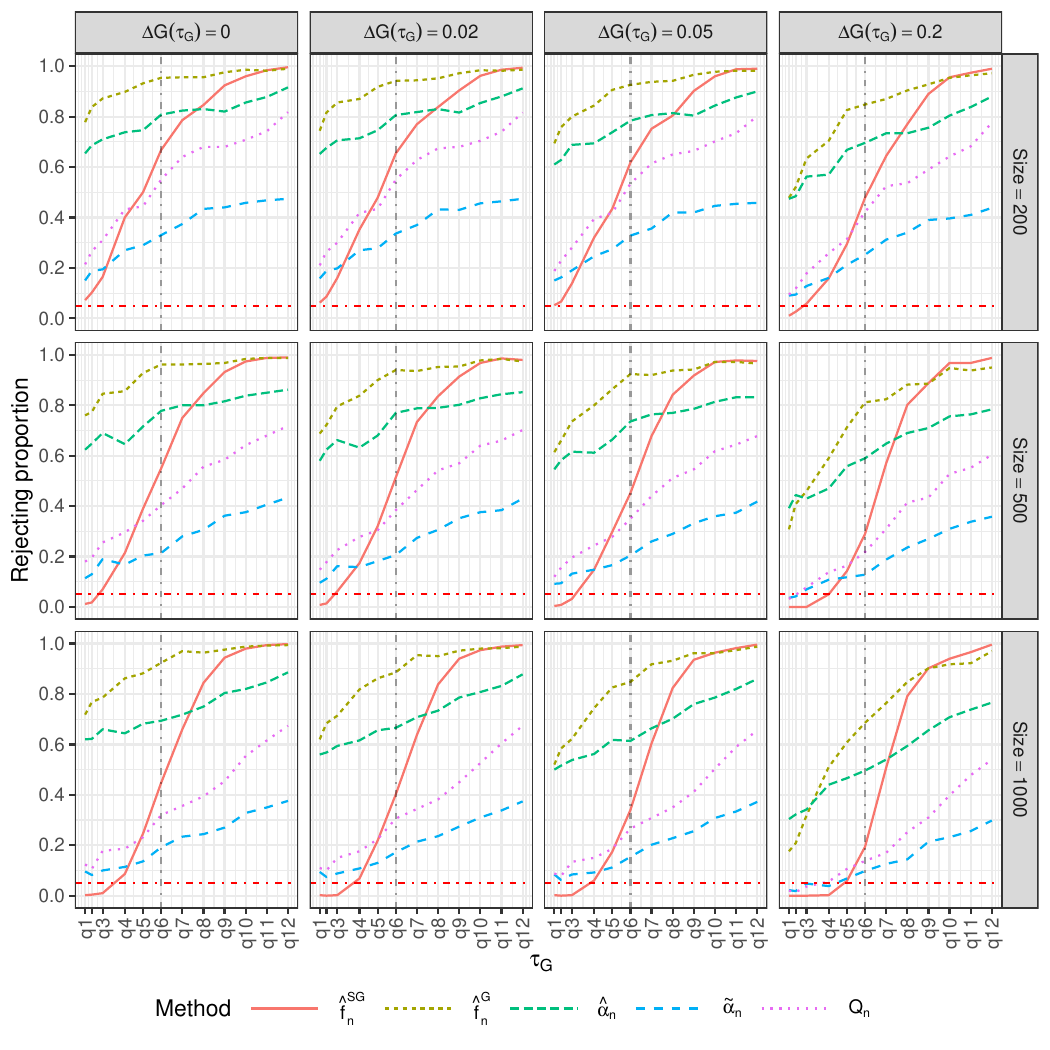}
		\caption{Rejection rate for Setting \ref*{enum:sim_wb_unif} when $p=0.6$ (uncured fraction).\label{supp_fig:rej_prop_wb_unif_trunc_p_06}}
	\end{center}
\end{figure}

\begin{figure}[H]
	\begin{center}
		\includegraphics[width=\textwidth]{./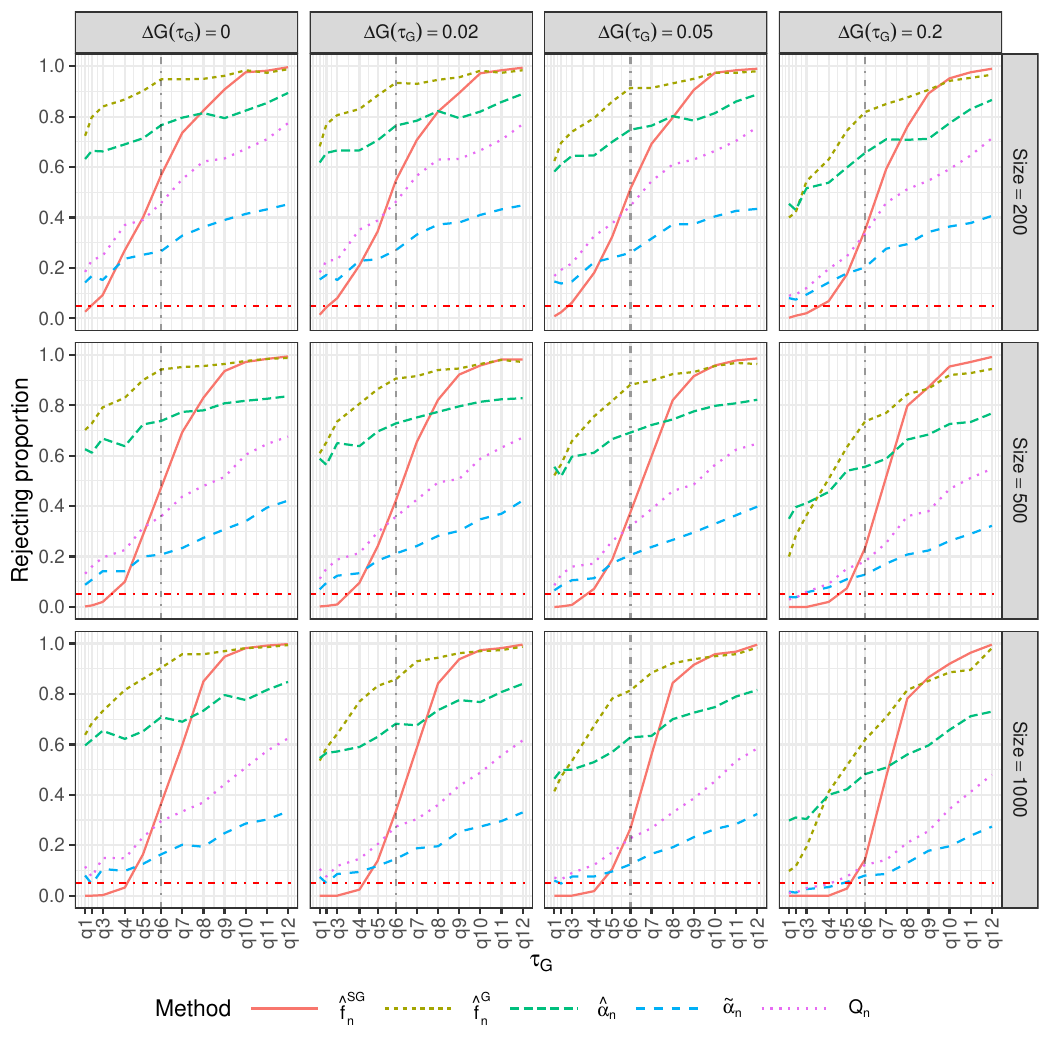}
		\caption{Rejection rate for Setting \ref*{enum:sim_wb_unif} when $p=0.8$ (uncured fraction).\label{supp_fig:rej_prop_wb_unif_trunc_p_08}}
	\end{center}
\end{figure}

\clearpage
\subsubsection*{Changing $\epsilon$ from 0.01 to 0.005}
Figure \ref*{supp_fig:rej_prop_wb_unif_trunc_p_06_eps} shows the rejection rates {of insufficient follow-up} for $\hat{f}_{nh}^{SG}$, when $\epsilon=0.005$ is used for $\tilde{H}_{0}:\tau_{G}\leq q_{1-\epsilon}$, meaning that we consider the follow-up as insufficient when $\tau_{G}$ is less than or equal to the 99.5\% quantile of $F_{u}$. Under such a stricter characterization of insufficient follow-up, the {rejection proportion is lower compared} to the results for $\epsilon=0.01$. Note that the parameter $\tau$ is fixed to the 99.95\% quantile of $F_{u}$ for both cases.
\begin{figure}[H]
	\begin{center}
		\includegraphics[width=\textwidth]{./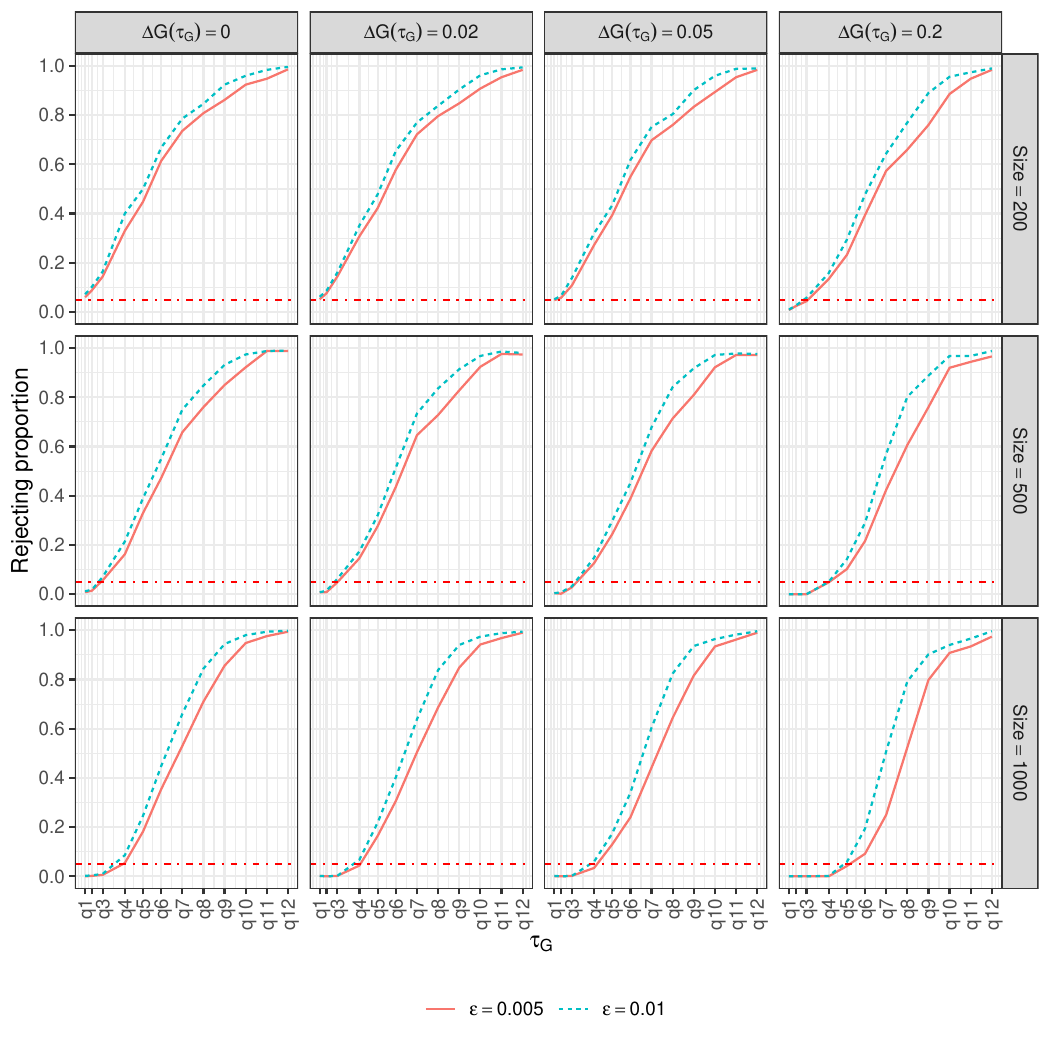}
		\caption{Rejection rate  {of insufficient follow-up for the test based on} $\hat{f}_{nh}^{SG}$ in Setting \ref*{enum:sim_wb_unif} when $p=0.6$ (uncured fraction). $q_6$ corresponds to $q_{0.99}$; and $q_{0.995}$ locates between $q_{7}$ and $q_{8}$.\label{supp_fig:rej_prop_wb_unif_trunc_p_06_eps}}
	\end{center}
\end{figure}

\clearpage
\subsubsection{Setting \ref*{enum:sim_texp_unif}}
\begin{figure}[H]
	\begin{center}
		\includegraphics[width=\textwidth]{./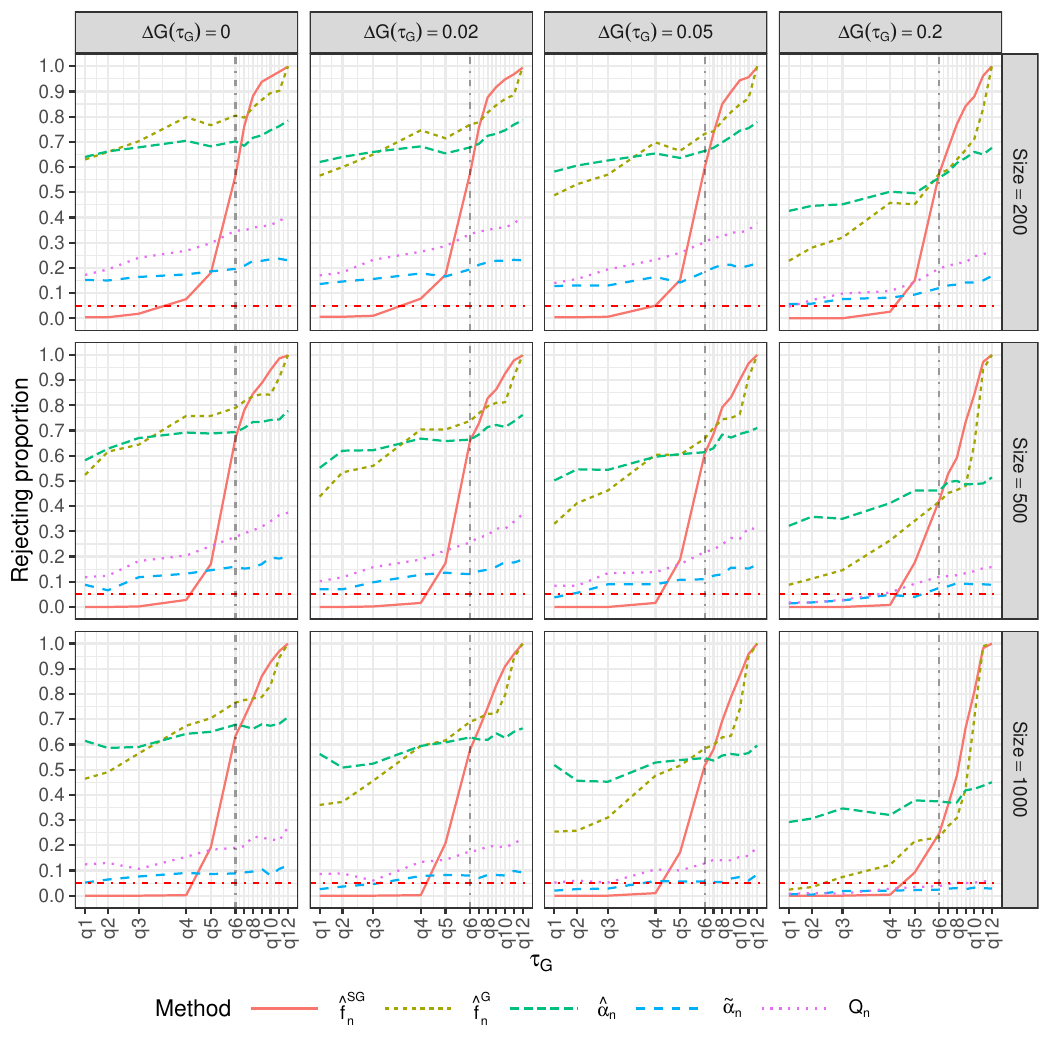}
		\caption{Rejection rate {of insufficient follow-up for different methods in}  Setting \ref*{enum:sim_texp_unif} when $p=0.6$ (uncured fraction).}
	\end{center}
\end{figure}

\clearpage
\subsubsection{Setting \ref*{enum:sim_texp_exp_trunc}}
Figure \ref*{supp_fig:rej_prop_texp_exp_trunc_p_06} shows the rejection rate {of insufficient follow-up} against $\tau_{G}$ for Setting \ref*{enum:sim_texp_exp_trunc}. The uncured subjects have a truncated exponential distribution with parameter $5$ and the censoring time follows an exponential distribution with rate $\lambda_{C}\in\{0.5,3\}$ and truncated at $\tau_{G}$. The censoring becomes heavier when $\lambda_{C}$ increases. $\hat{f}_{nh}^{SG}$ has a steeper rejection curve among the investigated methods, although the rejection rate of $\hat{f}_{nh}^{SG}$ is higher than the significance level when $\tau_{G}$ is between $q_{4}$ and $q_{6}$, and $\lambda_{C}=3$. {The $Q_n$ test shows very little power to detect sufficient follow-up, particularly when the censoring rate is higher.}
\begin{figure}[H]
	\begin{center}
		\includegraphics[width=\textwidth]{./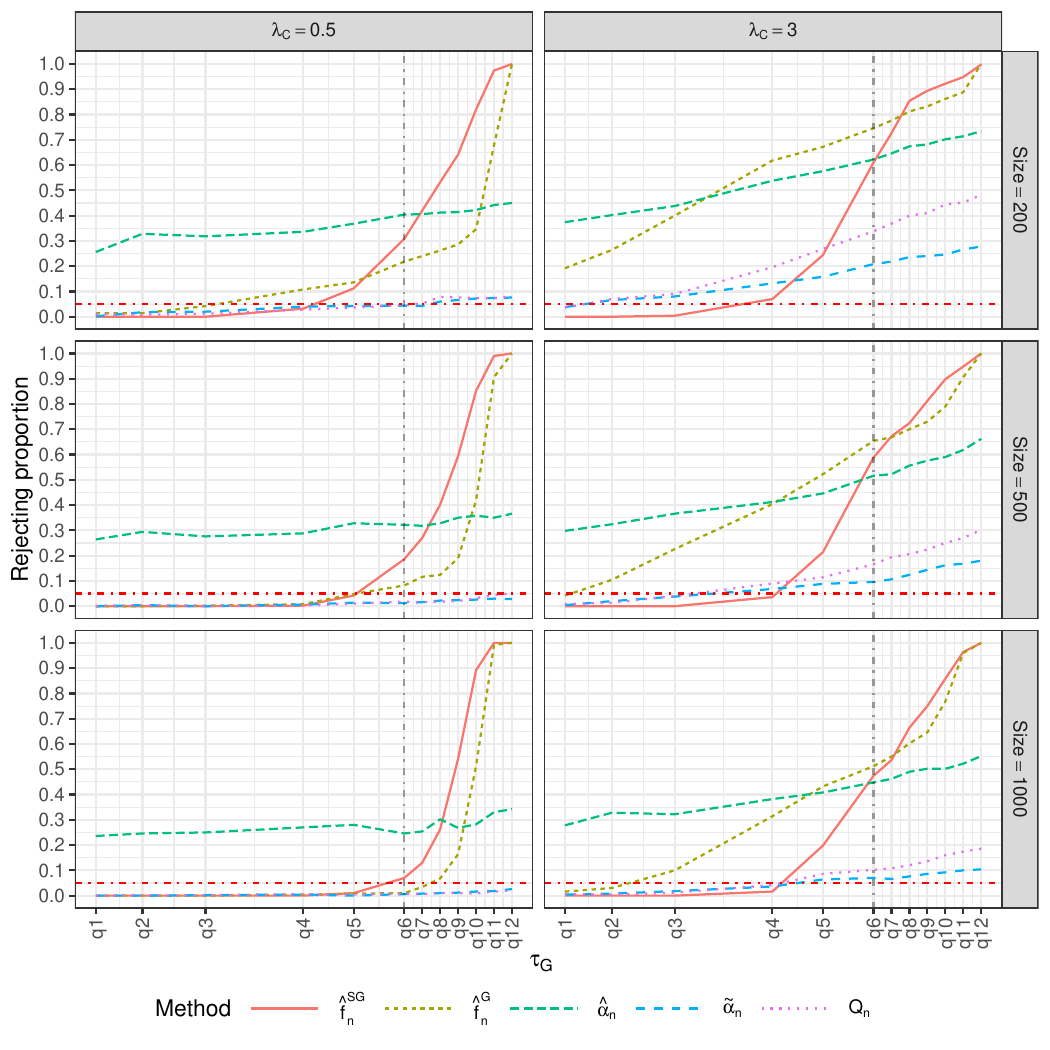}
		\caption{Rejection rate  Setting \ref*{enum:sim_texp_exp_trunc} when $p=0.6$ (uncured fraction).\label{supp_fig:rej_prop_texp_exp_trunc_p_06}}
	\end{center}
\end{figure}

\clearpage
\subsubsection{Setting \ref*{enum:sim_exp_exp}}

In Setting \ref*{enum:sim_exp_exp}, the uncured subjects have an exponential distribution with rate of 1 and the censoring time follows an exponential distribution with rate of 0.5. {The censoring rates, when $p=0.2$, 0.6 and 0.8, are around 86\%, 60\% and 46\%, respectively.} {In particular} $\tau_{F_{u}}=\infty$ and $\tau_{G}=\infty$, and this setting is regarded as sufficient follow-up under $\check{H}_{0}: \tau_{F_{u}} \leq \tau_{G}$ {but it is not sufficient follow-up under our formulation $\tilde{H}_{0}: q_{1-\epsilon}\geq\tau_{G}$}. Table \ref*{supp_tab:rej_prop_exp_exp} shows {the rejection proportion of null hypotheses considered by the different tests.} For the $T_{n}$-test, {for which $\check{H}_0$ is true}, the empirical level is around the significance level. For {computation of our test statistics we use the maximum observed survival time $y_{(n)}$ instead of $\tau_G$ for each generated dataset, meaning that essentially, we consider the follow-up as insufficient when $q_{1-\epsilon}\geq y_{(n)}$. {For the simulation with $p=0.8$ and sample size of 200, there are 5 out of 500 replications that  $q_{1-\epsilon}< y_{(n)}$. This is not observed for the remaining cases.} The parameter $\tau$ is set to $y_{(n)}+\delta$, where $\delta\in\{1,2\}$, which is considered reasonable given the range of values of $y_{(n)}$}. The empirical power of the proposed methods using $\hat{f}_{nh}^{SG}$ and $\hat{f}_{n}^{G}$ are close to 1, among all different $p$'s and sample sizes.
\begin{table}[h]
	{
		\centering
		\caption{Simulation result (in terms of rejection rate of the null hypotheses) for Setting \ref*{enum:sim_exp_exp}. $\tau$ is set to $y_{(n)}+\delta$ for the tests based on  $\hat{f}_{n}^{G}$ and $\hat{f}_{nh}^{SG}$.\label{supp_tab:rej_prop_exp_exp}}
		{
			\renewcommand{\arraystretch}{1.2}
			\small
			\setlength{\tabcolsep}{5pt}
			\begin{tabular}{cl@{\extracolsep{6pt}}c|cc|ccc|c}

%\multicolumn{4}{c|}{$\tilde{H}_{0}: q_{1-\epsilon}\geq\tau_{G}$}                                     & \multicolumn{3}{c|}{$H_{0}: \tau_{F_{u}}\geq\tau_{G}$}                                         & $\check{H}_{0}: \tau_{F_{u}}\leq\tau_{G}$
&                       &          & \multicolumn{2}{c|}{$\tilde{H}_{0}: q_{1-\epsilon}\geq\tau_{G}$}                                            &\multicolumn{3}{c|}{${H}_{0}: \tau_{F_u}\geq\tau_{G}$}   & \shortstack{$\check{H}_{0}: \tau_{F_{u}}\leq\tau_{G}$} \\ %\hline

$p$                  & size                  & $\delta$ & $\hat{f}_{n}^{G}$ & $\hat{f}_{nh}^{SG}$ & $\alpha_{n}$           & $\tilde{\alpha}_{n}$   & $Q_n$                  & $T_n$                                        \\ \hline
\multirow{6}{*}{0.2} & \multirow{2}{*}{200}  & 1        & 1                 & 1                  & \multirow{2}{*}{0.998} & \multirow{2}{*}{0.814} & \multirow{2}{*}{0.996} & \multirow{2}{*}{0}                       \\
&                       & 2        & 1                 & 1                  &                        &                        &                        &                                              \\ \cline{2-9} 
& \multirow{2}{*}{500}  & 1        & 1                 & 1                  & \multirow{2}{*}{0.994} & \multirow{2}{*}{0.854} & \multirow{2}{*}{0.994} & \multirow{2}{*}{0.002}                       \\
&                       & 2        & 1                 & 1                  &                        &                        &                        &                                              \\ \cline{2-9} 
& \multirow{2}{*}{1000} & 1        & 1                 & 1                  & \multirow{2}{*}{1}     & \multirow{2}{*}{0.898} & \multirow{2}{*}{1}     & \multirow{2}{*}{0.002}                       \\
&                       & 2        & 1                 & 1                  &                        &                        &                        &                                              \\ \hline
\multirow{6}{*}{0.6} & \multirow{2}{*}{200}  & 1        & 0.992             & 0.988              & \multirow{2}{*}{0.976} & \multirow{2}{*}{0.734} & \multirow{2}{*}{0.956} & \multirow{2}{*}{0}                       \\
&                       & 2        & 0.992             & 0.984              &                        &                        &                        &                                              \\ \cline{2-9} 
& \multirow{2}{*}{500}  & 1        & 1                 & 1                  & \multirow{2}{*}{0.992} & \multirow{2}{*}{0.79}  & \multirow{2}{*}{0.98}  & \multirow{2}{*}{0}                           \\
&                       & 2        & 1                 & 0.998              &                        &                        &                        &                                              \\ \cline{2-9} 
& \multirow{2}{*}{1000} & 1        & 1                 & 1                  & \multirow{2}{*}{0.994} & \multirow{2}{*}{0.876} & \multirow{2}{*}{0.99}  & \multirow{2}{*}{0}                           \\
&                       & 2        & 1                 & 1                  &                        &                        &                        &                                              \\ \hline
\multirow{6}{*}{0.8} & \multirow{2}{*}{200}  & 1        & 0.974             & 0.942              & \multirow{2}{*}{0.948} & \multirow{2}{*}{0.624} & \multirow{2}{*}{0.876} & \multirow{2}{*}{0.008}                       \\
&                       & 2        & 0.974             & 0.928              &                        &                        &                        &                                              \\ \cline{2-9} 
& \multirow{2}{*}{500}  & 1        & 0.996             & 0.994              & \multirow{2}{*}{0.97}  & \multirow{2}{*}{0.728} & \multirow{2}{*}{0.94}  & \multirow{2}{*}{0.014}                       \\
&                       & 2        & 0.996             & 0.986              &                        &                        &                        &                                              \\ \cline{2-9} 
& \multirow{2}{*}{1000} & 1        & 0.998             & 0.998              & \multirow{2}{*}{0.98}  & \multirow{2}{*}{0.81}  & \multirow{2}{*}{0.966} & \multirow{2}{*}{0.008}                        \\
&                       & 2        & 0.998             & 0.998              &                        &                        &                        &                                              \\ \hline
\end{tabular}
		}\par
	}
\end{table}
\clearpage

\subsubsection{Setting \ref*{enum:sim_lnorm_unif}}
In Setting \ref*{enum:sim_lnorm_unif}, the uncured fraction $p$ is 0.6, the censoring time follows the uniform distribution as in Setting~1 with $\tau_G=6.5$ and $\Delta G(\tau_G-)=0.02$. The uncured survival time follows a mixture of two log-normal distributions with the following distribution function:
\[
F_u(t) = 0.7\Phi\left(\frac{\log(t)-\mu_1}{\sigma_1}\right) + 0.3\Phi\left(\frac{\log(t)-\mu_2}{\sigma_2}\right),
\]
where $\mu_1=0$, $\sigma_1=1$, $\mu_2=\log8$, $\sigma_2=0.3$, and $\Phi$ is the cumulative distribution function of the standard normal distribution. The mean of the second log-normal distribution is $\exp(\mu_2+\sigma_2^2/2)\approx8.3682>\tau_G$. The density $f_u$ is non-monotone and therefore the non-increasing assumption of $f_u$ for the proposed method is not satisfied. This setting is regarded as insufficient follow-up under $\tilde{H}_0:q_{1-\epsilon}\geq\tau_G$ with $\epsilon=0.01$ or $q_{1-\epsilon}\approx14.44$. The censoring rate of this setting is around 67.06\%. The simulation study is carried out using a sample size of 500 and 500 replications. The parameter $\tau$ for the proposed method is set to the 99.95\% quantile of $F_u$, which is about 24.64. Table~\ref*{supp_tab:sim_lnorm_unif} shows the rejection rate of insufficient follow-up for different methods. For the proposed method using $\hat{f}_{nh}^{SG}$ and the $Q_n$ test, the rejection rate is below the nominal level of 5\%. 
\begin{table}[h]
	{
		\centering
		\caption{Rejection rate of insufficient follow-up for different methods in Setting \ref*{enum:sim_lnorm_unif}. \label{supp_tab:sim_lnorm_unif}}
		{
			\renewcommand{\arraystretch}{1.2}
			\small
			\setlength{\tabcolsep}{5pt}
			\begin{tabular}{c|c}
				Method & Rejection rate \\ \hline
				$\hat{f}_{n}^{G}$ &  0.390              \\
				$\hat{f}_{nh}^{SG}$ & 0.008               \\
				$Q_n$ &               0.038 \\ \hline
			\end{tabular}
		}\par
	}
\end{table}

%\section{Real data application}
%\subsection{Breast cancer study II}
%\begin{table}[H]%
%	\begin{center}
%		\caption{$p$-values of testing $\tilde{H}_{0}$ or $H_{0}$ at different follow-up cutoffs for the SEER dataset (rounded to three decimal places). Bold indicates rejecting the null at the 5\% level.}
%		\label{supp_tab:seer_results}
%		\input{tables/applications/seer_results.tex}
%	\end{center}
%	\bigskip
%\end{table}%

%\clearpage
\section{Technical lemmas and proofs}
Recall that $\hat{F}_{n}$ is the Kaplan--Meier estimator of the distribution function $F$. Assume that $F$ is continuous and G is right-continuous. \textcite{MR1988} showed that the process $\hat{F}_{n}(t)-F(t)$ can be approximated by a Gaussian process on the time interval $[0, t_{0}]$ with $t_{0}<\tau_{H}$. Note that the right extreme of the observed event time $\tau_{H}$ can be expressed as $\tau_{H}=\tau_{F}\wedge\tau_{G}$. Note that $\tau_{F}=\infty$ in the presence of cured subjects. We therefore only consider the time up to $\tau_{G}$. With an additional assumption on the censoring distribution that $G$ has a jump at its right extreme $\tau_{G}$, such result can be extended to the time interval $[0, \tau_{G}]$. This can be shown using the construction in Remark 3 of \textcite{MR1988}.
\begin{lemma}
	\label{lemma:strong_approx}
	Suppose the censoring distribution $G$ has a jump at $\tau_{G}$, i.e. $\Delta G(\tau_{G})=1-G(\tau_{G}-)>0$. The we have, for $x>0$,
	\begin{equation*}
		\prob[\sbracket]{
			\sup_{t \leq \tau_{G}}{n}
			\left\vert
			\hat{F}_{n}(t)-F(t) - n^{-1/2}\cbracket{1-F(t)}W\circ L(t)
			\right\vert
			> K_{1}\log n + x
		} < K_{2}e^{-K_{3}x},
	\end{equation*}
	where $K_{1}$, $K_{2}$ and $K_{3}$ are positive constants, $W$ is a Brownian motion, and 
	\[
	L(t) = \int_{0}^{t}
	\frac{\dd F(u)}{\rbracket{1-G(u-)}\rbracket{1-F(u)}^{2}}.
	\]
\end{lemma}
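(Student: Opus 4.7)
The plan is to follow the construction in \cite{MR1988}, Remark~3, extending the standard strong approximation for the Kaplan--Meier process from $[0,t_{0}]$ with $t_{0}<\tau_{H}$ to the full interval $[0,\tau_{G}]$, using the jump assumption $\Delta G(\tau_{G})>0$ to produce the lower bound on the at-risk function that is needed all the way up to the endpoint.

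First I would recall the martingale representation of the Kaplan--Meier estimator. Letting $\Lambda$ denote the cumulative hazard of $F$, and $N_{n}$ and $Y_{n}$ the aggregated counting and at-risk processes, one has
\begin{equation*}
\hat{F}_{n}(t)-F(t)=\bigl(1-F(t)\bigr)\int_{0}^{t}\frac{1-\hat{F}_{n}(u-)}{1-F(u)}\,\frac{\dd{M_{n}(u)}}{Y_{n}(u)},
\end{equation*}
where $M_{n}=N_{n}-\int_{0}^{\cdot}Y_{n}\,\dd{\Lambda}$ is a local square-integrable martingale whose predictable quadratic variation is exactly $n\int_{0}^{t}(1-H(u))\,\dd{\Lambda(u)}$, so that after normalization by $1-H$ it is the clock $L$ that appears in the statement.

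Next I would verify that $\Delta G(\tau_{G})>0$ combined with the continuity of $F$ and $F(\tau_{G})<1$ (which holds in the cure model since $F(\tau_{G})\leq p<1$) yields the two-sided bound
\begin{equation*}
\inf_{0\leq t\leq\tau_{G}}\bigl\{1-H(t-)\bigr\}\;\geq\;\bigl(1-F(\tau_{G})\bigr)\bigl(1-G(\tau_{G}-)\bigr)\;>\;0,
\end{equation*}
which is exactly the quantitative ingredient missing in the original Major--Rejt\H{o} construction when pushing the approximation up to $\tau_{H}$. With this lower bound in hand, I would apply a Skorohod--Dudley type embedding to couple the normalized martingale with the Gaussian process $W\circ L$, and combine it with an exponential (Bernstein-type) inequality controlling the uniform deviation of $Y_{n}/n$ from $1-H$; replacing the random denominators $Y_{n}$ by the deterministic $n(1-H)$ introduces only a multiplicative error with the exponential tail required.

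The main obstacle, which is precisely the point addressed by Remark~3 of \cite{MR1988}, is controlling the approximation error uniformly in a neighborhood of $\tau_{G}$, since the standard proof truncates strictly before $\tau_{H}$ to keep $1-H$ bounded away from zero. Here the jump assumption supplies the requisite positive lower bound all the way to $\tau_{G}$, so the same chaining and embedding arguments apply verbatim on $[0,\tau_{G}]$ and yield the stated $K_{1}\log n+x$ rate with exponential tail $K_{2}e^{-K_{3}x}$, with constants depending on $\Delta G(\tau_{G})$ and $1-F(\tau_{G})$.
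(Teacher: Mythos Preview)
Your proposal is correct and follows exactly the route the paper indicates: the paper does not give an explicit proof but simply states that the extension to $[0,\tau_{G}]$ ``can be shown using the construction in Remark~3 of \cite{MR1988}'', and your outline fleshes out precisely that construction, with the key observation being that $\Delta G(\tau_{G})>0$ together with $F(\tau_{G})\leq p<1$ forces $1-H(t-)\geq (1-F(\tau_{G}))(1-G(\tau_{G}-))>0$ uniformly on $[0,\tau_{G}]$, so the Major--Rejt\H{o} machinery applies without truncation.
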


With Lemma \ref*{lemma:strong_approx}, we apply the arguments in \textcite{LM2017} to show the asymptotic normality of $\hat{f}_{nh}^{SG}(\tau_{G})$ below. We denote the least concave majorant of the Kaplan--Meier estimator by $\hat{F}_{n}^{G}$.
\begin{proof}[Proof of Theorem \ref*{thm:sg_normality}]
	To show the asymptotic normality of $\hat{f}_{nh}^{SG}(\tau_{G})$, we first decompose $\hat{f}_{nh}^{SG}(\tau_{G}) - f(\tau_{G})$ into three parts:
	\begin{align*}
		\begin{split}
			\hat{f}_{nh}^{SG}(\tau_{G}) - f(\tau_{G})
			&=
			\int_{\tau_{G} - h}^{\tau_{G}}\frac{1}{h}k_{B,\tau_{G}}\rbracket{\frac{\tau_{G}-u}{h}}\dd{F(u)}-f(\tau_{G})\\
			&\quad+
			\int_{\tau_{G} - h}^{\tau_{G}}\frac{1}{h}k_{B,\tau_{G}}\rbracket{\frac{\tau_{G}-u}{h}}\dd{(\hat{F}_{n}-F)(u)}\\
			&\quad+
			\int_{\tau_{G} - h}^{\tau_{G}}\frac{1}{h}k_{B,\tau_{G}}\rbracket{\frac{\tau_{G}-u}{h}}\dd{(\hat{F}_{n}^{G}-\hat{F}_{n})(u)}\\
			&=(I)+(II)+(III).
		\end{split}
	\end{align*}
	For $(I)$, since $f_{u}$ is twice continuously differentiable and by the properties of the boundary kernel, we have
	\begin{align*}
		\begin{split}
			n^{2/5}(I)
			&=n^{2/5}
			\int_{\tau_{G} - h}^{\tau_{G}}\frac{1}{h}k_{B,\tau_{G}}\rbracket{\frac{\tau_{G}-u}{h}}\cbracket{f(u)-f(\tau_{G})}\dd{u}\\
			&=n^{2/5}
			\int_{0}^{1}k_{B,\tau_{G}}(v)\cbracket{f(\tau_{G}-hv)-f(\tau_{G})}\dd{v}\\
			&=n^{2/5}
			\int_{0}^{1}k_{B,\tau_{G}}(v)\cbracket{-f^{\prime}(\tau_{G})hv+\frac{1}{2}f^{\prime\prime}(\xi_{n})h^{2}v^{2}}\dd{v}\\
			&\converge
			\frac{1}{2}c^{2}f^{\prime\prime}(\tau_{G})\int_{0}^{1}v^{2}k_{B,\tau_{G}}(v)\dd{v},\quad\text{as}~n\converge\infty,
		\end{split}
	\end{align*}
	where $0 < \tau_{G}-\xi_{n}<hv<h\converge 0$ as $n\converge\infty$.
	
	For $(III)$, with the strong approximation in Lemma \ref*{lemma:strong_approx}, we can apply the argument in the proof of Lemma 4.3 in \textcite{LM2017} to show that
	\begin{equation}
		\label{eq:sup_dist_LCM_KME_ord}
		\sup_{t\in[0,\tau_{G}]}\left\vert
		\hat{F}_{n}^{G}(t)-\hat{F}_{n}(t)
		\right\vert
		=O_{P}\rbracket{\frac{\log n}{n}}^{2/3}.
	\end{equation}
	The idea is to verify the four conditions in \textcite{DL2014} to establish such result. Therefore, $n^{2/5}(III)$ converges to zero in probability.
	Specifically, using the integration by parts and a change of variable, we have
		\[
			\begin{split}
				(III) 
				&= \frac{1}{h}k_{B,\tau_{G}}\rbracket{\frac{\tau_{G}-u}{h}}(\hat{F}_{n}^{G}-\hat{F}_{n})(u)\bigg\vert_{\tau_G-h}^{\tau_G}
				-\frac{1}{h}\int_{0}^{1}
					(\hat{F}_{n}^{G}-\hat{F}_{n})(\tau_G-hv)
					k_{B,\tau_{G}}^\prime(v)
				\dd{v}\\
				&=O_P\left(\frac{1}{h}\left(\frac{\log n}{n}\right)^{2/3}\right),
			\end{split}
		\]
		where the last equality follows from the result in \eqref{eq:sup_dist_LCM_KME_ord} along with the boundedness of $k$ and $k'$, and consequently of $k_{B,\tau_{G}}$ and $k_{B,\tau_{G}}^\prime$. Therefore, $n^{2/5}(III)=o_P(1)$ follows from the assumption that $hn^{1/5}\to c\in(0, \infty)$ as $n\to\infty$.

		For $(II)$, we have
		\[
			n^{2/5}(II)=\frac{1}{\sqrt{hn^{1/5}}}\int_{0}^{1}k_{B,\tau_{G}}(v)\dd{\hat{W}_{n}(v)},
		\]
		where, for $v\in[0,1]$,
		\begin{equation}
			\begin{split}
				\label{eq:W_split}
				\hat{W}_{n}(v)
				&=
				\sqrt{\frac{n}{h}}\left\lbrace
				\hat{F}_n(\tau_G-hv)-F(\tau_G-hv)-n^{-1/2}(1-F(\tau_G-hv))W\circ L(\tau_G-hv)
				\right\rbrace\\
				&\quad
				-\sqrt{\frac{n}{h}}\left\lbrace
				\hat{F}_n(\tau_G)-F(\tau_G)-n^{-1/2}(1-F(\tau_G))W\circ L(\tau_G)
				\right\rbrace\\
				&\quad
				+\sqrt{\frac{1}{h}}(1-F(\tau_G))\left\lbrace
				W\circ L(\tau_G-hv)-W\circ L(\tau_G)
				\right\rbrace\\
				&\quad
				+\sqrt{\frac{1}{h}}(F(\tau_G)-F(\tau_G-hv))\left\lbrace
				W\circ L(\tau_G-hv)
				\right\rbrace.
			\end{split}
		\end{equation}
		Using the strong approximation in Lemma \ref*{lemma:strong_approx},  together with the arguments from the proof of Theorem 4.4 in \textcite{LM2017}, it can be shown that the first two terms on the right-hand side of \eqref{eq:W_split} converges to 0 in probability, uniformly in $v$. The last term also converges to 0 in probability, uniformly in $v$, by applying the argument from the same proof in \textcite{LM2017}, which relies on the maximal inequality for Brownian motion. Therefore, $n^{2/5}(II)$ is dominated by the term
	\[
	\frac{1}{\sqrt{hn^{1/5}}}\int_{0}^{1}k_{B,\tau_{G}}(v)\dd{W_{n}(v)},
	\]
	where $W_{n}(v)=h^{-1/2}\rbracket{1-F(\tau_{G})}\cbracket{W\circ L(\tau_{G}-hv) - W\circ L(\tau_{G})}$, for $v\in[0,1]$. By scaling, time reversal, and symmetry of a  Brownian motion, we have
	\begin{equation}
		\label{eq:W_tilde_dist}
		\frac{1}{\sqrt{h}}\cbracket{W\circ L(\tau_{G}-hv) - W\circ L(\tau_{G})}\overset{d}{=}
		\tilde{W}\rbracket{\frac{L(\tau_{G})-L(\tau_{G}-hv)}{h}},
	\end{equation}
	where $\tilde{W}$ is a one-sided Brownian motion. By the uniform continuity of the one-sided Brownian motion on the compact interval $[0,1]$, we have
	\begin{equation}
		\label{eq:W_tilde_unif_conv}
		\sup_{v\in[0,1]}\left\vert
		\tilde{W}\rbracket{\frac{L(\tau_{G})-L(\tau_{G}-hv)}{h}} - \tilde{W}\rbracket{L^{\prime}(\tau_{G})v}
		\right\vert\converge[\mathbb{P}]0,
	\end{equation}
	where $L^{\prime}(\tau_{G})=\frac{f(\tau_{G})}{\rbracket{1-G(\tau_{G}-)}\rbracket{1-F(\tau_{G})}^{2}}$. 
	Using integration by parts, we have
		\[
			\frac{1}{\sqrt{hn^{1/5}}}\int_{0}^{1}k_{B,\tau_{G}}(v)\dd{W_{n}(v)}
			=
			\frac{1}{\sqrt{hn^{1/5}}}\left\lbrace
			k_{B,\tau_{G}}(v)W_n(v)\bigg\vert_0^1
			-\int_0^1W_n(v)\dd{k_{B,\tau_G}(v)}
			\right\rbrace.
		\]
		Using \eqref{eq:W_tilde_dist} followed by \eqref{eq:W_tilde_unif_conv}, it can be shown that the term on the right-hand side in the above display converges in distribution 
		to
		\[
			\begin{split}
				&\frac{1-F(\tau_G)}{\sqrt{c}}\left\lbrace
				k_{B,\tau_{G}}(v)\tilde{W}\rbracket{L^{\prime}(\tau_{G})v}\bigg\vert_0^1
				-\int_0^1\tilde{W}\rbracket{L^{\prime}(\tau_{G})v}\dd{k_{B,\tau_G}(v)}
				\right\rbrace\\
				&=\rbracket{1-F(\tau_{G})}\sqrt{L^{\prime}(\tau_{G})/c}\int_{0}^{1}k_{B,\tau_{G}}(v)\dd{\tilde{W}(v)},
			\end{split}
		\]
		where we used the rescaling of a Brownian motion and integration by parts. 
		Therefore we have
		\[
			\begin{split}
				\frac{1}{\sqrt{hn^{1/5}}}\int_{0}^{1}k_{B,\tau_{G}}(v)\dd{W_{n}(v)}
				&\converge[d]
				\rbracket{1-F(\tau_{G})}\sqrt{L^{\prime}(\tau_{G})/c}\int_{0}^{1}k_{B,\tau_{G}}(v)\dd{\tilde{W}(v)}\\
				&\sim
				N\rbracket{0, \frac{f(\tau_{G})}{c\lbrack1-G(\tau_{G}-)\rbrack}\int_{0}^{1}k_{B,\tau_{G}}(v)^{2}\dd{v}},
			\end{split}
		\]
	where we use that the integral with respect to the Brownian motion is a Normal distributed variable (see for example \textcite{durrett2018stochastic}).
\end{proof}
\begin{proof}[Proof of Proposition~\ref*{prop:level_grenander}]
	Let $\tau_G\leq q_{1-\epsilon}$. We have 
		\begin{align*}
			\begin{split}
				&\prob[\sbracket]{
					\hat{f}_{n}^{G}(\tau_{G} - cn^{-a}) 
					\leq 
					\frac{\epsilon \hat{F}_{n}(\tau_{G})}{\tau - \tau_{G}} - A_{1}^{-1}n^{-(1-a)/2}Q_{1 - \alpha}^{G}
				}\\
				&=
				\mathbb{P}\left[
				f(\tau_{G}) - \hat{f}_{n}^{G}(\tau_{G} - cn^{-a})
				\geq 
				f(\tau_{G}) -\frac{\epsilon \lbrace\hat{F}_{n}(\tau_{G}) - F(\tau_{G})\rbrace}{\tau - \tau_{G}} - \frac{\epsilon F(\tau_{G})}{\tau-\tau_{G}}\right.\\
				&\qquad\qquad\qquad\qquad\qquad\qquad\qquad
				+ A_{1}^{-1}n^{-(1-a)/2}Q_{1 - \alpha}^{G}
				\Bigg]\\%
				&\leq
				\mathbb{P}\left[
				A_{1}n^{(1-a)/2}
				\lbrace
				f(\tau_{G}) - \hat{f}_{n}^{G}(\tau_{G} - cn^{-a})
				\rbrace
				\right.\\
				&\qquad\qquad\geq 
				Q_{1 - \alpha}^{G}+\frac{A_{1}n^{(1-a)/2}p\eta }{\tau-\tau_{G}}- \frac{\epsilon A_{1}n^{(1-a)/2}\lbrace\hat{F}_{n}(\tau_{G})- F(\tau_{G})\rbrace}{\tau - \tau_{G}}\Bigg].
			\end{split}
		\end{align*}
		Here the last inequality follows from the fact that $\tilde{H}_{0}: q_{1 - \epsilon} \geq \tau_{G}$ entails $f(\tau_{G})\geq \frac{p(\epsilon-\eta)}{\tau-\tau_{G}}$, and $F(\tau_G)\leq p$. By Theorem~\ref*{theo:grenander} and Lemma~\ref*{lemma:strong_approx}, the probability on the right hand side converges to $\alpha$, which concludes the proof. Note that the upper bound on the rejection probability is still asymptotically bounded by $\alpha$ even if $A_1$ is replaced by a consistent estimator $\hat{A}_1$ since one would only have some additional terms involving $\hat{A}_1-A_1$, which would converge to zero.
\end{proof}

\begin{proof}[Proof of Proposition~\ref*{prop:level_SG}]
	Let $\tau_G\leq q_{1-\epsilon}$. We have
	\begin{align*}
		\begin{split}
			&\prob{\hat{f}_{nh}^{SG}(\tau_{G}) \leq \frac{\epsilon \hat{F}_{n}(\tau_{G})}{\tau - \tau_{G}} + n^{-2/5}Q_{\alpha}^{SG}}\\
			%				&=
			%				\sup_{q_{1 - \epsilon} \geq \tau_{G}}\prob{\hat{f}_{nh}^{SG}(\tau_{G}) - f(\tau_{G}) \leq \frac{\epsilon \hat{F}_{n}(\tau_{G})}{\tau - \tau_{G}} - f(\tau_{G}) + n^{-2/5}Q_{\alpha}^{SG}}\\
			&=
			\prob{\hat{f}_{nh}^{SG}(\tau_{G}) - f(\tau_{G}) \leq \frac{\epsilon \lbrace\hat{F}_{n}(\tau_{G}) - F(\tau_{G})\rbrace}{\tau - \tau_{G}} + \frac{\epsilon F(\tau_{G})}{\tau-\tau_{G}} - f(\tau_{G}) + n^{-2/5}Q_{\alpha}^{SG}}\\
			&\leq
			\prob{\hat{f}_{nh}^{SG}(\tau_{G}) - f(\tau_{G}) \leq \frac{\epsilon \lbrace\hat{F}_{n}(\tau_{G}) - F(\tau_{G})\rbrace}{\tau - \tau_{G}} + \frac{p\eta}{\tau-\tau_{G}} + n^{-2/5}Q_{\alpha}^{SG}}\\
			&=
			\prob{n^{2/5}\lbrace\hat{f}_{nh}^{SG}(\tau_{G}) - f(\tau_{G})\rbrace \leq \frac{\epsilon n^{2/5}\lbrace\hat{F}_{n}(\tau_{G}) - F(\tau_{G})\rbrace}{\tau - \tau_{G}} + \frac{n^{2/5}p\eta }{\tau-\tau_{G}} + Q_{\alpha}^{SG}}.
		\end{split}
	\end{align*}
	Here the inequality follows from the fact that $\tilde{H}_{0}: q_{1 - \epsilon} \geq \tau_{G}$ entails $f(\tau_{G})\geq \frac{p(\epsilon-\eta)}{\tau-\tau_{G}}$, and {$F(\tau_G)\leq p$. By Theorem~\ref*{thm:sg_normality} and Lemma~\ref*{lemma:strong_approx}, the probability on the right hand side converges to $\alpha$, which concludes the proof.} Note that the upper bound on the rejection probability is still asymptotically bounded by $\alpha$ even if $Q_{\alpha}^{SG}$ is replaced by a consistent estimator $\hat{Q}_{\alpha}^{SG}$ since one would only have an additional term  $\hat{Q}_{\alpha}^{SG}-Q_{\alpha}^{SG}$, which would converge to zero.
\end{proof}

%\clearpage
\section{Algorithm}
\begin{algorithm}[H]
	\caption{Bootstrapping procedure \label{algo:bootstrap_algo}}
	\begin{algorithmic}[1]
		\Require
		\Statex
		\begin{description}[itemsep=0mm]
			\item[Orignal sample data]
			$\cbracket{\rbracket{y_{i},\delta_{i}}: i=1,\cdots,n}$
			%				\item[Right extreme of $G$] $\tau_{G}$
			%				\item[Kernel function] $k$
			%				\item[Bandwidth] $h > 0$
			\item[Level of the test] $\alpha$
			\item[Number of bootstrap iterations] $B$
		\end{description}
		%			\State
		%				Estimate $f(\tau_G)$ by $\hat{f}_{nh}^{SG}(\tau_{G})$ using $\set{(y_{i}, \delta_{i})}{i=1,\cdots,n}$.
		\State
		Obtain a smooth KME of $F$ by
		\[
		\tilde{F}_{nh_{0}}(t) = 
		\int_{(t - h_{0}) \vee 0}^{(t + h_{0}) \wedge \tau_{G}}
		\frac{1}{h_{0}}k^{(t)}\rbracket{
			\frac{t - v}{h_{0}}
		}\hat{F}_{n}^{G}(v)
		\dd{v},\quad t\in\lbrack 0, \tau_{G}\rbrack,
		\]
		where $\hat{F}_{n}^{G}$ is the least concave majorant (LCM) of the KME $\hat{F}_{n}$.\label{algo:bs_skm}
		\State {Compute the derivative $\tilde{f}_{nh_{0}}(\tau_{G})$ of $\tilde{F}_{nh_{0}}$ at $\tau_G$. }
	%	Approximate $f(\tau_{G})$ by
	%	\[
	%	\tilde{f}_{nh_{0}}(\tau_{G}) = 
	%	\frac{\tilde{F}_{nh_{0}}(\tau_{G}) - \tilde{F}_{nh_{0}}(\tau_{G} - \varepsilon)}{\varepsilon},
	%	\]
	%	for some small $\varepsilon > %0$.\label{algo:bs_skm_density}
		\State
		Estimate $G$ by the reversed KME, denoted by $\hat{G}_{n}$, using $\set{(y_{i}, \delta_{i})}{i=1,\cdots,n}$.
		\For{$b=1,\cdots,B$}
		\State
		Draw $(c_{b,1}^{\ast}, \cdots, c_{b,n}^{\ast})$ from $\hat{G}_{n}$;
		\State
		Draw $(t_{b,1}^{\ast}, \cdots, t_{b,n}^{\ast})$ from $\tilde{F}_{nh_{0}}$;
		\LineComment{Set $t_{b,i}^{\ast} = \infty$ if $u_{b,i} > \sup_{t}\tilde{F}_{nh_{0}}(t)$, where $u_{b,i}$ is the generated standard uniform random variate.}
		\State
		Construct $\lbrace{({y_{b,i}^{\ast},\delta_{b,i}^{\ast}}), i=1,\cdots,n}\rbrace$, where $y_{b,i}^{\ast}=t_{b,i}^{\ast} \wedge c_{b,i}^{\ast}$ and $\delta_{b,i}^{\ast} = \indicator{t_{b,i}^{\ast} \leq c_{b,i}^{\ast}}$.
		\State
		Obtain the bootstrap estimate $\hat{f}_{nh,b}^{SG^\ast}(\tau_{G})$ using $\lbrace{({y_{b,i}^{\ast},\delta_{b,i}^{\ast}}): i=1,\cdots,n}\rbrace$.
		\EndFor
		\State
		Approximate the critical value of the test by the ${\alpha}$-quantile of $\lbrace\hat{f}_{nh,b}^{SG^\ast}(\tau_{G}) - \tilde{f}_{nh_{0}}(\tau_{G}): b=1,\cdots,B\rbrace$
	\end{algorithmic}
\end{algorithm}

\clearpage
\section{Assumptions of $Q_n$ and $T_n$ test}
The assumptions needed for the $Q_n$ test to obtain the asymptotic distribution of $n\hat{q}_n$ when $\tau_G<\tau_{F_{u}}$ are:
\begin{enumerate}[label=(\alph*)]
	\item	$1-G(\tau_G-x)=a_G(1+o(1))x^\gamma$, as $x\downarrow0$, where $a_G$ and $\gamma$ are positive constants;
	\item	$F_u$ has a density $f_u$ in a neighborhood of $\tau_G$, which are positive and continuous at $\tau_G$.
\end{enumerate}
The assumptions needed for the $T_n$ test are:
\begin{enumerate}[label=(\alph*)]
	\item	$F_u$ is continuous at $\tau_{F_{u}}$ if $\tau_{F_{u}}<\infty$;
	\item	$\int_{0}^{\tau_{F_u}}\frac{\dd{F_u(s)}}{1-G(s-)}<\infty$;
	\item	$\lim_{n\converge\infty}n(1-G(\tau_G-x/\sqrt{n}))=\infty$ for all $x>0$;
	\item	$F_u$ has a finite second derivative $F_u''$ in $(\tau_{F_u}-\epsilon, \tau_{F_u})$ for some $\epsilon>0$ and satisfies
	\[
		\lim_{t\converge\tau_{F_{u}}}\frac{F_u''(t)(1-F_u(t))}{(F_u'(t))^2}=-1.
	\] 
\end{enumerate}

%\clearpage
%\bibliographystyle{apalike}
%\bibliography{references.bib}
	\clearpage
	\begingroup
	\setlength\bibitemsep{\parsep}
	\setlength{\bibhang}{\leftmargini}
	\printbibliography[title={References for Supplementary Material}]
	\endgroup
	\end{refsection}
\end{document}